\newtheorem{remark}{Remark}
\newtheorem{theorem}{Theorem}
\newtheorem{informal_theorem}{Informal Theorem}
\newtheorem{lemma}{Lemma}
\newcommand{\E}{\mathbb{E}}
\definecolor{mydarkgreen}{rgb}{0.0, 0.5, 0.0}
\begin{document}

\title{Direct Analysis of Zero-Noise Extrapolation: Polynomial Methods, Error Bounds, and Simultaneous Physical–Algorithmic Error Mitigation}

\author{Pegah Mohammadipour}
\email{pegahmp@psu.edu}
\affiliation{Department of Mathematics, The Pennsylvania State University,  University Park, Pennsylvania 16802, USA}
\orcid{0009-0008-0227-1310}

\author{Xiantao Li}
\email{xiantao.li@psu.edu}
\affiliation{Department of Mathematics, The Pennsylvania State University,  University Park, Pennsylvania 16802, USA}
\orcid{0000-0002-9760-7292}

\begin{abstract}
Zero-noise extrapolation (ZNE) is a widely used quantum error mitigation technique that artificially amplifies circuit noise and then extrapolates the results to the noise-free circuit. A common ZNE approach is Richardson extrapolation, which relies on polynomial interpolation. Despite its simplicity, efficient implementations of Richardson extrapolation face several challenges, including approximation errors from the non-polynomial behavior of noise channels, overfitting due to polynomial interpolation, and exponentially amplified measurement noise. This paper provides a comprehensive analysis of these challenges, presenting bias and variance bounds that quantify approximation errors. Additionally, for any precision $\varepsilon$, our results offer an estimate of the necessary sample complexity. We further extend the analysis to polynomial least squares-based extrapolation, which mitigates measurement noise and avoids overfitting. Finally, we propose a strategy for simultaneously mitigating circuit and algorithmic errors in the Trotter-Suzuki algorithm by jointly scaling the time step size and the noise level. This strategy provides a practical tool to enhance the reliability of near-term quantum computations. We support our theoretical findings with numerical experiments.
\end{abstract}

\maketitle

\renewcommand{\contentsname}{\centering Contents}

\section{Introduction} 

Quantum computing promises substantial speedups for a wide range of scientific computing challenges \cite{preskill2018nisq}. However, the noise inherent in near-term quantum devices presents a significant challenge, requiring innovative strategies to achieve reliable results. Quantum error mitigation (QEM) \cite{li2017efficient, KTemme_SBravyi_JGambetta} emerges as a promising solution, substantially reducing the impact of noise without the large qubit overhead demanded by full quantum error correction \cite{surface_codes}. By bridging the gap between theoretical speedups and practical performance, robust QEM schemes play a critical role in unlocking the true potential of quantum algorithms.

To model noise in quantum circuits, one evaluates the effects of the environment on the density operator \(\rho\). In the Markovian regime, the system’s dynamics under noisy evolution are described by the Lindblad equation \cite{lindblad1976generators,gorini1976completely}:  
\begin{equation}\label{eq:lindblad_eq}
    \frac{d}{dt} \rho 
    = -i[H,\rho]+\lambda \sum_{j} \left(V_j\rho V^{\dag}_j-\frac{1}{2}\{ V_j^{\dag}V_j,\rho\}\right),
\end{equation}
This continuous-time description is motivated by the theory of open quantum systems \cite{breuer2002theory}.  
For many circuit noise models, such as depolarizing channels, the jump operators \( V_j \) can be explicitly identified \cite{EvanB_MiladM}. The parameter \(\lambda\) represents the noise strength or coupling constant, determining the rate at which the system interacts with its environment.

One simple strategy to mitigate noise is zero-noise extrapolation (ZNE) \cite{li2017efficient,KTemme_SBravyi_JGambetta,endo2018practical}. ZNE has been successfully incorporated into various quantum algorithms, including Hamiltonian simulation \cite{EndoMitigAlgoError}, variational quantum algorithms \cite{kurita2022synergetic}, and quantum linear system solvers \cite{vazquez2022enhancing,carrera2022extrapolation}.

The idea of ZNE can be illustrated by expressing the density matrix as \(\rho(t,\lambda)\), where \(t\) denotes the evolution time and \(\lambda\) the noise strength. Although \(\rho\) is \emph{unknown}, one can amplify the effective noise level associated with gate operations—e.g., by gate folding— so that the circuit prepares \(\rho(t, x\lambda_0)\) for some scaling factor \(1 \le x \le B\), where \(\lambda_0\) is the initial noise level of the hardware \cite{KTemme_SBravyi_JGambetta}. Denoting the expectation value of an observable \(A\) measured on the resulting state \(\rho(t, x\lambda_0)\) by
\begin{equation}\label{eq: E(x)}
E(x) = \operatorname{Tr}\!\left(A\rho(t, x\lambda_0)\right), \qquad x \in [1,B],
\end{equation}
the goal is to estimate the ideal, noise-free value \(E(0)\) from the measured data \(E(x_0), E(x_1), \dots, E(x_n)\), where each \(x_j\) corresponds to a different noise amplification.

These values are then extrapolated to the zero-noise limit using polynomial interpolation or regression.
Such an extrapolation procedure, while straightforward, comes with several caveats. First, except for specific noise channels, the expectation value is generally not a polynomial function, leading to inevitable approximation errors. Second, expectation values are obtained through measurements, which are inherently subject to shot noise. In such cases, direct polynomial interpolation is prone to overfitting, a well-known issue in statistical learning \cite{hastie2009elements}. Third, the extrapolation coefficients \(\{\gamma_k\}_{k=0}^n\) tend to grow rapidly with \(n\) \cite{gautschi1987lower,li2006lower}, exponentially amplifying statistical errors as the number of interpolation points increases.

Note that overfitting and noise amplification are closely related in practice, 
yet distinct: overfitting reflects a bias due to unstable polynomial interpolation when the polynomial degree is large, 
while noise amplification corresponds to the variance growth induced by large extrapolation coefficients. 
We analyze these effects separately and then use a concentration bound to relate them.

Noise scaling for ZNE can be achieved through unitary folding, and parameterized noise scaling \cite{digital_zero_noise_extrap}. Moreover, in Richardson extrapolation, the number of measurements, the choice of interpolation nodes, and their distribution all influence the approximation error \cite{MichaelKrebsbachOptimizingRichardson}.

While many empirical studies have evaluated the performance of ZNE \cite{digital_zero_noise_extrap,MichaelKrebsbachOptimizingRichardson}, a comprehensive theoretical framework, to the best of our knowledge, is still lacking. 

\subsection{Main Results and Assumptions}

With the assumption of Markovian noise \eqref{eq:lindblad_eq}, this paper aims to address several important questions toward a thorough understanding of ZNE. 

\noindent{\bf Problem 1.} How many interpolation points (circuits) and samples per circuit are needed to achieve an approximation of precision \(\varepsilon\)? 

\medskip

We address this question through direct analysis of ZNE, with the results summarized in the following theorem:

\begin{informal_theorem} \label{thm:info_1}[Informal version of Theorem \ref{thm:Hoeffding_Richardson}]
  Suppose the expectation value of the observable depends smoothly on the noise parameter (i.e., it is Gevrey-class regular), corresponding to circuits of modest depth or sufficiently low noise levels.
  Then the error of Zero-Noise extrapolation is less than \(\varepsilon\) with high probability \(1 - \delta\), provided the number of interpolation points is at least \(\Omega \!\left( \log \tfrac{1}{\varepsilon} \right)\), and the number of samples satisfies:
  \begin{itemize}
      \item \(N_S=\Omega\left( \varepsilon^{-(2+ 2e(\kappa+1)}\log \frac{2}{\delta}\right)\) for equidistant interpolation points,
      \item \(N_S=\Omega\left(  \varepsilon^{-(2 + 4 \log \kappa)}\log \frac{2}{\delta}\right)\) for Chebyshev interpolation points, 
  \end{itemize}
    where \(\kappa = \frac{\sqrt{B} + 1}{\sqrt{B} - 1}\), with $B$ given in \cref{eq: E(x)}.
\end{informal_theorem}
We further validate and explain the implications of the smoothness condition in the context of quantum error mitigation following Theorem~\ref{thm:Hoeffding_Richardson}.

\bigskip 

Another issue, from a statistical learning perspective, is that polynomial interpolation tends to overfit \cite{hastie2009elements}, even when a moderate number of data points is used. To address this, we adopt a least-squares framework, ensuring that the polynomial degree remains significantly lower than the number of data points. This leads to a natural question:

\medskip

\noindent{\bf Problem 2.} How does the choice of the approximation method, such as polynomial interpolation or least squares, affect the bias and variance?

\medskip

To address this, we analyze the approximation and statistical error using properties of Chebyshev polynomials \cite{Trefethen_approx}. More specifically, for a polynomial \( p_m \) of degree \( m < n \), obtained through the least-squares method, we derive the following bound related to the variance:

\begin{informal_theorem} (Informal version of Theorem \ref{thm:LSA_gamma_sum})  
The least-squares estimator using Chebyshev nodes satisfies the bound  
\[
        \sum_{k=0}^{n} |\gamma_k| = O\left(\kappa^{2m}\right).
\]
To ensure that the bias remains within \(\varepsilon\), Theorem \ref{thm:bias_bound_lsa} establishes that the dependence of \(m\) on \(\varepsilon\) is logarithmic.  
\end{informal_theorem}

Thus, by restricting the degree of least-squares polynomials to lower degrees, we prevent the rapid growth of statistical error. Consequently, we can determine the overall sampling complexity in a manner similar to Informal Theorem 1.

\medskip
Apart from errors due to circuit noise, the results of many quantum algorithms are subject to algorithmic errors that arise from approximations inherent to them. One important example is Trotter decomposition for quantum simulation, where finite step sizes introduce discretization errors that accumulate over long-time evolutions \cite{Childs_Trotter_Theory_2021}. Moreover, algorithmic and physical errors often interact, as the deeper circuits required to reduce algorithmic errors increase susceptibility to physical noise \cite{EndoMitigAlgoError}.  This observation motivates the following question:

\medskip
\noindent{\bf Problem 3.} Can physical and algorithmic errors be jointly mitigated?

\medskip
The algorithmic error resulting from the Trotter-Suzuki approximation can be mitigated using ZNE \cite{ImprovedAccuracyforTrotterSim,watson2024exponentially}. In fact, our analysis parallels that of \cite{watson2024exponentially}, with the key distinction that in our case the physical error is inherently nonnegative and the scaling factor cannot drop below 1. While \cite{EndoMitigAlgoError} implemented both mitigation strategies, they treated them as separate procedures.
 In this work, we extend ZNE to concurrently address both algorithmic and physical errors in Hamiltonian simulation based on the second-order Trotter-Suzuki approximation. concurrently

\begin{informal_theorem}[Informal version of Theorem \ref{thm:joint_Hoeffding}]
  Suppose the expectation value of observable \(A\) affected by both circuit and Trotter errors, admits a Gevrey-class smooth dependence (corresponding to circuits of modest depth or sufficiently low noise), and the extrapolation is performed using Chebyshev interpolation nodes in the interval $[1,B]$. Then the error of Richardson extrapolation, accounting for both errors, is less than \(\varepsilon\) with high probability \(1 - \delta\) if the number of interpolation points is at least \( \Omega\left(\log \frac1\varepsilon\right)\), and the number of samples is at least  \(\Omega\left(\alpha^2  \varepsilon^{-(2+4 \log \kappa)}\log\frac{2}{\delta}\right)\) (\(\alpha\) is a norm bound on the observable, \(\|A\|_{\infty} \leq \alpha\), and \(\kappa\) is as in Informal Theorem \ref{thm:info_1}).
 
\end{informal_theorem}

Previously, it has been shown that the number of samples required to mitigate depolarizing or non-unital noise, scales exponentially with the number of qubits and circuit depth \cite{Universal_sampling_lower_bounds_QEM,takagi2022fundamental,Quek_tighter_bounds_2024}. These bounds hold regardless of the specific QEM strategy. Our analysis, however, provides direct estimates of the sampling complexity for particular QEM methods. 

We emphasize that our goal is to establish bounds on both approximation and statistical errors rather than to overcome the exponential scaling with circuit depth. The fundamental limitations of QEM, as highlighted in previous works, remain in effect. Nevertheless, by directly linking the properties of noise dynamics and algorithmic error to sampling complexity, we hope our analysis will help identify noise models and algorithms that are more amenable to QEM procedures.

\subsection{Notations and Terminologies.}
In our analysis, we will use the following standard notations. 
\begin{itemize}
    \item The Hilbert space for an \(n\)-qubit system is 
    \[
    \mathcal{H} = \left(\mathbb{C}^2\right)^{\otimes n} 
    \]
    \item The observable \(A\) is bounded, \(\|A\|_{\infty} \leq \alpha\) (the Schatten infinity norm).
    \item The final evolution time on the circuit is \(T\), resulting in the state \(\rho(T, \lambda)\).
    \item The Lindblad operator \(\mathcal{L}\) is bounded, \(\|\mathcal{L}\|_1 \leq l\) (the induced Schatten 1-norm).
\end{itemize}

\paragraph{Organization.} The rest of the paper is organized as follows. In Section \ref{sec:Richardson_extrapplation}, we establish a bias bound that quantifies the approximation error in Richardson extrapolation, followed by a variance bound that provides an estimate of the sampling complexity. To mitigate overfitting, Section \ref{sec:least squares} explores the least squares method, particularly for a large number of nodes. In Section \ref{sec:Trotter_error}, we address the simultaneous mitigation of both circuit and algorithmic errors, using the Trotter-Suzuki decomposition as a practical example. Section \ref{sec:Numerics} presents numerical experiments that validate our theoretical findings. Finally, we summarize our results and discuss their implications in \cref{sec:summary_discussions}. Additional details, including derivative bounds and the proofs of lemmas and theorems, are provided in \cref{appendix}.

\section{Richardson's Extrapolation} \label{sec:Richardson_extrapplation}

Overall, the goal of quantum error mitigation is to estimate the expectation value of an observable \(A\) with respect to the noisy evolved state \(\rho(T,\lambda)\) at time \(T\), where the circuit noise is parameterized by \(\lambda = x\lambda_0\). Here \(\lambda_0\) represents the physical noise level of the unmodified circuit, while \(x \ge 1\) is an amplification factor—typically implemented through gate folding or related techniques \cite{digital_zero_noise_extrap,larose2022mitiq}. The ideal outcome corresponds to the zero-noise limit,
\[
\lim_{x\to 0^+} E(x).
\]
To formalize our analysis, we recast this into a function-approximation problem by denoting
\begin{equation}\label{eq:f-definition}
    f(x) := E(x),
\end{equation}
and seek to estimate \(f(0)\) from evaluations of \(f(x)\) at discrete amplified noise levels.

Richardson’s extrapolation \cite{Richardson1911,Richardson1927,Extrapolation_book} is a classical method in numerical analysis for accelerating the convergence of an approximation \( f^* = \lim_{x\to 0} f(x) \) using finitely many samples of \(f(x)\) at points \(\{x_j\}\). In the context of QEM \cite{KTemme_SBravyi_JGambetta}, this process is illustrated in Fig.~\ref{fig:RE}. Given the values of \(f(x)\) at \(1 = x_0 < x_1 < \dots < x_n \in [1,B]\), the method constructs a polynomial \(p_n(x)\) of degree at most \(n\) that interpolates these points and extrapolates its value to \(x=0\) as an estimate of \(f^* = f(0)\).\footnote{Interpolation and extrapolation both approximate unknown values from known data: interpolation estimates within the data range, whereas extrapolation extends beyond it. Since both use the same approximating polynomial, we use the terms interchangeably.}

We restrict the scaling parameter to \(x \in [1,B]\), where \(B\) denotes the maximal feasible amplification limited by circuit depth. For clarity, we first present the general theoretical framework and later discuss its implications for quantum error mitigation.

\begin{figure}[H]
    \centering
    \includegraphics[scale=0.4]{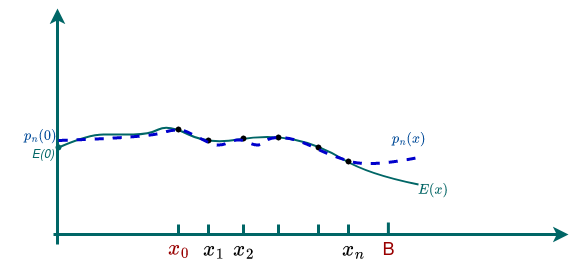}
    \caption{A simple illustration of Richardson’s extrapolation for error mitigation.
        A polynomial \( p_n(x) \) is constructed by interpolating \( f(x) \) at the points \( x_0, x_1, \dots, x_n \) and then evaluated at \( x = 0 \), corresponding to the zero-noise limit.
}
    \label{fig:RE}
\end{figure}

In its Lagrange form, the polynomial interpolant can be written explicitly as
\begin{equation} \label{def:extrap-pn}
p_n(x) =  \sum_{j=0}^{n} {f}(x_j) L_j(x),
\end{equation}
where \( L_j(x) \) are the standard Lagrange basis polynomials. The extrapolated approximation at zero is then given by
\[
f^* \approx p_n(0),
\]
which, from \eqref{def:extrap-pn}, simplifies to
\begin{equation} \label{extrap-pn0}
p_n(0) =  \sum_{j=0}^{n} {f}(x_j) \gamma_j, 
\end{equation}
where
\begin{equation} \label{gamma_j}
    \gamma_j  := L_j(0) = \prod_{\substack{k=0 \\ k \neq j}}^n \frac{x_k}{x_k - x_j}.
\end{equation}

The above definition is equivalent to the condition \(\sum_{k=0}^{n} \gamma_k x_k^{r} = \delta_{r,0}, \quad r = 0,1,\dots,n.\)
In practice, the function values (e.g., in QEM and many other applications) must be estimated from sampled data. To emphasize this, we rewrite \eqref{extrap-pn0} as
\begin{equation} \label{extrap-pn0-samp}
\hat{p}_n(0) :=  \sum_{j=0}^{n} \hat{f}(x_j) \gamma_j, 
\end{equation}
where \( \hat{f}(x_j) \) is an unbiased statistical estimator of \( {f}(x_j) \), and the corresponding statistical error is given by \( \delta_j = \hat{f}(x_j) - f(x_j) \). That is,
\[
\hat{f}(x_j) = \frac{1}{N_S} \sum_{i=1}^{N_S} f^{(i)}(x_j),
\]
where \( f^{(i)}(x_j) \) represents the \( i \)-th sample (e.g., from the \( i \)-th measurement in QEM), and \( N_S \) is the number of samples (e.g., shots in circuit execution).

We now encounter a familiar scenario from statistical analysis: selecting the appropriate \( n \) and estimator \( \hat{f}(x_j) \) such that both the bias and variance remain within a specified tolerance, ensuring that the error \( \abs{f^* - \hat{p}_n(0)} \) is small with high probability. In what follows, we analyze these error contributions.

\medskip 

We follow the standard theory of polynomial interpolation \cite{Atkinson}, which provides the following general form for the interpolation error. For every function \( f \in C^{n+1}(\mathbb{R}) \), there exists \( \xi \in [x_0, x_n] \) such that
\begin{equation}  \label{eq:interp_error}
f(0) - p_n(0) =  \frac{(-1)^{n} f^{(n+1)} (\xi)}{(n+1)!} \prod_{j=0}^{n} x_j.
\end{equation}

To derive explicit bounds, we assume that \(f\) satisfies the following regularity property: for some constants \(C\) and \(M > 0\), for all \(n \ge 0\),
\begin{equation}\label{eq:gevrey}
    \abs{f^{(n)}(x)} \le C \, M^n \quad \forall\, x \in [1,B].
\end{equation}
This condition is verified for the function in \cref{eq:f-definition} arising in QEM (Appendix \ref{app:exp_value_bound}). Condition \eqref{eq:gevrey} can be interpreted as a Gevrey class with \(\sigma = 0\).

\medskip

By substituting \eqref{eq:gevrey} into \eqref{eq:interp_error}, we obtain the following bound on the approximation error:
\begin{equation} \label{eq:expectation_error_bound}
     \abs{f(0) - p_n(0) } \le C \,\frac{M^{n+1}}{(n+1)!} \prod_{j=0}^{n} x_j.
\end{equation}
Hence, the error in approximating \(f(0)\) is influenced by the arrangement of the nodes \( x_j \). We will analyze this error by considering both equidistant and Chebyshev nodes.

\medskip

First, let us consider the case where all interpolation points are equidistant, i.e.,
\begin{equation}\label{xj-unif}
    x_j = 1 + jh, 
\end{equation}
where \( h := \frac{B - 1}{n} \) denotes the spacing.

\begin{theorem} \label{thm:equidist_error} 
Let $\{x_j\}_{j = 0}^{n}$ be a set of equidistant nodes given by \eqref{xj-unif}, and let \( p_n(x) \) be the degree-\( n \) polynomial interpolating the function \( f \) at these points. Under assumption \eqref{eq:gevrey}, for \( 0 < \varepsilon < e^{-e} \), if \( M \le B^{-\frac{B}{B-1}} \) and \( n = \Omega\bigl(\frac{\log(1/\varepsilon)}{ \sqrt{\log \log (1/\varepsilon)}}\bigr), \) then 
\(\lvert f(0) - p_n(0) \rvert < \varepsilon.\)
\end{theorem}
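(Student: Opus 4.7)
The plan is to start from the general polynomial interpolation error bound in \eqref{eq:expectation_error_bound}, specialized to the equidistant grid $x_j = 1 + jh$ with $h = (B-1)/n$:
\[
|f(0) - p_n(0)| \le C\,\frac{M^{n+1}}{(n+1)!}\prod_{j=0}^{n}(1+jh).
\]
Everything reduces to estimating the node product tightly enough that the given bound on $M$ exactly cancels its dominant exponential factor, leaving a super-exponentially small residual whose inversion determines the required degree $n$.

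First I would control $\prod_{j=0}^{n}(1+jh)$ via $\log\prod(1+jh) = \sum_{j=0}^{n}\log(1+jh)$ and compare this Riemann sum to $\int_{0}^{n}\log(1+xh)\,dx$. The substitution $u = 1+xh$ converts the integral into $\tfrac{1}{h}\int_{1}^{B}\log u\,du = \tfrac{n}{B-1}(B\log B - B + 1)$, so after taking exponentials I expect
\[
\prod_{j=0}^{n}(1+jh) \le c_1\, B^{nB/(B-1)}\, e^{-n},
\]
with a constant $c_1$ absorbing the usual concavity correction between the sum and the integral. Applying Stirling's formula $(n+1)! \ge c_2\sqrt{n+1}\,(n+1)^{n+1}e^{-(n+1)}$ and substituting the hypothesis $M \le B^{-B/(B-1)}$ then collapses the expression: the factor $M^{n+1}B^{nB/(B-1)}$ is bounded by a constant, the two $e^{-n}$'s cancel, and the overall bound reduces to $c_3/(n+1)^{n+1}$ up to lower-order factors.

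It remains to invert this bound. Requiring $c_3/(n+1)^{n+1} \le \varepsilon$ gives $(n+1)\log(n+1) \gtrsim \log(1/\varepsilon)$. Plugging in the proposed rate $n = c\,\log(1/\varepsilon)/\sqrt{\log\log(1/\varepsilon)}$ yields $\log n \approx \log\log(1/\varepsilon)$, hence
\[
n\log n \approx c\,\log(1/\varepsilon)\,\sqrt{\log\log(1/\varepsilon)},
\]
which dominates $\log(1/\varepsilon)$ precisely when $\log\log(1/\varepsilon) \ge 1/c^{2}$. The assumption $\varepsilon < e^{-e}$ guarantees $\log\log(1/\varepsilon) \ge 1$, so a sufficiently large constant $c$ closes the argument.

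The step I expect to be most delicate is the node-product estimate: I need a genuine upper bound on $\prod_{j=0}^{n}(1+jh)$ of the form $B^{nB/(B-1)}e^{-n}$ (not merely an asymptotic equivalent), with constants transparent enough that the hypothesis $M \le B^{-B/(B-1)}$ is \emph{exactly} what cancels it. A careless application of concavity could leak a factor of $B$ or $e$ per step and spoil the condition on $M$; the fix is to bound $\sum_{j=0}^n\log(1+jh)$ by $\int_{0}^{n+1}\log(1+xh)\,dx$ (legitimate since $\log(1+xh)$ is increasing and vanishes at $x=0$) and absorb the boundary contribution at $x=n+1$ into a mild $O(1)$ correction, which is harmless once tracked through Stirling.
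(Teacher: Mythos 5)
Your proposal is correct and follows essentially the same route as the paper's proof: bound the node product by $B^{nB/(B-1)}e^{-n}$ up to constants (the paper via the exact identity $\prod_j x_j = (n_0+n)!/(n_0!\,n_0^{\,n})$ with $n_0=n/(B-1)$ followed by Stirling, you via a Riemann-sum/integral comparison), use the hypothesis $M\le B^{-B/(B-1)}$ to cancel the exponential factor, and invert the resulting $O\bigl((n+1)^{-(n+1)}\bigr)$ bound. The only substantive difference is in the inversion step: the paper delegates it to Lemma~\ref{LambertW_bound}, which uses the Lambert-$W$ inequality $W_0(x)\le\log x-\tfrac12\log\log x$, whereas you verify $n\log n\gtrsim\log(1/\varepsilon)$ directly from the proposed rate, using $\varepsilon<e^{-e}$ to guarantee $\log\log(1/\varepsilon)\ge 1$; both arguments are sound.
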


The proof is included in \cref{proof:equidist_error}. This result shows that, under the Gevrey-type bound on \( f \) in \eqref{eq:gevrey}, the interpolating polynomial \( p_n(x) \) approximates \( f \) at \( x = 0 \) with an error less than \( \varepsilon \).
Moreover, the theorem provides an optimistic estimate: to make the bias in \eqref{extrap-pn0-samp} sufficiently small, the number of interpolation points (and thus the number of circuits in QEM) scales only logarithmically with the precision~$\varepsilon$.

\medskip

Another widely recommended strategy in polynomial interpolation is to use \textit{Chebyshev nodes} \cite{Trefethen_approx}, which helps mitigate issues such as the Runge phenomenon \cite{Runge}. For a positive integer \(n\), the Chebyshev nodes of the first kind in the open interval \((-1,1)\) are given by
\begin{equation}
y_k = \cos \ \! \Bigl( \frac{(2k+1)\pi}{2(n+1)}\Bigr), 
\quad k=0,1,\dots,n,
\end{equation}
which are the roots of the Chebyshev polynomial \( T_{n+1}(y) \) of the first kind and degree \(n+1\) \cite{Trefethen_approx}.

To define Chebyshev nodes on the interval \([1,B]\), we use the linear bijection \( \phi: [-1,1] \to [1,B]\),
\begin{equation} \label{Chebyshev_shift}
     \phi(y) := \frac{B-1}{2}\,y + \frac{B+1}{2}.
\end{equation}
Consequently, the shifted Chebyshev polynomial on \([1,B]\) is
\begin{equation} \label{newT}
    \widetilde{T}_n (x) := T_n\!\Bigl(\phi^{-1}(x)\Bigr)
= T_{n}\!\Bigl(\frac{2(x-1)}{B-1} - 1\Bigr).
\end{equation}
Following the standard result for shifted nodes \cite{Trefethen_approx}, for \(\{x_j\}_{j=0}^{n} \subseteq [1,B]\) we obtain
\begin{equation}  \label{roots_Chebyshev_equivalence}
    \prod_{j=0}^{n} \bigl(x - x_j \bigr)
    = 2 \Bigl(\frac{B-1}{4}\Bigr)^{n+1} \,\widetilde{T}_{n+1}(x),
\end{equation}
which simplifies the interpolation error \eqref{eq:interp_error}. For convenience, we define the parameter
\begin{equation}\label{def:kappa}
    \kappa := \frac{\sqrt{B}+ 1}{\sqrt{B}-1}.
\end{equation}

\begin{theorem} \label{thm:Cheby_error_theorem}
Let $\{x_j\}_{j = 0}^{n}$ be a set of Chebyshev nodes in the interval \([1,B]\) with \(B > 1\), and let \( p_n(x) \) be a degree-\(n\) polynomial interpolating the function \( f \) at these points. Under assumption \eqref{eq:gevrey}, for \( 0 < \varepsilon < e^{-e} \), if \( M \leq \tfrac{4}{(B-1)e\kappa^2}\) and 
\(
n = \Omega\!\Bigl(\frac{\log(1/\varepsilon)}{\sqrt{\log\log(1/\varepsilon)}}\Bigr)
\), then 
\(
\lvert f(0) - p_n(0) \rvert < \varepsilon.
\)
\end{theorem}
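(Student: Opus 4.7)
The plan is to proceed exactly as in Theorem \ref{thm:equidist_error}, using the interpolation error formula \eqref{eq:interp_error} evaluated at $x = 0$, the Gevrey bound \eqref{eq:gevrey} on $f^{(n+1)}$, and the Chebyshev product identity \eqref{roots_Chebyshev_equivalence} to handle the node product $\prod_{j=0}^n x_j$. The distinctive ingredient compared to the equidistant case is the evaluation of the shifted Chebyshev polynomial at zero, where the growth constant $\kappa$ will emerge naturally.

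First, I would set $x = 0$ in \eqref{roots_Chebyshev_equivalence} to obtain $\bigl|\prod_{j=0}^n x_j\bigr| = 2\bigl(\tfrac{B-1}{4}\bigr)^{n+1} |\widetilde T_{n+1}(0)|$. Next, by \eqref{newT}, $\widetilde T_{n+1}(0) = T_{n+1}\!\bigl(-\tfrac{B+1}{B-1}\bigr)$, and since $\tfrac{B+1}{B-1} > 1$ for $B > 1$, I would use the hyperbolic representation of Chebyshev polynomials outside $[-1,1]$: for $|y| > 1$, $T_n(y) = \tfrac{1}{2}\bigl((|y|+\sqrt{y^2-1})^n + (|y|+\sqrt{y^2-1})^{-n}\bigr)$ (up to a sign from $T_n(-y)=(-1)^n T_n(y)$). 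A direct computation gives $\tfrac{B+1}{B-1} + \sqrt{(\tfrac{B+1}{B-1})^2-1} = \tfrac{(\sqrt B+1)^2}{B-1} = \kappa$, so $|\widetilde T_{n+1}(0)| = \tfrac{1}{2}(\kappa^{n+1}+\kappa^{-(n+1)}) \le \kappa^{n+1}$.

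Combining with the Gevrey bound $|f^{(n+1)}(\xi)| \le C M^{n+1}$, I then obtain
\[
|f(0) - p_n(0)| \;\le\; \frac{2C}{(n+1)!}\left(\frac{M(B-1)\kappa}{4}\right)^{\!n+1}.
\]
The hypothesis $M \le \tfrac{4}{(B-1)e\kappa^{2}}$ forces $\tfrac{M(B-1)\kappa}{4} \le \tfrac{1}{e\kappa}$, reducing the estimate to $\tfrac{2C}{(n+1)!\,(e\kappa)^{n+1}}$. Finally I would apply Stirling's formula in the form $\log((n+1)!) \ge (n+1)\log(n+1)-(n+1)$ and require $(n+1)\log\bigl((n+1)\kappa\bigr) \ge \log(2C/\varepsilon)$; solving this implicit inequality asymptotically yields $n = \Omega\!\bigl(\log(1/\varepsilon)/\sqrt{\log\log(1/\varepsilon)}\bigr)$ as a sufficient (slightly over-provisioned) condition, mirroring the Stirling-based calculation already used in Theorem \ref{thm:equidist_error}.

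The main obstacle is the second step: correctly identifying $\kappa$ as the Joukowski-type growth rate of $T_{n+1}$ at the pre-image of $0$ under the affine map $\phi$. Once this algebraic identity is in place, everything else is a direct combination of the Gevrey bound, Stirling, and the same asymptotic analysis used for the equidistant case, with the crucial improvement that the factor $\tfrac{M^{n+1}}{(n+1)!}\prod x_j$ no longer carries the harmful $B^{n+1}$ growth from uniform nodes, but only a tame $\kappa^{n+1}$ factor that is absorbed by the hypothesis on $M$.
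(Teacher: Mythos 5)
Your proposal is correct and follows essentially the same route as the paper: interpolation error formula \eqref{eq:interp_error}, the Gevrey bound \eqref{eq:gevrey}, the product identity \eqref{roots_Chebyshev_equivalence} evaluated at $x=0$, the growth of $T_{n+1}$ at $-\tfrac{B+1}{B-1}$ expressed through $\kappa$, and finally the Stirling/Lambert-$W$ asymptotics of Lemma~\ref{LambertW_bound}. The only difference is that your exact evaluation gives $\lvert\widetilde T_{n+1}(0)\rvert\le\kappa^{n+1}$, which is tighter than the paper's stated bound $\kappa^{2(n+1)}$ in \eqref{T0_bound}, so the hypothesis $M\le 4/((B-1)e\kappa^{2})$ is more than sufficient in your version; this changes nothing in the conclusion.
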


For a detailed proof, see \cref{proof:Cheby_error_theorem}. Both Theorems \ref{thm:equidist_error} and \ref{thm:Cheby_error_theorem} show that \(n\) scales logarithmically in \(\varepsilon\). Comparing the assumptions in these two theorems, note that for large \(B\), \(B/(B-1) \approx 1\). Hence, for large \(B\), for equidistant nodes we have \(M \le B^{-1}\), which is slightly more restrictive than the corresponding bound for Chebyshev nodes. On the other hand, the bounds for \(M\) in both theorems can be relaxed if we consider a smaller interval for \(\varepsilon\). We elaborate on this point in Remark 1 of \cref{proof:Cheby_error_theorem}.

\bigskip

We now turn our attention to the variance of the estimator \( p_n(0) \) in \eqref{extrap-pn0-samp}. Since \( p_n(0) \) is a linear combination of independent random variables \( \hat{f}(x_j) \), we can compute its variance using the variance rule for a weighted sum:
\begin{equation} \label{eq:variance}
    \mathrm{Var}[p_n(0)] 
    = \mathrm{Var}\!\biggl[\sum_{j=0}^{n} \gamma_j \hat{f}(x_j)\biggr] 
    = \sum_{j=0}^{n} \gamma_j^2\,\mathrm{Var}\bigl[\hat{f}(x_j)\bigr],
\end{equation}
where \(\mathrm{Var}[\hat{f}(x_j)] = \tfrac{\sigma^{2}_j}{N_j}\) is the variance of the sample mean, and \(N_j\) is the number of samples taken at node \( x_j \).

For simplicity, assume \(\sigma_j \le \sigma\) for all \(j\), and let the number of samples be uniform across all nodes, i.e., \(N_j = N_S\) for every \( j \). Then \(\mathrm{Var}[\hat{f}(x_j)] \le \tfrac{\sigma^2}{N_S}\) for all \(j\). By applying the Cauchy–Schwarz inequality to \eqref{eq:variance}, we obtain
\[
    \mathrm{Var}[p_n(0)] 
    \;\le\; \frac{\sigma^2}{N_S}\sum_{j=0}^{n}\gamma_j^2
    \;\le\; \frac{\sigma^2}{N_S}\Bigl(\sum_{j=0}^{n}\lvert\gamma_j\rvert\Bigr)^2.
\]

To quantify the statistical error, we define the \(\ell_1\)-norm of the coefficients \(\gamma_j\):
\begin{equation}
    \|\bm{\gamma}\|_1 
    \;:=\; \sum_{j=0}^{n}\lvert\gamma_j\rvert.
\end{equation}
We will establish bounds on \(\|\bm{\gamma}\|_1\) and then incorporate these results into a Hoeffding bound at the end of this section.

\begin{theorem} \label{equidist_gamma_sum_bound}
    Let $\{x_j\}_{j = 0}^{n}$ be a set of equidistant nodes according to \eqref{xj-unif}. Then, the following holds:
    \begin{equation}\label{eq:gamma_sum}
    \norm{\bm \gamma}_1 = O\left(\left(\frac{2Be}{B-1}\right)^n\right).
    \end{equation}
\end{theorem}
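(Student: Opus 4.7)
The plan is to work directly from the product formula \(\gamma_j = \prod_{k\neq j} \tfrac{x_k}{x_k - x_j}\) and exploit the regular spacing to reduce everything to binomial sums that can be controlled by Stirling's approximation.

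First, I would substitute \(x_k = 1 + kh\) into the definition of \(\gamma_j\). The key simplification is that the differences become \(x_k - x_j = (k-j)h\), so the denominator factors out cleanly: \(\prod_{k \neq j}(x_k - x_j) = h^n \prod_{k \neq j}(k-j) = (-1)^{n-j} h^n\, j!\,(n-j)!\). For the numerator, I would pull out the missing factor by writing \(\prod_{k \neq j} x_k = \bigl(\prod_{k=0}^n x_k\bigr)/x_j\). This yields the closed form
\begin{equation*}
    |\gamma_j| \;=\; \frac{\prod_{k=0}^n x_k}{h^n\, x_j\, j!\,(n-j)!}.
\end{equation*}

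Next, summing over \(j\) and pulling out the \(j\)-independent factors gives
\begin{equation*}
    \|\bm{\gamma}\|_1 \;=\; \frac{\prod_{k=0}^n x_k}{h^n\, n!}\,\sum_{j=0}^n \binom{n}{j}\frac{1}{x_j}.
\end{equation*}
Since \(x_j = 1 + jh \ge 1\) for every \(j\), the inner sum is trivially bounded by \(\sum_j \binom{n}{j} = 2^n\). For the product, I would use \(x_k \le B\) for all \(k\), giving \(\prod_{k=0}^n x_k \le B^{n+1}\). Substituting \(h = (B-1)/n\) then produces
\begin{equation*}
    \|\bm{\gamma}\|_1 \;\le\; B \cdot \left(\frac{2B}{B-1}\right)^n \cdot \frac{n^n}{n!}.
\end{equation*}

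Finally, Stirling's bound \(n! \ge \sqrt{2\pi n}\,(n/e)^n\) yields \(n^n/n! \le e^n/\sqrt{2\pi n}\), and combining the two exponential factors produces the claimed estimate \(O\!\bigl((2Be/(B-1))^n\bigr)\). No step is technically delicate; the only thing to watch is making sure the ``missing factor'' bookkeeping (dividing the full product \(\prod_{k=0}^n x_k\) by \(x_j\) to account for \(k \neq j\)) is done consistently, and that the factor \(h^{-n} = (n/(B-1))^n\) combines correctly with \(n^n/n!\) to give the \(e^n\) through Stirling. The bound is actually slightly better than stated (it carries a \(1/\sqrt{n}\) prefactor), which is absorbed into the \(O(\cdot)\).
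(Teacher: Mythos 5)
Your proof is correct and follows essentially the same route as the paper's: both reduce \(\norm{\bm\gamma}_1\) to a binomial sum times the product of the nodes, bound the sum by \(2^n\) using \(x_j \ge 1\) and the product by \(B^{n+1}\), and finish with Stirling's formula to produce the factor \(e^n\). (The only blemish is the sign \((-1)^{n-j}\) in \(\prod_{k\neq j}(k-j)\), which should be \((-1)^{j}\,j!\,(n-j)!\), but this is immaterial since you take absolute values immediately afterward.)
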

 
A rigorous proof is provided in \cref{proof:equidist_gamma_sum_bound}. In numerical analysis, 
\(\norm{\bm \gamma}_1\) can be interpreted as a condition number that quantifies the sensitivity 
of the extrapolation to perturbations in the function values. Its exponential dependence on the 
number of interpolation points is well-known \cite{gautschi1987lower}. 

\medskip

Next, we analyze how choosing Chebyshev nodes affects \(\norm{\bm \gamma}_1\).

\begin{theorem} \label{thm:Cheby_gamma_sum_bound}
    Let $\{x_j\}_{j = 0}^{n}$ be the Chebyshev nodes in the interval \([1,B]\), where \( B > 1 \). Then, we have the bound:
\begin{equation}\label{cheby-gam-sum} 
    \norm{\bm \gamma}_1 = O\left( \kappa^{2n}  \right),
\end{equation}
where $\kappa$ is defined in \eqref{def:kappa}.

\end{theorem}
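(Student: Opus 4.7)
My plan is to reduce each $\gamma_j$ to a ratio involving the polynomial
\[
P(x) := \prod_{k=0}^{n}(x-x_k) \;=\; 2\Bigl(\tfrac{B-1}{4}\Bigr)^{n+1}\widetilde{T}_{n+1}(x),
\]
using the identity \eqref{roots_Chebyshev_equivalence}, and then bound the numerator $P(0)$ and the denominator $P'(x_j)$ separately using standard identities for Chebyshev polynomials. From the Lagrange formula,
\[
\gamma_j \;=\; L_j(0) \;=\; \prod_{k\neq j}\frac{-x_k}{x_j-x_k}
       \;=\; -\frac{P(0)}{x_j\,P'(x_j)},
\]
so that $|\gamma_j| = |P(0)|\big/\bigl(x_j\,|P'(x_j)|\bigr)$. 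The task thus reduces to upper-bounding $P(0)$ and lower-bounding $P'(x_j)$.

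For $P(0)$, I would evaluate $\widetilde{T}_{n+1}(0) = T_{n+1}\!\bigl(-\tfrac{B+1}{B-1}\bigr)$ using the closed form $T_m(y) = \tfrac{1}{2}\bigl[(y+\sqrt{y^{2}-1})^{m} + (y-\sqrt{y^{2}-1})^{m}\bigr]$, valid for $|y|\ge 1$. A short algebraic manipulation shows that $y\pm\sqrt{y^{2}-1} = -\kappa^{\mp 1}$, with $\kappa$ as in \eqref{def:kappa}, whence $|\widetilde{T}_{n+1}(0)|\le \kappa^{n+1}$ and
\[
|P(0)| \;\le\; 2\Bigl(\tfrac{B-1}{4}\Bigr)^{n+1}\kappa^{n+1}.
\]

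For $P'(x_j)$, I would parameterize $x_j = \phi(\cos\theta_j)$ with $\theta_j=\tfrac{(2j+1)\pi}{2(n+1)}$ and apply the classical identity $T'_{n+1}(\cos\theta)\sin\theta = (n+1)\sin\!\bigl((n+1)\theta\bigr)$. At $\theta_j$ the right-hand side equals $\pm 1$, giving $|\widetilde{T}'_{n+1}(x_j)| = \tfrac{2(n+1)}{(B-1)\sin\theta_j}$ and hence $|P'(x_j)| = \tfrac{(n+1)(B-1)^{n}}{4^{n}\sin\theta_j}$. Substituting both estimates into the expression for $|\gamma_j|$ and simplifying via $(B-1)\kappa = (\sqrt{B}+1)^{2}$ yields a per-node bound of the form
\[
|\gamma_j| \;\le\; C\,\frac{\kappa^{n}\sin\theta_j}{(n+1)\,x_j},
\]
for an explicit constant $C$ depending only on $B$. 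Since $x_j\ge 1$ and $\sum_{j=0}^{n}\sin\theta_j = O(n)$, summing over $j$ produces $\|\bm{\gamma}\|_{1} = O(\kappa^{n})$, which in particular implies the claimed $O(\kappa^{2n})$.

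The main obstacle is the bookkeeping: tracking the powers of $B$, $\kappa$, and $4$ through the chain of identities, and in particular evaluating $T_{n+1}$ at a point outside $[-1,1]$ in a way that exposes the geometric factor $\kappa^{n+1}$ cleanly. Once the two ingredient bounds are in place, the final summation is essentially immediate and costs only a polynomial factor $n$, which does not affect the exponential rate in \eqref{cheby-gam-sum}.
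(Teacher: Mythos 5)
Your proposal is correct and follows essentially the same route as the paper: both write $\gamma_j = L_j(0)$ via the Lagrange/Chebyshev identity $\prod_k(x-x_k)=2(\tfrac{B-1}{4})^{n+1}\widetilde{T}_{n+1}(x)$, lower-bound the derivative at the nodes through $|T'_{n+1}(\cos\theta_j)|=(n+1)/\sin\theta_j\ge n+1$, and upper-bound $\widetilde{T}_{n+1}(0)=T_{n+1}(-\tfrac{B+1}{B-1})$ from the closed form of $T_m$ outside $[-1,1]$. The one substantive difference is that your exact evaluation $y\pm\sqrt{y^2-1}=-\kappa^{\mp1}$ gives the tight estimate $|\widetilde{T}_{n+1}(0)|\le\kappa^{n+1}$ (versus the paper's looser $\kappa^{2(n+1)}$), so you actually obtain the sharper conclusion $\norm{\bm\gamma}_1=O(\kappa^{n})$, which of course implies the stated $O(\kappa^{2n})$.
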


The detailed proof is provided in \cref{proof:Cheby_gamma_sum_bound}. Observe that \(2B \geq B + 1\), which implies that the bound for \(\norm{\bm \gamma}_1\) is smaller for Chebyshev nodes than for the equidistant nodes in \cref{equidist_gamma_sum_bound}. This aligns with Runge's phenomenon \cite{Runge}, since Chebyshev nodes reduce oscillations in polynomial interpolation, especially near the endpoints of the interpolation interval.

\bigskip

The analysis of the approximation and statistical errors above allow us to identify the sampling complexity of the ZNE method. We do so, by bounding the sampling complexity in the following theorem.

\begin{theorem} \label{thm:Hoeffding_Richardson}
Let \(f\), and \(p_{n}(x)\) be as in \cref{eq:f-definition} and \cref{def:extrap-pn} respectively.  Under assumption \eqref{eq:gevrey}, \( \forall\, 0 < \varepsilon < e^{-e}, 0 < \delta < 1\), the error bound \(\abs{\hat{p}_n(0) - f(0)} < \varepsilon\) holds with probability at least \(1 - \delta\)  if \(n = \Omega \left( \log \frac{1}{\varepsilon} \right)\), and and the number of samples \( N_S \) satisfies either one of the following conditions:
\begin{enumerate}
    \item $\{x_j\}_{j = 0}^{n}$ is a set of equidistant nodes according to \eqref{xj-unif}, \( M \leq B^{-B/(B-1)} \), and
    \[
        N_S = \Omega\left(\alpha^2  \varepsilon^{-(2+ 4Be/(B-1))}\log \frac{2}{\delta}\right).
    \]

    \item $\{x_j\}_{j = 0}^{n}$ is a set of Chebyshev nodes in the interval \([1,B] \), \(M \leq 4/((B-1) e \kappa^2)\),  and 
    \[
        N_S = \Omega\left(\alpha^2  \varepsilon^{-(2 + 4 \log \kappa)}\log \frac{2}{\delta}\right),
    \]
\end{enumerate}
where \(\kappa\) is defined in \eqref{def:kappa}.
\end{theorem}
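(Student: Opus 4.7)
The plan is to decompose the error
\[
|\hat p_n(0)-f(0)| \;\le\; |\hat p_n(0)-p_n(0)| \;+\; |p_n(0)-f(0)|
\]
into a statistical (variance) term and a deterministic (bias) term, and to arrange that each is bounded by $\varepsilon/2$ with the prescribed probability. The bias term is handled directly by invoking Theorem~\ref{thm:equidist_error} in the equidistant case and Theorem~\ref{thm:Cheby_error_theorem} in the Chebyshev case: under the hypotheses on $M$ and with $n=\Omega(\log(1/\varepsilon))$ (up to the mild $\sqrt{\log\log(1/\varepsilon)}$ factor already absorbed by taking a slightly larger constant), each of those theorems yields $|p_n(0)-f(0)|<\varepsilon/2$.

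For the statistical term I would write the estimator as an empirical average,
\[
\hat p_n(0) \;=\; \frac{1}{N_S}\sum_{i=1}^{N_S} Y_i,
\qquad Y_i \;:=\; \sum_{j=0}^{n}\gamma_j\, f^{(i)}(x_j),
\]
where the $Y_i$ are independent across $i$ (each corresponds to one full round of shots across all amplification levels, possibly with the inner sum itself split over independent circuits). The assumption $\|A\|_\infty\le\alpha$ gives $|f^{(i)}(x_j)|\le\alpha$ almost surely, so $|Y_i|\le\alpha\|\bm\gamma\|_1$. Hoeffding's inequality then yields
\[
\Pr\!\left[|\hat p_n(0)-p_n(0)|\ge \tfrac{\varepsilon}{2}\right]
\;\le\; 2\exp\!\left(-\frac{N_S\,\varepsilon^{2}}{8\alpha^{2}\|\bm\gamma\|_1^{2}}\right).
\]
Setting the right-hand side at most $\delta$ gives the sufficient condition $N_S \ge 8\alpha^{2}\|\bm\gamma\|_1^{2}\,\varepsilon^{-2}\log(2/\delta)$.

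The final step is to substitute the norm bounds of Theorems~\ref{equidist_gamma_sum_bound} and~\ref{thm:Cheby_gamma_sum_bound} at the chosen $n=\Theta(\log(1/\varepsilon))$. For Chebyshev nodes, $\|\bm\gamma\|_1=O(\kappa^{2n})$ gives $\|\bm\gamma\|_1^{2}=O(\varepsilon^{-4\log\kappa})$, producing $N_S=\Omega(\alpha^{2}\varepsilon^{-(2+4\log\kappa)}\log(2/\delta))$ directly. For equidistant nodes, $\|\bm\gamma\|_1=O((2Be/(B-1))^{n})$ yields $\|\bm\gamma\|_1^{2}=O(\varepsilon^{-2\log(2Be/(B-1))})$; using the elementary inequality $\log x\le x$ with $x=2Be/(B-1)$, this is absorbed into the (weaker but cleaner) bound $O(\varepsilon^{-4Be/(B-1)})$, giving the claimed sample complexity.

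The main obstacle is the calibration step: the exponent in $N_S$ is dictated by the product of the polynomial-interpolation condition number $\|\bm\gamma\|_1$ and the choice of $n$ forced on us by the bias estimate, so one must pick $n$ large enough for Theorems~\ref{thm:equidist_error}/\ref{thm:Cheby_error_theorem} to apply yet small enough that $\|\bm\gamma\|_1^{2}$ remains polynomial in $1/\varepsilon$. Choosing $n=\Theta(\log(1/\varepsilon))$ is the natural balance point, and verifying that this simultaneously (i) satisfies the $\Omega(\log(1/\varepsilon)/\sqrt{\log\log(1/\varepsilon)})$ hypothesis of the bias theorems and (ii) produces the stated $\varepsilon^{-(2+\cdot)}$ scaling (via the $\log x\le x$ relaxation in the equidistant case) is the only nontrivial bookkeeping; once this is settled, a union bound over the bias and variance events completes the argument with total failure probability at most $\delta$.
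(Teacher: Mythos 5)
Your proposal is correct and follows essentially the same route as the paper: bound the bias via Theorems~\ref{thm:equidist_error}/\ref{thm:Cheby_error_theorem} (noting that $n=\Omega(\log(1/\varepsilon))$ dominates the $\Omega(\log(1/\varepsilon)/\sqrt{\log\log(1/\varepsilon)})$ requirement for $\varepsilon<e^{-e}$), apply Hoeffding's inequality to the bounded weighted sum with $|Y_i|\le\alpha\|\bm\gamma\|_1$, and substitute the $\|\bm\gamma\|_1$ bounds of Theorems~\ref{equidist_gamma_sum_bound} and~\ref{thm:Cheby_gamma_sum_bound} at $n=\Theta(\log(1/\varepsilon))$, including the $\log x\le x$ relaxation that converts $\varepsilon^{-2\log(2Be/(B-1))}$ into the stated $\varepsilon^{-4Be/(B-1)}$ in the equidistant case. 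The only cosmetic differences are your explicit $\varepsilon/2$ split (the paper bounds each term by $\varepsilon$ and absorbs the factor into the $\Omega$) and the closing ``union bound,'' which is unnecessary since the bias term is deterministic.
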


 Note that we can replace \(B\) with \(((\kappa+1)/(\kappa - 1))^2\), as we have in \textit{Informal Theorem \ref{thm:info_1}}. We refer you to \cref{proof:Hoeffding_Richardson}  for the complete proof. Theorem \ref{thm:Hoeffding_Richardson} gives us a clear bound for the required number of circuits (nodes) and number of shots per circuit to reach a small enough error using Richardson's extrapolation.
 
\medskip 
 More specifically, in the context of QEM, $M$ in \cref{eq:gevrey} corresponds to the bound in \eqref{interpolation_bound}, as explained in \cref{app:exp_value_bound}. That is, \(M = \lambda_0 l T \). Specifically, $T$ is  the evolution time, and usually, large $T$  requires deeper circuits.   $\lambda_0$ is the base noise level and is associated with the error rate of the quantum device. Meanwhile, $l$ is a bound of the Lindblad noise operator, which is separate from  $\lambda_0$. $l$ is proportional to the number of qubits.  In light of the conditions on $M$ in the above theorem, the bounds for the number of data points ($n$) and the number of samples ($N_S$) are applicable for short circuit depth or low noise level.

 Meanwhile, if we fix the precision $\varepsilon,$ and consider deeper circuits, i.e., $M\gg 1,$ then, as elaborated in the analysis in \cref{Rem:largeM} in \cref{appendix-bia}, we find that $n= \Omega\left(M \varepsilon^{-(1/M)}\right).  $ Thus, according to \cref{equidist_gamma_sum_bound} and \cref{thm:Cheby_gamma_sum_bound}, the variance and the sample complexity will scale exponentially with $M$. This is consistent with the bounds in \cite{Universal_sampling_lower_bounds_QEM,takagi2022fundamental, Quek_tighter_bounds_2024}.

\begin{remark}[Importance Sampling] When the number of samples across all nodes is different, we can reduce the variance even further.
Since the coefficients in the extrapolation are known, the number of samples can be adjusted. Assume independent shot noise with
$\mathrm{Var}[\hat f(\lambda_i)] = \sigma_i^2/N_i$. With a fixed total budget $N_{\mathrm{tot}}=\sum_i N_i$, we can re-examine the total variance,
\begin{equation}
\begin{aligned}
&    \mathrm{Var}[\hat f(0)] = \sum_{i=0}^n \frac{\gamma_i^2 \sigma_i^2}{N_i} 
\quad \\ 
&\Rightarrow\quad
\min_{\{N_i\}} \sum_{i} \frac{\gamma_i^2 \sigma_i^2}{N_i}\ \ \text{s.t. }\sum_i N_i=N_{\mathrm{tot}}.
\end{aligned}
\end{equation}
By Lagrange multipliers, the optimal allocation is
\begin{equation}
N_i^\star \; = \; \frac{|\gamma_i|\,\sigma_i}{\sqrt{\lambda}},
\qquad
N_i^\star \;=\; N_{\mathrm{tot}}\,
\frac{|\gamma_i|\sigma_i}{\sum_j |\gamma_j|\sigma_j},
\end{equation}
yielding the minimum variance
\begin{equation}
\mathrm{Var}_{\min}[\hat f(0)] \;=\; \frac{\bigl(\sum_i |\gamma_i|\sigma_i\bigr)^2}{N_{\mathrm{tot}}}.
\end{equation}
Thus, unequal sampling strictly improves variance when $|\gamma_i|\sigma_i$ are non-uniform. Practically, allocating more shots to data points with large variance weight \(\abs{\gamma_i}\sigma_i\) minimizes the variance of the final Richardson estimator.
 \end{remark}

\section{Zero-Noise Extrapolation via Least Squares} \label{sec:least squares}

In this section, we investigate whether polynomials of lower degrees (i.e., \(m < n\)) can yield more robust estimates. Specifically, we construct a polynomial \(p_m\) of degree at most \(m\) to fit \(n+1\) nodes. Note that the discussion in the previous section corresponds to the special case \(m = n\). In statistical learning \cite{hastie2009elements}, the scenario \(m < n\) is typically formulated as a least-squares problem:
 \begin{equation}\label{eq:least_sq_statement}
    p_m(x)  = \displaystyle \arg \min_{p \in \mathcal{P}_m} \|p(x) -  f(x)\|^2_2,
\end{equation}
where \(\mathcal{P}_m\) is the space of polynomials of degree at most \(m\). 
Also known as polynomial regression, this method uses a polynomial degree significantly smaller than \(n\), effectively preventing overfitting. For extrapolation problems, it has been thoroughly investigated in \cite{Townsend}. Under the assumption that the function \(f(x)\) is analytic within a Bernstein ellipse. In particular, their analysis suggests that, under the oversampling condition
\( m \le \tfrac{1}{2}\sqrt{n},\)
the method achieves an asymptotically optimal extrapolant. However, in our setting, we do not necessarily assume analyticity in a Bernstein ellipse (although a subtle connection can be made \cite{demanet2010chebyshev}). Moreover, their analysis focuses on the interval \([-1,1]\) and extrapolates to \(-1\), with an error bound that benefits from the fact that the Chebyshev polynomials are still bounded by 1. In contrast, our noise scaling factors, due to the limitation of gate folding, must satisfy \(1 \leq x_i \leq B\). Consequently, the extrapolation point \(x=0\) may not lie within a scaled Chebyshev interval containing \([1,B]\), unless $B\geq \frac{16}{n^2 \pi^2}$, which is too restrictive. Another possibility, as suggested in \cite{cai2021multi}, is to combine ZNE with the probabilistic cancellation method that can reach a noise level with a scaling factor $x<1$.  

 In the following discussions,  we assume the data points in the least squares are \(n+1\) Chebyshev nodes in the interval \([1, B]\).
For better numerical stability and convenience of the analysis, we use the Chebyshev polynomial basis to expand \(f(x)\). Toward this end, let \(\{\tau_k(y)\}_{k=0}^{m}\) be the rescaled Chebyshev polynomials on the interval \([-1,1]\), that is
\begin{equation} \label{rescaled_Chebypoly}
    \tau_k(y) := \begin{cases}
        \sqrt{\frac{1}{n+1}} T_0(y), & k = 0, \\
        \sqrt{\frac{2}{n+1}} T_k(y), & k = 1, 2, \dots, m.
    \end{cases}
\end{equation}

For the purpose of extrapolating the function \(f\) to \(0\), we need to transform the Chebyshev polynomials from $[-1,1]$ to $[1,B]$.
 Again, we can use transform \eqref{newT} to shift the nodes:  Let
 \begin{equation}\label{tau-tilde}
     \tilde{\tau}_k(x) := \tau_k(\phi^{-1}(x)), 
 \end{equation}
 be the \(k\)-th shifted rescaled Chebyshev polynomial.  By orthonormality of \(\tilde{\tau}_i\)'s at Chebyshev points \cite{Cheby-poly-book,boyd2001chebyshev}, we have
 
\begin{equation} \label{cheby-orthonormal}
    \displaystyle \sum_{k=0}^{n} \tilde{\tau}_i(x_k) \tilde{\tau}_j(x_k) = \delta_{ij}, \quad  0 \leq i , j \leq m,
\end{equation}
where \(x_k\) is the \(k\)-th  Chebyshev node in $[1,B]$. We define the vector  \(\mathbf{x}= (x_0, x_1, \cdots, x_n)\) to combine the \(n+1\) Chebyshev nodes on $[1,B] $. Now we can expand \(f\) using the Chebyshev series on the interval \([1,B]\)
\begin{equation} \label{eq:f_cheby_series}
      f(x) = \sum_{k=0}^{\infty} a_k \tilde{\tau}_k(x).
 \end{equation}
The coefficients $a_k$ can be expressed as the integrals of $f(x) \tilde{\tau}_k(x)$. Meanwhile, we can write our approximating polynomial in \eqref{eq:least_sq_statement} in terms of the Chebyshev series as well.
\begin{equation} \label{eq:least_sq_approx}
    p_m(x) =  \sum_{k = 0}^{m} c_k \tilde{\tau}_k(x),
\end{equation}
where \(c_k\) is the  $k$-th coordinate of \(\mathbf{c}\), the vector of coefficients that minimizes the \(\ell_2\)-norm in the least squares approximation \eqref{eq:least_sq_statement}. 

The least squares problem can be expressed in the matrix form. Let \(\tilde{V}(\mathbf{x})\) denote the \((n+1) \times (m+1)\) shifted rescaled Chebyshev-Vandermonde matrix \cite{Townsend},
\begin{equation}
    \tilde{V}(\mathbf{x}) = \begin{bmatrix}
        \tilde{\tau}_0(x_0) & \cdots & \tilde{\tau}_m(x_0) \\
        \cdots  & \ddots & \vdots \\
        \tilde{\tau}_0(x_n) & \cdots & \tilde{\tau}_m(x_n)
    \end{bmatrix}.
\end{equation}
Further we define \(\mathbf{y} := (f(x_0), f(x_1), \cdots, f(x_n))
 \) as the function values. Now the problem in \eqref{eq:least_sq_statement} is transformed to the following  algebraic form, 
\begin{equation}
        \displaystyle \min_{\mathbf{c}} \norm{ \tilde{V}(\mathbf{x}) \ \mathbf{c} - \mathbf{y}}_2^2.
\end{equation}

In particular, the normal equation reads, 
\begin{equation}
    \tilde{V}(\mathbf{x})^{T} \tilde{V}(\mathbf{x}) \mathbf{c} = 
    \tilde{V}(\mathbf{x})^{T} \mathbf{y} .
\end{equation}
 By orthonormality \eqref{cheby-orthonormal}, \(
\tilde{V}(\mathbf{x})^{T} \tilde{V}(\mathbf{x}) = I_{(m+1) \times (m+1)}
\). Therefore, \(
    \mathbf{c} = 
     \tilde{V}(\mathbf{x})^{T} \mathbf{y}
\), which we can write out explicitly,
\begin{equation}\label{c_expanded}
    \mathbf{c} = 
    \begin{pmatrix}
 \displaystyle \sum_{i=0}^{n} \tilde{\tau}_0(x_i) y_i \\
 \displaystyle \sum_{i=0}^{n} \tilde{\tau}_1(x_i) y_i \\
\vdots \\
 \displaystyle \sum_{i=0}^{n} \tilde{\tau}_m(x_i) y_i
\end{pmatrix}.
\end{equation}

To identify the degrees of the polynomial \(p_m\) that ensure the approximation error remains below \(\varepsilon\), we establish the following theorem,

\begin{theorem} \label{thm:bias_bound_lsa}
   Let \(f(x)\) be under condition \eqref{eq:gevrey}, for \(M < 1\), and let \(p_m(x)\) denote a polynomial of degree \(m \leq n\) obtained through the least squares approximation of \(f\), interpolating the function \( f \) at a set of  \(n\) Chebyshev nodes in the interval \([1,B], B > 1\). For every \(\varepsilon > 0\),  \(\abs{p_m(0) - f(0)} < \varepsilon\) holds, provided that  \( \kappa \leq M^{-\mu m/2n}\) for some \(\mu \in (0,1)\), and 
\begin{equation}\label{bound-m}
    m = \Omega \left( \frac{ \log (C^{'}/\varepsilon)
   }{(1-\mu) \log (1/M)} \right), 
\end{equation}
{where} $ C^{'} = \frac{2 (B-1) C M}{ \pi} (\frac{1}{1 -M \kappa^2} + \frac{1}{1 - M}). $ 
\end{theorem}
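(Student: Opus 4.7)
The plan is to split $p_m(0) - f(0)$ into a truncation error and an aliasing error, and bound each via standard Chebyshev machinery. Starting from the expansion $f(x) = \sum_{k\ge 0} a_k \tilde{\tau}_k(x)$ in \eqref{eq:f_cheby_series} and the discrete orthogonality \eqref{cheby-orthonormal}, I would invoke the classical aliasing identity at first-kind Chebyshev roots, $T_{2\ell(n+1)\pm k}(x_i) = (-1)^\ell T_k(x_i)$, to relate the least-squares coefficients in \eqref{c_expanded} to the continuous ones via $c_k - a_k = \sum_{\ell \ge 1}(-1)^\ell(a_{2\ell(n+1)-k} + a_{2\ell(n+1)+k})$. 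Substituting into the partial sum evaluated at $x=0$ then gives the clean decomposition
\[
p_m(0) - f(0) = \sum_{k=0}^{m}(c_k - a_k)\,\tilde{\tau}_k(0) \;-\; \sum_{k=m+1}^{\infty} a_k\,\tilde{\tau}_k(0),
\]
with the first term accounting for aliasing and the second for truncation.

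Two ingredients drive the bounds. First, because $\phi^{-1}(0) = -(B+1)/(B-1)$ lies outside $[-1,1]$, the identity $T_k(y) = \tfrac{1}{2}(z^k + z^{-k})$ with $z = y + \sqrt{y^2-1}$ of modulus $\kappa$ immediately yields $|\tilde{\tau}_k(0)| \le \sqrt{2/(n+1)}\,\kappa^k$. Second, the Gevrey condition $|f^{(k)}(x)| \le C M^k$ converts into a geometric decay of the Chebyshev coefficients by changing variables to $[-1,1]$ (which multiplies each derivative by $(B-1)/2$) and then iterating integration by parts in the integral representation of $a_k$; the resulting bound has the form $|a_k| \le \tfrac{2(B-1)CM}{\pi}\,M^k$ after matching normalizations, which supplies exactly the prefactor appearing in $C'$. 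With these two estimates, the truncation tail is geometric with ratio governed by $M\kappa$, and the hypothesis $\kappa \le M^{-\mu m/(2n)}$ (valid for $m \le n$) ensures $M\kappa \le M^{1-\mu/2}<1$, so summing gives the contribution $\tfrac{2(B-1)CM^{m+1}}{\pi(1-M)}$. The dominant $\ell = 1$ block of the aliasing sum is $\sum_{k=0}^{m} M^{2(n+1)-k}\kappa^k$; pairing the aliased coefficient $a_{2(n+1)-k}$ with the $\kappa^k$ growth of $\tilde{\tau}_k(0)$ produces a geometric series with effective ratio $M\kappa^2$, whose convergence is again guaranteed since $M\kappa^2 \le M^{1-\mu}<1$ under the same hypothesis, contributing $\tfrac{2(B-1)CM^{m+1}}{\pi(1-M\kappa^2)}$. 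Adding the two pieces yields $|p_m(0) - f(0)| \le C' M^m$ with $C'$ as stated.

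The final step is to impose $C' M^m \le \varepsilon$ and take logarithms. The factor $(1-\mu)^{-1}$ in the stated lower bound on $m$ emerges because the dominant geometric ratio of the combined error is $M\kappa^2$ and $(M\kappa^2)^m \le M^{m(1-\mu)}$ under the constraint on $\kappa$, so $m(1-\mu)\log(1/M) \ge \log(C'/\varepsilon)$ suffices. The main obstacle I expect is deriving the Chebyshev coefficient estimate cleanly from the Gevrey condition alone (without assuming analyticity on a fixed Bernstein ellipse) and reconciling the discrete versus continuous normalizations of $\tilde{\tau}_k$ so that the $(2/\pi)(B-1)CM$ prefactor in $C'$ comes out exactly; once that is secured, the aliasing identity and the geometric summations are routine consequences of the discrete orthogonality of Chebyshev polynomials.
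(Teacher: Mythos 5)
Your proposal is correct and follows essentially the same route as the paper: a Chebyshev expansion of $f$ on $[1,B]$, the aliasing relation at the first-kind Chebyshev roots (the paper cites Trefethen's Theorems 4.1--4.2 rather than writing the identity explicitly), geometric decay $|a_k|\lesssim M^k$ obtained from the Gevrey bound via bounded variation, growth of $\tilde{\tau}_k(0)$ controlled by powers of $\kappa$, and the hypothesis $\kappa\le M^{-\mu m/2n}$ to absorb those powers into $M^{-\mu m}$, yielding a bound of the form $C'M^{m(1-\mu)}$. Your estimate $|\tilde{T}_k(0)|\le\kappa^{k}$ is in fact sharper than the paper's $\kappa^{2k}$ and only improves the constants, so the argument goes through as in the paper.
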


See \cref{proof:bias_bound_lsa} for the detailed proof. 

We note that  in Theorem \ref{thm:Cheby_error_theorem}, \(\kappa^2 \leq \frac{4}{(B-1)Me}\). However, as \(B \to \infty\), \(\kappa^2 \to 1\), and \(\frac{4}{(B-1)Me} \to 0\), from which we conclude that the condition regarding $M$ in Theorem \ref{thm:Cheby_error_theorem} is more restrictive than in Theorem \ref{thm:bias_bound_lsa}.

\medskip 

The above theorem completes our analysis of the error (bias) in the least squares method. We once again return to variance analysis. We first note that
\(
    c_k  = \displaystyle \sum_{i=0}^{n} \tilde{\tau}_k(x_i) y_i
\)
in
\(
    p_{m}(0) = \displaystyle \sum_{k = 0}^{m} c_k \tilde{\tau}_k(0)
\). Combining them together, we find that \(\displaystyle p_{m}(0) = \sum_{i = 0}^{n} y_i \sum_{k=0}^{m} \tilde{\tau}_k(x_i) \tilde{\tau}_k(0) = \sum_{i=0}^n \gamma_i y_i,
\)
where,
\begin{equation}\label{tau2gamma}
    \gamma_i :=  \displaystyle \sum_{k=0}^{m} \tilde{\tau}_k(x_i) \tilde{\tau}_k(0).
\end{equation}
Consequently,  $p_m(0)$ is expressed as a linear combination of  \(f(x_i)\). Estimating the statistical error is again reduced to 
 estimating \(  \displaystyle \norm{\bm \gamma}_1=\sum_{i = 0}^{n}  \abs{\gamma_i}
\).

\begin{theorem} \label{thm:LSA_gamma_sum}
     Let $\{x_j\}_{j = 0}^{n}$ be Chebyshev nodes in the interval \([1, B]\), \( B > 1 \),  \(\{\tilde{\tau}_k(x)\}_{k=0}^{m}\) be shifted normalized Chebyshev polynomials of the first kind on \([1,B]\), and \(  \displaystyle \norm{\bm \gamma}_1=\sum_{i = 0}^{n}  \abs{\gamma_i}
\), where \( \gamma_i :=  \displaystyle \sum_{k=0}^{m} \tilde{\tau}_k(x_i) \tilde{\tau}_k(0)\). Then,
    \begin{equation}
        \norm{\bm \gamma}_1= O\left(\kappa^{2m}\right). 
    \end{equation}
\end{theorem}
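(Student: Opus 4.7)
The plan is to bound $|\gamma_i|$ pointwise by controlling the two Chebyshev factors separately, then sum over $i$. For any Chebyshev node $x_i \in [1,B]$, the shifted argument $\phi^{-1}(x_i)$ lies in $[-1,1]$, where $|T_k|\le 1$; hence $|\tilde{\tau}_k(x_i)| \le \sqrt{2/(n+1)}$. The nontrivial ingredient is an estimate for $|\tilde{\tau}_k(0)|$, since $0 \notin [1,B]$ and the Chebyshev polynomial is being evaluated outside its interval of uniform boundedness.

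To control $\tilde{\tau}_k(0)$, I would invoke the Joukowski representation of $T_k$ outside $[-1,1]$: writing $y=\tfrac{1}{2}(z+z^{-1})$ gives $T_k(y)=\tfrac{1}{2}(z^k+z^{-k})$. Applied at $y=\phi^{-1}(0)=-(B+1)/(B-1)$, the quadratic $z^2 + \tfrac{2(B+1)}{B-1}z + 1 = 0$ factors cleanly, yielding the two roots
\[
z_\pm \;=\; -\tfrac{(\sqrt{B}\pm 1)^2}{B-1} \;=\; -\kappa^{\pm 1},
\]
using the definition of $\kappa$ in \eqref{def:kappa}. Consequently $T_k(\phi^{-1}(0)) = \tfrac{(-1)^k}{2}(\kappa^k + \kappa^{-k})$, so $|\tilde{\tau}_k(0)|\le \sqrt{2/(n+1)}\,\kappa^k$ for $k\ge 1$ (with an even tighter $\sqrt{1/(n+1)}$ at $k=0$).

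With these two ingredients in place, I would finish by the triangle inequality and a geometric-series sum:
\[
\sum_{i=0}^{n}|\gamma_i|
\;\le\; \sum_{k=0}^{m} |\tilde{\tau}_k(0)| \sum_{i=0}^{n}|\tilde{\tau}_k(x_i)|
\;\le\; \tfrac{2}{n+1}\,(n+1)\sum_{k=0}^{m}\kappa^k
\;=\; O\!\left(\tfrac{\kappa^{m+1}}{\kappa-1}\right),
\]
which is $O(\kappa^m)$, and therefore also $O(\kappa^{2m})$ for $\kappa\ge 1$. (Alternatively, one may first apply Cauchy–Schwarz in $i$ combined with the discrete orthonormality \eqref{cheby-orthonormal} to reduce $\sum_i |\tilde{\tau}_k(x_i)|\le \sqrt{n+1}$; this gives the same bound.)

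The main technical step, and the only one requiring care, is the Joukowski computation at $\phi^{-1}(0)$: the algebra is short but sign-sensitive, and the clean identification of the two preimages as exactly $-\kappa^{\pm 1}$ is what makes the $\kappa^k$ growth rate appear naturally. Everything else is routine estimation using the standard boundedness and normalization of Chebyshev polynomials.
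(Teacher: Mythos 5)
Your proof is correct and follows the same skeleton as the paper's: bound \(\lvert\tilde{\tau}_k(x_i)\rvert\) by the normalization \(\sqrt{2/(n+1)}\), bound \(\lvert\tilde{\tau}_k(0)\rvert\) by the growth of Chebyshev polynomials outside \([-1,1]\), apply the triangle inequality to the double sum, and finish with a geometric series. The one genuine difference is the key estimate at \(x=0\): the paper reuses its earlier bound \eqref{T0_bound}, \(\lvert\tilde{T}_k(0)\rvert\le\kappa^{2k}\), whereas your Joukowski computation is exact --- the preimages of \(\phi^{-1}(0)=-(B+1)/(B-1)\) under \(y=\tfrac12(z+z^{-1})\) are indeed \(-\kappa^{\pm1}\) with \(\kappa\) as in \eqref{def:kappa}, so \(\lvert T_k(\phi^{-1}(0))\rvert=\tfrac12(\kappa^k+\kappa^{-k})\le\kappa^k\). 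This is tighter by a square in the exponent, and your argument therefore yields \(\norm{\bm\gamma}_1=O(\kappa^{m})\), strictly stronger than the stated \(O(\kappa^{2m})\) (which it trivially implies since \(\kappa>1\)). If carried through, this sharper bound would also improve the exponent of \(\kappa\) in the sampling-complexity estimate of Theorem \ref{thm:Hoeffding_lsa_thm}. The only point to keep an eye on is the constant \(\kappa/(\kappa-1)\) hidden in your geometric-series step, which degrades as \(B\to\infty\) (\(\kappa\to1^+\)); the paper's own proof has the same feature, so this is consistent with its conventions of treating \(\kappa\) as fixed.
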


The proof of the above theorem can be found in \cref{proof:LSA_gamma_sum}. By comparing the bound for \( \norm{\bm \gamma}_1\) in \cref{thm:Cheby_gamma_sum_bound} and \cref{thm:LSA_gamma_sum}, we see that the least squares approximation method results in a smaller variance because $m<n$, which 
could reduce the statistical error and overfitting.

With the prior analysis of the bias and variance in the least squares method, we present the following theorem that calculates the number of shots needed to keep the bias below the threshold \(\varepsilon\).

\begin{theorem} \label{thm:Hoeffding_lsa_thm}
Let \(f\), and \(p_{n}(x)\) be as in \cref{eq:f-definition} and \cref{eq:least_sq_approx} respectively.  Under assumption \eqref{eq:gevrey}, \( \forall\, 0 < \varepsilon , 0 < \delta < 1\), the error bound \(\abs{\hat{p}_n(0) - f(0)} < \varepsilon\) holds with probability at least \(1 - \delta\)  if  \( \kappa \leq M^{-\mu m/2n}\) for \(\mu \in (0,1)\), \(m = \Omega \left( \frac{ \log (C^{'}/\varepsilon)
   }{(1-\mu) \log (1/M)} \right)\), and the number of samples \( N_S \) satisfies the following condition:
\[
N_S = \Omega \left((2 \alpha^2 \kappa^{\log C'/(1-\mu)} \log \frac{2}{\delta}) \varepsilon^{-2 -\log \kappa/(1-\mu)}\right),
\]
 where \(C^{'} = \frac{2 (B-1) C M}{ \pi} (\frac{1}{1 -M \kappa^2} + \frac{1}{1 - M}) \).
\end{theorem}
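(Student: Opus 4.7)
}
The plan is to follow the same bias–variance decomposition used in the proof of Theorem~\ref{thm:Hoeffding_Richardson}, but with the least-squares bias and variance estimates from Theorem~\ref{thm:bias_bound_lsa} and Theorem~\ref{thm:LSA_gamma_sum} in place of the Richardson analogues. Concretely, I would begin from the triangle inequality
\[
\bigl|\hat{p}_n(0)-f(0)\bigr| \;\le\; \bigl|\hat{p}_n(0)-p_m(0)\bigr| \;+\; \bigl|p_m(0)-f(0)\bigr|,
\]
and bound each term by $\varepsilon/2$. The second (deterministic) term is handled immediately by Theorem~\ref{thm:bias_bound_lsa}: under the hypotheses $\kappa\le M^{-\mu m/(2n)}$ and $m=\Omega\!\bigl(\log(C'/\varepsilon)/((1-\mu)\log(1/M))\bigr)$, the bias is at most $\varepsilon/2$ after absorbing the factor of $2$ into the $\Omega$-constant.

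For the statistical term, I would use the representation $\hat{p}_n(0)=\sum_{i=0}^{n}\gamma_i\,\hat{f}(x_i)$ with $\gamma_i$ as in \eqref{tau2gamma}, and write
\[
\hat{p}_n(0)-p_m(0) \;=\; \tfrac{1}{N_S}\sum_{j=1}^{N_S}\bigl(X_j-\mathbb{E}[X_j]\bigr),
\qquad X_j:=\sum_{i=0}^{n}\gamma_i\,f^{(j)}(x_i),
\]
since $\mathbb{E}[X_j]=\sum_i\gamma_i f(x_i)=p_m(0)$. Because the single-shot estimator $f^{(j)}(x_i)$ is the expectation of the bounded observable $A$, with $\|A\|_\infty\le\alpha$, one has $|X_j|\le\alpha\,\|\bm\gamma\|_1$, so each $X_j$ takes values in an interval of length at most $2\alpha\,\|\bm\gamma\|_1$. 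Hoeffding's inequality then yields
\[
\Pr\!\left[\,\bigl|\hat{p}_n(0)-p_m(0)\bigr| \ge \tfrac{\varepsilon}{2}\,\right]
\;\le\; 2\exp\!\left(-\frac{N_S\,\varepsilon^{2}}{8\alpha^{2}\,\|\bm\gamma\|_1^{2}}\right),
\]
and requiring this to be at most $\delta$ gives $N_S=\Omega\!\bigl(\alpha^{2}\|\bm\gamma\|_1^{2}\,\varepsilon^{-2}\log(2/\delta)\bigr)$.

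The final step is to translate $\|\bm\gamma\|_1$ into the claimed exponent structure. Here I would invoke Theorem~\ref{thm:LSA_gamma_sum} to get $\|\bm\gamma\|_1=O(\kappa^{2m})$, so $\|\bm\gamma\|_1^{2}=O(\kappa^{4m})$, and then substitute the lower bound on $m$ from the hypothesis. Writing $\kappa^{2m}=\exp(2m\log\kappa)$ and using $m=\Theta\!\bigl(\log(C'/\varepsilon)/((1-\mu)\log(1/M))\bigr)$ together with the structural constraint $\kappa\le M^{-\mu m/(2n)}$ (which upper-bounds $\log\kappa$ by a multiple of $\log(1/M)$) lets one rewrite $\kappa^{4m}$ in the product form $\kappa^{\log C'/(1-\mu)}\,\varepsilon^{-\log\kappa/(1-\mu)}=(C'/\varepsilon)^{\log\kappa/(1-\mu)}$. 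Combining, the stated complexity
\[
N_S=\Omega\!\left(\alpha^{2}\,\kappa^{\log C'/(1-\mu)}\,\log\!\tfrac{2}{\delta}\;\varepsilon^{-2-\log\kappa/(1-\mu)}\right)
\]
follows. The main obstacle is precisely this last algebraic step: carefully tracking how the condition $\kappa\le M^{-\mu m/(2n)}$ balances the dependence on $M$ inside $m$ against the $\kappa^{2m}$ growth of $\|\bm\gamma\|_1$, so that the resulting exponents collapse to the clean form in the theorem without stray factors of $\log(1/M)$. The rest is just constant bookkeeping and applying the union-of-events style argument used in Theorem~\ref{thm:Hoeffding_Richardson}.
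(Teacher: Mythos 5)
Your proposal follows essentially the same route as the paper's proof: the bias is controlled by Theorem~\ref{thm:bias_bound_lsa} under the stated hypotheses on $\kappa$ and $m$, the coefficient sum $\norm{\bm\gamma}_1=O(\kappa^{2m})$ comes from Theorem~\ref{thm:LSA_gamma_sum}, Hoeffding's inequality is applied to the sum of bounded single-shot variables with range proportional to $\alpha\norm{\bm\gamma}_1$, and the sampling bound is obtained by substituting the lower bound on $m$. The one substantive difference is that you correctly carry $\norm{\bm\gamma}_1^{2}=O(\kappa^{4m})$ into the Hoeffding exponent, whereas the paper's proof replaces $\norm{\bm\gamma}_1^{2}$ by $\kappa^{2m}$ at that step (which is what yields the exponents as stated in the theorem), so your more careful bookkeeping in the final algebraic step — which you rightly flag as the delicate part — would produce an extra factor of two in the $\log\kappa$-dependent exponent relative to the theorem's statement.
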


For the complete proof, refer to \cref{proof:Hoeffding_lsa_thm}. This concludes our analysis 
of polynomial approximations for ZNE.

\section{Incorporating the Trotter Error} \label{sec:Trotter_error}
The analysis in the preceding sections has primarily focused on mitigating circuit errors 
(also known as physical errors \cite{EndoMitigAlgoError}). Another intriguing direction 
is to employ the same strategy to reduce numerical errors in a given algorithm, referred to 
as algorithmic errors. This approach has been explored in \cite{ImprovedAccuracyforTrotterSim,watson2024exponentially}. 
A key application is the Trotter-Suzuki decomposition in Hamiltonian simulations.

More specifically, the Trotter-Suzuki decomposition (also known as the Trotter splitting method) 
is a widely used technique for approximating the time evolution operator 
\(\displaystyle U(t) = e^{-iHt}\), where \(H\) is a Hamiltonian composed of multiple terms, 
\begin{equation}
\displaystyle H = \sum_{k=1}^{K} H_k ,
\end{equation}
each of which is simpler to simulate individually. 
A straightforward way to implement this decomposition is the first-order Trotterization, 
in which the time evolution operator over a time \(t\) is approximated using small time steps of size \(\tau\) as:
\begin{equation}
  e^{-iHt} \;\approx\; S_1(t) \;:=\; \Bigl(\prod_{k=1}^{K} e^{-iH_k \tau}\Bigr)^N,
\end{equation}
where \(\tau = t / N\) is the time step size, and \(N\) is the number of Trotter steps used 
to evolve over the total simulation time \(t\).
Although straightforward, the first-order decomposition introduces an approximation error that scales with \(\tau^2\) at each step. Consequently, many higher-order Trotter-Suzuki expansions have been developed to achieve more accurate simulations \cite{Childs_Trotter_Theory_2021}. For example, the symmetric second-order Trotterization uses a symmetric splitting
\begin{equation}\label{symmetric-trotter}
\begin{aligned}
     e^{-iHt} \approx & S_2(t) := \Big( e^{-iH_1\tau/2} \dots e^{-iH_{K-1}\tau/2} \\ 
     & \times e^{-iH_K \tau} e^{-iH_{K-1} \tau/2} \dots e^{-iH_1 \tau/2} \Big)^N,
\end{aligned}
\end{equation}
leading to a reduced error that scales with \(\tau^3\) at each step. More generally, a \(p\)-th order Trotter-Suzuki decomposition approximates \(e^{-iHt \,}\) with a {Trotter error} that scales as \( \mathcal{O}(\tau^{p+1}) \).

Increasing the order of the decomposition lowers the error, 
but at the expense of a quantum circuit whose length scales with the decomposition complexity, thereby exacerbating gate errors (physical errors). While the algorithmic error is theoretically well-characterized \cite{Childs_Trotter_Theory_2021}, its interplay with physical noise during quantum computation has received less attention.   This introduces a subtle interplay between algorithmic and physical errors: using a finer time step size \(\tau\) or higher-order methods can reduce Trotter error but at the risk of higher physical error from deeper circuits.

In this section, we incorporate Trotter error into the error mitigation analysis by systematically evaluating the combined bias. Building on this analysis, we then determine the required number of samples to ensure that the total error remains below a chosen threshold \(\varepsilon\).

Recall that in Hamiltonian simulation, the ideal quantum dynamics follows the time-dependent Schr\"odinger equation, or equivalently, the Liouville von Neumann equation,
\begin{equation}  \label{eq: ideal}
    \frac{\partial}{\partial t} \rho (t) =  -i [H(t), \rho(t)]. 
\end{equation}
Meanwhile, when approximated by the symmetric Trotter splitting method of stepsize $	\tau$ in \cref{symmetric-trotter}, the approximate solution can be associated with a solution of  
 a modified equation, where the Hamiltonian is expressed as even powers of $\tau$ \cite{hairer2006geometric}, 
 \begin{equation}\label{eq:modified-tdse}
     \begin{aligned}  
    \frac{\partial}{\partial t} \rho (t, \tau) & =  -i [H, \rho(t, \tau)] -i 	\tau^2   [\widetilde{H}_2, \rho(t, 	\tau)]\\
    & -i \tau^4   [\widetilde{H}_4, \rho(t, \tau)] 
    + \cdots.  
    \end{aligned}
 \end{equation}

The additional terms, e.g., $\widetilde{H}_2$,  in the modified Hamiltonian, involve the commutators of the individual Hamiltonian terms in $H$.

To simultaneously include the circuit noise, we consider a modified 
Lindblad equation by adding the dissipation term \(\lambda\). Now the density matrix \(\rho (t, \tau, \lambda)\) depends on the noise term \(\lambda\) as well,
\begin{equation}  \label{eq: modified-lindblad}
\begin{aligned}
        &\frac{\partial}{\partial t} \rho (t, \tau, \lambda)   \\
        & =  -i [H, \rho(t, \tau, \lambda)] + \lambda \mathcal{L}(\rho(t, \tau, \lambda)) \\ 
        &-i \tau^2   [\widetilde{H}_2, \rho(t, \tau, \lambda)] -i \tau^4   [\widetilde{H}_4, \rho(t, \tau, \lambda)] - \cdots   \\
        & =  -i [H, \rho(t, \tau, \lambda)] + \lambda \mathcal{L}(\rho(t, \tau, \lambda)) \\ & - i \sum_{\ell\geq 1} \tau^{2\ell}   [\widetilde{H}_{2\ell}, \rho(t, \tau, \lambda)].
\end{aligned}
\end{equation}
A similar expansion for \(\rho(t,\tau)\) has been derived previously \cite{watson2024randomly}. However, the expansion above considers both the Trotter step size \(\tau\), and the noise level \(\lambda\).\\
Rather than generating samples in a two-dimensional parameter space, we simultaneously scale the physical noise, and the step size, reducing the problem to a one-dimensional extrapolation of the expectation value of an observable.\\
In light of the scaling of the terms, we make a simple choice for the circuit noise, i.e., $\lambda = O( \tau^2 ) $ then we can define \(\rho(t, \lambda):= \rho (t, \sqrt{\lambda}, \lambda)\), and examine the derivatives,
\begin{equation}\label{Lamk}
    \Lambda_k (t) := \frac{\partial^k}{\partial \lambda^k} \rho (t, \lambda) , \,\; k \geq 1.
\end{equation}

For convenience, we write the modified equation \eqref{eq: modified-lindblad} as 
\(\frac{\partial}{\partial t} \rho (t, \lambda)= \mathcal{L}  \rho (t, \lambda), \)
with the operator $\mathcal{L}$ expanded in terms of the powers of $\lambda$,
\begin{equation}\label{eq:L-expand}
    \mathcal{L} = \mathcal{L}_0 + \lambda \mathcal{L}_1 + \cdots + \lambda^k \mathcal{L}_k + \cdots. 
\end{equation}

To derive an explicit error bound, we make the assumption that 
\begin{equation}\label{Lj-bound}
    \norm{\mathcal{L}_j } \leq \theta^j, \quad \forall\, j\geq 0. 
\end{equation}
This is slightly different from the assumption in \cite[Lemma 2]{watson2024exponentially}.

Under this condition, we find that 
the bias can be controlled by properly increasing the number of nodes, as summarized below,
\begin{theorem} \label{thm:trotter_error}
     Let \(\rho\) be the density matrix satisfying the  modified Lindblad equation \eqref{eq: modified-lindblad}, and $f(\lambda) := \tr(\rho(t, \lambda) A)$. Also, let \(p_n(x)\) denote the polynomial interpolant of \(f\), where \(f\) and \(p_n\) coincide on Chebyshev nodes \(\{x_i\}_{i=0}^{n}\) in the interval \([1,B]\), \(B > 1\).
     If \(n = \Omega \left(\log \varepsilon/\log (\frac{(B-1) e \kappa^2 \theta}{4 (1 - \lambda \theta)})\right)\), then \(\abs{f(0) - p_n(0)} < \varepsilon\).
\end{theorem}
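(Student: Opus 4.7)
The plan is to replicate the template of Theorem \ref{thm:Cheby_error_theorem}: apply the polynomial interpolation error formula \eqref{eq:interp_error} together with the Chebyshev product identity \eqref{roots_Chebyshev_equivalence}, then solve the resulting inequality for $n$. The one new ingredient is a Gevrey-type derivative bound on $f(\lambda) := \tr(\rho(t,\lambda)A)$, which here must be derived from the operator structure \eqref{eq:L-expand}--\eqref{Lj-bound} rather than postulated as in \eqref{eq:gevrey}.

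First I would use the expansion $\mathcal{L}(\lambda) = \sum_{j} \lambda^j \mathcal{L}_j$ with $\|\mathcal{L}_j\| \leq \theta^j$, together with the generating-function identity $\sum_{i\geq 0}\binom{i+k}{k}x^{i} = (1-x)^{-(k+1)}$, to obtain the operator-level estimate
\begin{equation*}
    \|\partial_\lambda^{k}\mathcal{L}(\lambda)\| \leq \frac{k!\,\theta^{k}}{(1-\lambda\theta)^{k+1}}, \qquad \lambda\theta < 1.
\end{equation*}
Next, I would differentiate the modified Lindblad equation $\partial_t\rho = \mathcal{L}(\lambda)\rho$ in $\lambda$, obtaining the recursion
\begin{equation*}
    \partial_t \Lambda_k = \sum_{j=0}^{k}\binom{k}{j}\bigl(\partial_\lambda^{k-j}\mathcal{L}\bigr)\Lambda_j, \qquad \Lambda_k(0) = 0 \text{ for } k\geq 1,
\end{equation*}
and solve it inductively through Duhamel's formula and a Gronwall-type argument. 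This yields a Gevrey bound of the form $|f^{(k)}(\lambda)| = |\tr(\Lambda_k(t)A)| \leq \alpha \, k!\,(\theta/(1-\lambda\theta))^{k}$ up to a time-dependent prefactor; in effect, $f$ satisfies a Gevrey-$1$ condition with effective parameter $\theta/(1-\lambda\theta)$.

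Finally, I would substitute this derivative bound into the interpolation error formula \eqref{eq:interp_error}, apply \eqref{roots_Chebyshev_equivalence} at $x=0$ with the elementary estimate $|\widetilde{T}_{n+1}(0)| \leq \kappa^{n+1}$ (which follows from $T_{n+1}(\cosh\vartheta) = \cosh((n+1)\vartheta)$ evaluated at $\cosh\vartheta = (B+1)/(B-1)$, giving $e^{\vartheta}=\kappa$), and invoke Stirling's inequality $(n+1)! \geq ((n+1)/e)^{n+1}$ to absorb the factorial. Consolidating constants produces a bound of the form
\begin{equation*}
    |f(0) - p_n(0)| \leq \mathrm{const}\cdot\left(\frac{(B-1)\,e\,\kappa^{2}\,\theta}{4\,(1-\lambda\theta)}\right)^{n+1}.
\end{equation*}
Imposing this to be less than $\varepsilon$ and solving for $n$ delivers the stated threshold $n = \Omega(\log\varepsilon / \log((B-1)e\kappa^{2}\theta/(4(1-\lambda\theta))))$, where the base of the logarithm is less than $1$ under the implicit smallness condition on $\theta$.

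The main obstacle is the middle step: carefully solving the $\lambda$-derivative recursion for $\Lambda_k$ so that the $(1-\lambda\theta)$ denominators accumulate in the correct power, while the combinatorial factors $k!$ combine cleanly with the geometric factor $\theta^{k}$. Unlike the purely algorithmic setting of \cite{watson2024exponentially}, the physical noise parameter here is strictly positive and the scaling factor cannot drop below $1$, so one cannot reduce to the case $\lambda=0$ in intermediate estimates; the explicit $\lambda$-dependence must be tracked throughout the nested Duhamel integrals.
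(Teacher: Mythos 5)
Your proposal follows essentially the same route as the paper: the Cauchy remainder \eqref{eq:interp_error}, the Chebyshev product bound \eqref{roots_Chebyshev_equivalence} at $x=0$, a Gevrey-$1$ derivative bound obtained by differentiating the modified Lindblad equation in $\lambda$ and iterating Duhamel (this is exactly the content of the paper's Lemmas \ref{lemma:trotter_derivative}--\ref{lemma:trotter_derivative_bound}), and Stirling to absorb the factorial. The only discrepancy is that the paper's bookkeeping yields $\abs{f^{(k)}(\lambda)} \le C\norm{A}\,k^{k}\,\bigl(\theta/(1-\lambda\theta)\bigr)^{k}$ rather than your more optimistic $k!$-type bound; since $k^{k}\approx e^{k}k!$, the extra $e^{k}$ is precisely what the factor $e$ in the base $\frac{(B-1)e\kappa^{2}\theta}{4(1-\lambda\theta)}$ absorbs, so your final display and the stated threshold for $n$ are unaffected either way.
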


For the proof, we refer readers to \cref{proof:trotter_error}. It is worthwhile to mention that the resulting function $f$ no longer fulfills the condition \cref{eq:gevrey}. Instead, the bound for the $n$-th order derivative involves an additional factor $n!$. Thus it belongs to a Gevrey class with $\sigma=1$ (see \cref{gevrey-1}).    Now we can find a bound for the sampling complexity.

\begin{theorem} \label{thm:joint_Hoeffding}
    Let \(f(x)\), and \(p_n(x)\) be as in Theorem \ref{thm:trotter_error}, then, \(\abs{p_n(0) - f(0)} < \varepsilon\) with probability at least \(1 - \delta\) for \(0 < \varepsilon < e^{-e}, 0 < \delta < 1\), if \(n = \Omega \left(\log \varepsilon/\log (\frac{(B-1) e \kappa^2 \theta}{4 (1 - \lambda \theta)})\right)\), and 
    \[
    N_S = \Omega\left(\alpha^2  \varepsilon^{-(2+4 \log \kappa)}\log\frac{2}{\delta}\right).
    \]
\end{theorem}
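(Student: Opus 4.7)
The plan is to mirror the bias--variance decomposition used to prove Theorem \ref{thm:Hoeffding_Richardson}, but with the deterministic bias now controlled by the joint Trotter--physical analysis of Theorem \ref{thm:trotter_error} rather than by Theorem \ref{thm:Cheby_error_theorem}. I would split
\[
\hat{p}_n(0)-f(0)=\bigl(\hat{p}_n(0)-p_n(0)\bigr)+\bigl(p_n(0)-f(0)\bigr)
\]
and aim to bound each piece by $\varepsilon/2$ on a single event of probability at least $1-\delta$, combining the estimates by a union bound.

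For the deterministic piece, I would directly invoke Theorem \ref{thm:trotter_error}: choosing $n$ according to the logarithmic scaling stated there already guarantees $|p_n(0)-f(0)|<\varepsilon/2$. Because the denominator $\log\!\bigl(\tfrac{(B-1)e\kappa^2\theta}{4(1-\lambda\theta)}\bigr)$ is a constant independent of $\varepsilon$, this choice gives $n=\Theta(\log(1/\varepsilon))$, which is exactly what I need to turn the $\|\bm\gamma\|_1$ estimate into a power of $\varepsilon$ in the variance step.

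For the statistical fluctuation, I would write
\[
\hat{p}_n(0)-p_n(0)=\sum_{j=0}^{n}\gamma_j\bigl(\hat{f}(x_j)-f(x_j)\bigr),
\]
where each $\hat{f}(x_j)$ is the sample mean of $N_S$ independent measurements of the observable $A$, each taking values in $[-\alpha,\alpha]$ since $\|A\|_\infty\le\alpha$. Applying Hoeffding's inequality to this weighted sum of bounded independent random variables gives
\[
\Pr\!\Bigl[\,|\hat{p}_n(0)-p_n(0)|>\tfrac{\varepsilon}{2}\,\Bigr]\;\le\;2\exp\!\Bigl(-\tfrac{N_S\,\varepsilon^2}{8\,\alpha^2\,\|\bm\gamma\|_1^2}\Bigr).
\]
Since the same Chebyshev nodes on $[1,B]$ are used, Theorem \ref{thm:Cheby_gamma_sum_bound} applies verbatim and yields $\|\bm\gamma\|_1=O(\kappa^{2n})$. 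Substituting $n=\Theta(\log(1/\varepsilon))$ gives $\|\bm\gamma\|_1^2=O(\kappa^{4n})=O(\varepsilon^{-4\log\kappa})$. Setting the right-hand side of the Hoeffding bound equal to $\delta$ and solving for $N_S$ reproduces the stated complexity $N_S=\Omega\bigl(\alpha^2\,\varepsilon^{-(2+4\log\kappa)}\log(2/\delta)\bigr)$.

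The only conceptual subtlety, and the main obstacle to keep track of, is ensuring that Theorem \ref{thm:Cheby_gamma_sum_bound} is truly transferable to this setting: the function $f(\lambda)=\operatorname{Tr}(\rho(t,\lambda)A)$ now comes from the modified Lindblad equation \eqref{eq: modified-lindblad} and sits in a Gevrey class with $\sigma=1$ rather than $\sigma=0$. However, this change affects only the derivative bounds, and hence only the bias, which is precisely why Theorem \ref{thm:trotter_error} had to be proved separately; the $\ell_1$-norm of the Lagrange weights $\gamma_j$ is a purely geometric quantity of the node set and is therefore unaffected. Once this observation is in place, the proof reduces cleanly to a combination of Theorem \ref{thm:trotter_error}, Theorem \ref{thm:Cheby_gamma_sum_bound}, and a standard Hoeffding argument.
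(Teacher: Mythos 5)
Your proposal is correct and follows essentially the same route as the paper: the paper's proof simply replaces the bias bound of Theorem \ref{thm:Cheby_error_theorem} with Theorem \ref{thm:trotter_error} inside the argument of Theorem \ref{thm:Hoeffding_Richardson}, notes that the stated $n$ is $\Omega(\log(1/\varepsilon))$ since the logarithm in the denominator is a negative constant, and reuses the Chebyshev bound $\norm{\bm\gamma}_1 = O(\kappa^{2n})$ together with Hoeffding's inequality. Your explicit $\varepsilon/2$ bias--variance split and your remark that $\norm{\bm\gamma}_1$ depends only on the node geometry (not on the Gevrey class of $f$) are both sound and, if anything, slightly more careful than the paper's own write-up.
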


\begin{proof}
    The proof is the same as the proof of Theorem \ref{thm:Hoeffding_Richardson}, with the exception that, instead of Theorem \ref{thm:Cheby_error_theorem}, we use Theorem \ref{thm:trotter_error}. Note that  \(n \geq \log \frac{1}{\varepsilon}\) because \( \frac{(B-1) e \kappa^2 \theta}{4 (1 - \lambda \theta)} < 0\). 
\end{proof}

\section{Numerical Experiments}\label{sec:Numerics}

In this section, we study the performance of some ZNE methods by numerically testing them on a transverse-field Ising model (TFIM). The TFIM is widely used in quantum computing and condensed matter physics to investigate quantum phase transitions and entanglement properties, owing to the rich interplay between inter-qubit interactions and an external transverse field. 
Specifically,  the Hamiltonian in our study is given by
\begin{equation}
    H = - J \sum_{i=0}^{L-2} Z_i Z_{i+1} \;-\; h \sum_{i=0}^{L-1} X_i,
\end{equation}
where \(X_i\) and \(Z_i\) are the spin-\(\tfrac{1}{2}\) Pauli \(X\) and \(Z\) operators, respectively, acting on the \(i\)-th qubit. The first term, \(-J \sum_{i=0}^{L-2} Z_i Z_{i+1}\), represents the interaction between neighboring qubits, promoting alignment of their spins along the \(Z\)-axis. The second term, \(-h \sum_{i=0}^{L-1} X_i\), accounts for the effect of a transverse magnetic field, which flips qubits between the \(\ket{0} \) and \(\ket{1} \) states. 

 In implementing the ZNE methods, we employ a five-qubit system (\(L = 5\)) with coupling and transverse field strengths set to \(J = 0.2\), and \(h = 1\). The qubits are initialized in the product state \(\ket{\psi(0)} = \ket{0}^{\otimes L}\). After evolving the system until time \(t\) under the TFIM Hamiltonian, we measure the observable \(A = X_1\), which corresponds to applying the Pauli-\(X\) operator on the first qubit.

To simulate the circuit noise, we implement the commonly used \emph{depolarizing channel}, which transforms an \(L\)-qubit density matrix \(\rho\) into 
\(
\Delta_\lambda(\rho) = (1 - \lambda)\rho + \lambda \,\frac{1}{2^L} I,
\)
where \(I\) is the identity operator on \(n\) qubits, and complete positivity requires \(0 \leq \lambda \leq 1 + \tfrac{1}{4^L - 1}\). We set the initial noise parameter to \(\lambda_0 = 0.02\) in our numerical experiments. 

All simulations reported in this section are performed using the \(\text{QasmSimulator}\) from IBM’s Qiskit \cite{Qiskit}, which classically simulates the dynamics of the TFIM Hamiltonian as a quantum circuit, under the settings mentioned above.

\begin{figure}[!htbp] 
    \centering
    \includegraphics[scale=0.25]{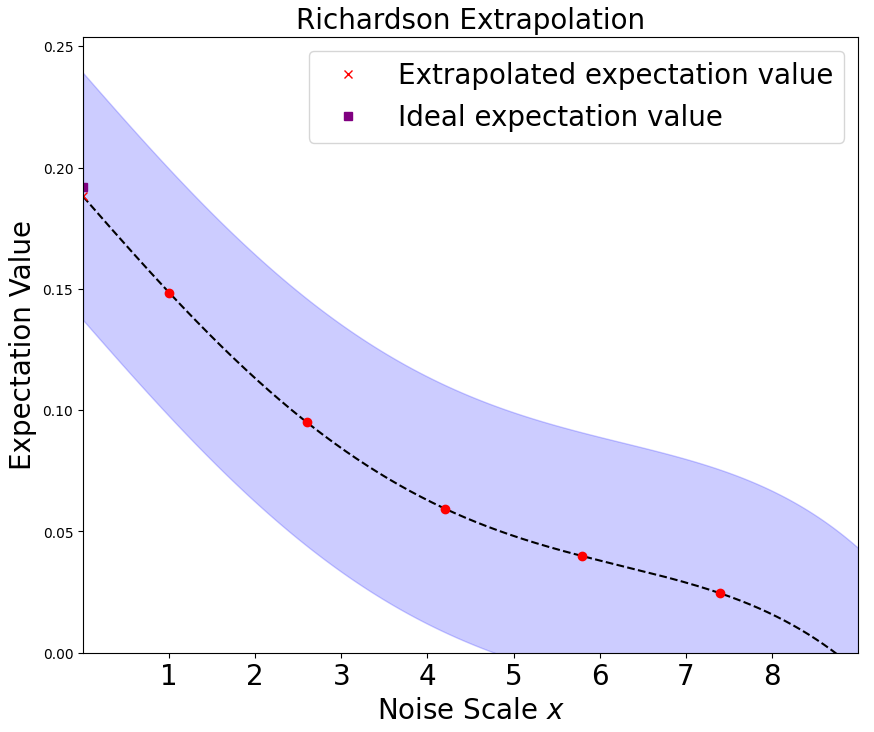}
    \caption{A simple illustration of Richardson's extrapolation method for quantum error mitigation using equidistant noise scales. The ideal zero-noise expectation value is \(0.191826\). In comparison, the extrapolated zero-noise expectation value is \(0.188129\).}
    \label{fig:basic_interpolation}
\end{figure}

We begin with a simple demonstration of Richardson extrapolation by running the system 
at eight distinct noise levels. We simulate each circuit using the second-order Trotter-Suzuki method. 
Specifically, we start with a noisy Trotterized circuit at an initial noise level \(\lambda_0\), 
and then apply gate folding to amplify the noise by factors \(x_i\) for seven additional circuits 
\cite{larose2022mitiq}. We select five equidistant nodes for the noise scales, 
each separated by a uniform interval of length~1, and run each circuit with \(N_S = 10^6\) shots.

Finally, we measure the Pauli-X operator \(X_1\) on each circuit and plot the resulting expectation 
values in Figure~\ref{fig:basic_interpolation}. We then extrapolate these values to \(x=0\) 
using a polynomial \(p_4\) of degree at most~4, effectively implementing Richardson extrapolation. 
As shown in Figure~\ref{fig:basic_interpolation}, the extrapolated expectation value is 
\(0.188129\), whereas the ideal noiseless circuit yields \(0.191826\). 
The shaded region in the figure depicts the standard deviation from shot noise, 
providing a visualization of the statistical confidence. 
Thus, Richardson extrapolation approximates the ideal value with an error of about \(0.017144\). 
Our goal in the following experiments is to investigate how different approximation methods 
affect this error.

\begin{figure}[!htbp] 
    \centering
    \includegraphics[scale=0.25]{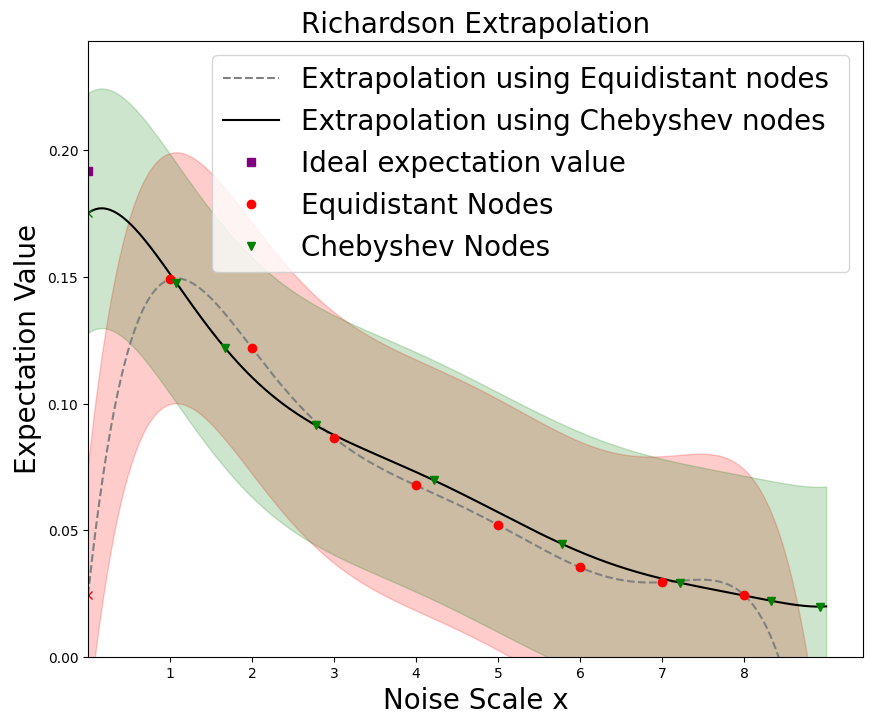}
    \caption{Richardson extrapolation using equidistant and Chebyshev nodes.  The ideal zero-noise expectation value is \(0.191826\), while the extrapolated expectation values using equidistant nodes and Cheyshev nodes are \(0.17516\), and \(0.0244759\), respectively.}
    \label{fig:interpolation_cheby_vs_equi}
\end{figure}

\paragraph{The distribution of the nodes. }
To test the effect of the distribution of extrapolation nodes in the context of ZNE, we run the circuit we previously defined, for \(8\) distinct equidistant noise scales, using Chebyshev nodes as the noise scales. In Figure \ref{fig:interpolation_cheby_vs_equi}, the red circles represent the expectation values we obtain from circuits in which the noise scales are equidistant nodes, while the green triangles represent their counterparts in which the scales are shifted Chebyshev nodes. 
From a comparison with the ideal expectation value, we can see that the approximation error is significantly smaller when using Chebyshev nodes. These results are also consistent with the observations from \cite{MichaelKrebsbachOptimizingRichardson}.

\paragraph{Extrapolation using least-squares.} Next we test the least-square approach in Section \ref{sec:least squares}. 
In this case, the approximating polynomial is of degree \(m < n\) for \(n\) sample nodes. Under the same circuit settings as the previous experiment, i.e., $n=8,$ we consider the same distributions of nodes. By changing the degree of the approximating polynomial to \(3\), we show the fitting polynomial and its extrapolation to $x=0$ in Figure \ref{fig:LSA_cheby_vs_equi}. 
Due to the lower degree of the polynomials, the least-squares polynomials show significantly less oscillations. More importantly, compared with the results in Figure \ref{fig:interpolation_cheby_vs_equi},  the results from least-squares in Figure \ref{fig:LSA_cheby_vs_equi} improve the approximation error for both equidistant and Chebyshev nodes.

\begin{figure}[!htbp] 
    \centering
    \includegraphics[scale=0.25]{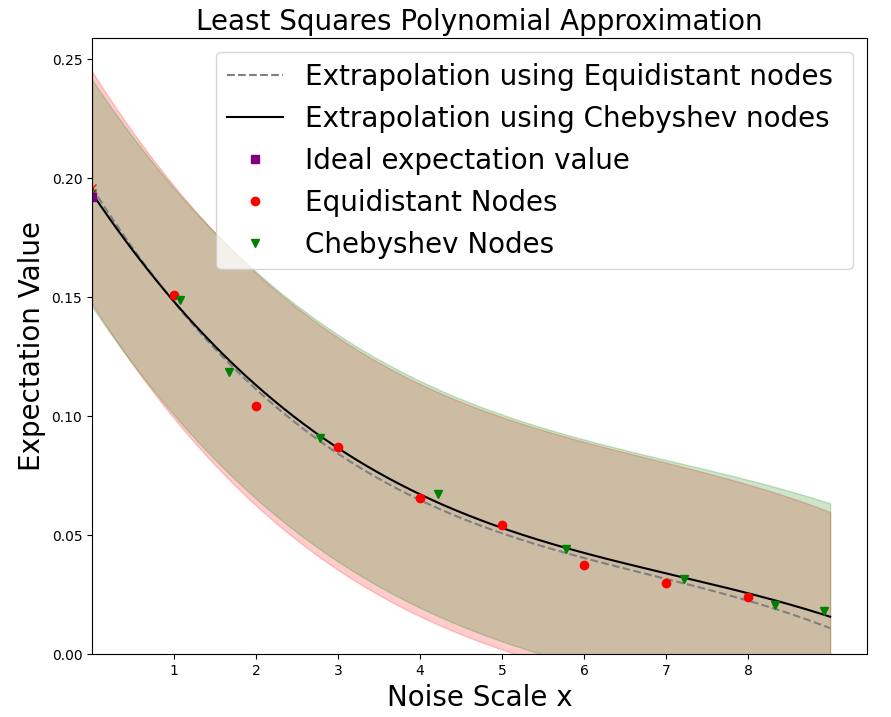}
    \caption{
   ZNE from the least-square approach \eqref{eq:least_sq_statement} using equidistant and Chebyshev nodes. The extrapolated expectation values using equidistant nodes and Cheyshev nodes are \(0.193625\) and \(0.195975\), respectively, while the ideal expectation value is \(0.191826\). }
    \label{fig:LSA_cheby_vs_equi}
\end{figure}

\paragraph{The choice of the degree of the approximating polynomial.} Next, under the previous settings, we fix $n=20$ Chebyshev nodes \( x_j \in [1,30] \), while varying the degree of the least-squares polynomial from \(m=2\) to \(m=20\). The results in Figure \ref{fig:bias_vs_degree_of_poly} suggest that as the polynomial degree \(m\) approaches the number of nodes (\(n = 20\)), the approximation error increases. In particular, the error is largest when \(m = n\). The minimum error is observed at \(m = 11\), which, in practice, can be identified using cross-validation. As expected, the error remains lower when the polynomial degree is less than \(n-1\), reinforcing that when $n\gg 1$, lower-degree polynomials that we obtain through least-squares are generally more suitable.

\begin{figure}[!htbp] 
    \centering
    \includegraphics[scale=0.3]{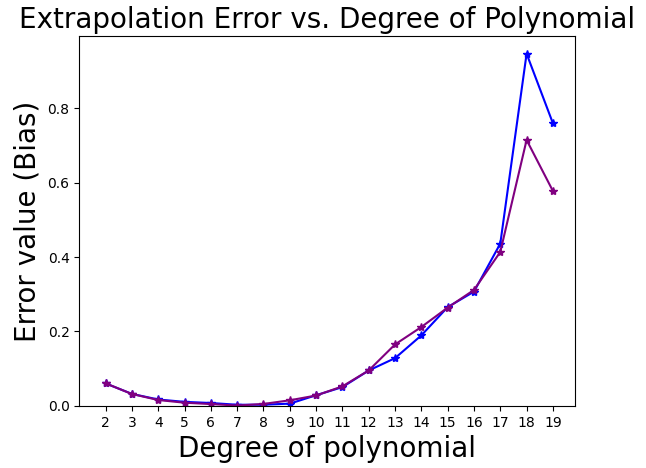}
    \caption{The change of extrapolation error with respect to the degree of the approximating polynomial in the least-square approach \eqref{eq:least_sq_statement}. }
    \label{fig:bias_vs_degree_of_poly}
\end{figure}

\paragraph{Extrapolation of Trotter error. } As mentioned in the introduction, the expectation value of an observable is influenced by both the physical error \(\lambda\) and the Trotter step size \(\tau\) in the Trotter-Suzuki algorithm. Therefore, the extrapolation methods we consider can be applied with respect to both \(\lambda\) and \(\tau\).

To isolate the effect of extrapolation with respect to the Trotter step size, we simulate a set of circuits without physical noise, i.e., $\lambda=0$. Specifically, we implement the second-order Trotter-Suzuki algorithm with varying step sizes on noiseless circuits. For a total evolution time of \(T = 2\), we choose the number of time steps \(N_T\) such that \(\tau = T/N_T\). The results are presented in Figure \ref{fig:only_trotter_error}, where the \(y\)-value of each blue point represents the expectation value of a circuit with a corresponding Trotter step size indicated along the \(x\)-axis. 

As before, we obtain the expectation values by measuring the Pauli \(X\) operator on the first qubit of each circuit, with each circuit executed one million times. The extrapolation to \(\tau = 0\) is performed using a least-squares fit with a polynomial of degree \(5\), yielding results that closely agree with the exact expectation value, which we compute directly via matrix exponentiation:
$\ket{\psi(t)}= e^{-itH} \ket{\psi(0)}.$ 
 More numerical results can be found in \cite{ImprovedAccuracyforTrotterSim}.

\begin{figure}[tph] 
    \begin{center}
    \includegraphics[scale=0.25]{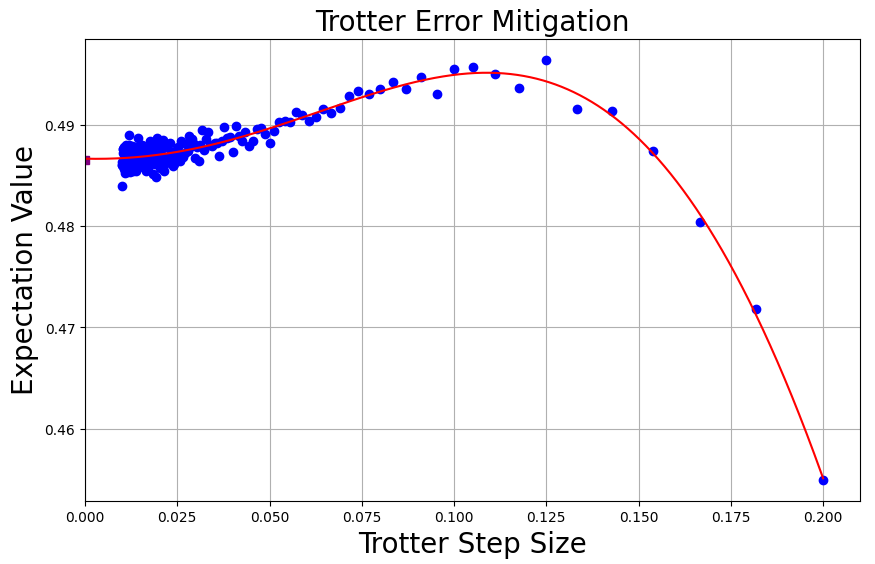}
    \end{center}
    \caption{Trotter error mitigation by changing the number of Trotter steps from $N_T=200$ (step size $0.01$) to $N_T = 10$ (step size $0.2$), for a total evolution time of $T=2$. The solid line indicates extrapolation using a polynomial of degree \(5\). The exact expectation value is \(0.48652\), while the extrapolated expectation value is \(0.48663\).}
    \label{fig:only_trotter_error}
\end{figure}

\paragraph{Simultaneous mitigation of circuit and algorithmic errors.}
The final experiment examines a scenario where both Trotter and physical errors are present. To achieve noise scales \( x \ge 1 \), we set 
This \(x = \frac{c}{\lambda_0}\,\tau^2\) 
for some positive constant \(c\). Each combination of step size and corresponding noise level is implemented as a separate circuit.

Here, we measure \(Z_1\), the Pauli-Z operator on the first qubit, because depolarizing noise tends to drive the expectation value of \(X_1\) to nearly zero, making \(Z_1\) a more informative choice for observing the effects of noise. The dots in Figure~\ref{fig:joint_exp_val} represent the measured expectation values, which appear to follow a roughly quadratic trend. By performing a least-squares fit with a degree-5 polynomial, we extrapolate these measurements to approximate the exact expectation value with satisfactory accuracy.

\begin{figure}[H] 
    \centering
    \includegraphics[scale=0.25]{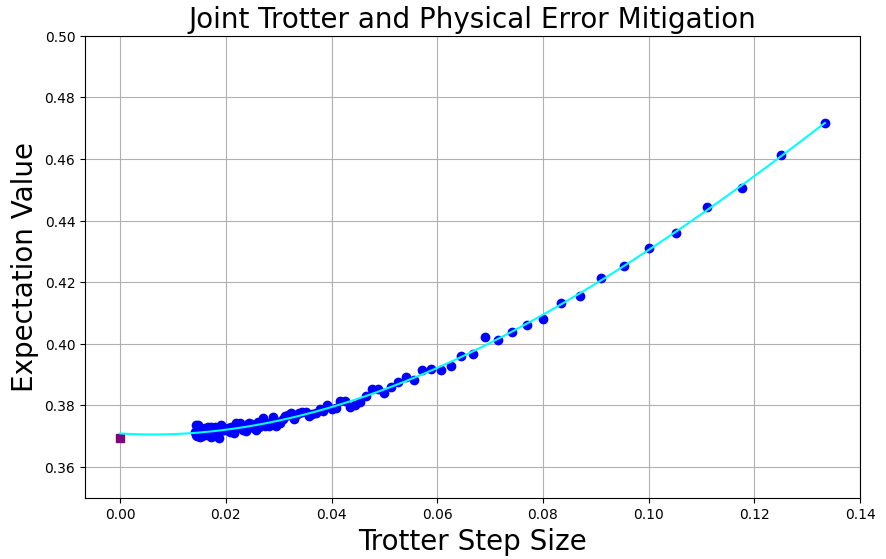}
    \caption{Joint mitigation of Trotter and physical error by varying the number of Trotter steps from \(15\) to \(141\), and letting the physical error be \(\lambda = c \tau^2\), where \(c = 100\), and \(\tau\) is the Trotter step size. The exact expectation value is \(0.3693\).}
    \label{fig:joint_exp_val}
\end{figure}

\section{Summary and Discussions} \label{sec:summary_discussions}
This paper provides a detailed analysis of the accuracy of zero-noise extrapolation (ZNE), 
a widely used quantum error-mitigation strategy. By exploiting properties of the function 
governing how circuit expectations depend on noise---alongside polynomial approximations---
we establish bounds on both the bias and variance of ZNE estimates. 
These bounds, in turn, lead to a sampling complexity that ensures accurate results 
with high probability.

Our analysis extends in two important directions. First, we introduce polynomial least-squares approximations that help prevent overfitting and curb the escalation of statistical error. Second, we propose a method that mitigates both algorithmic and circuit errors simultaneously by treating numerical and noise parameters jointly, and we provide rigorous error bounds for this unified approach.

In the broader realm of statistical learning, polynomials constitute only a narrow class of potential function choices. Therefore, it is a promising direction to investigate ZNE using more general function families. For instance, the works in \cite{endo2018practical,cai2021multi} adopted exponential functions that exhibited better performance. Extending our framework to such broader classes presents an intriguing avenue for future research.

Finally, the function properties used in our analysis are relatively crude, as they do not account for specific noise models or observables. Developing more specialized characterizations could yield tighter error bounds and reduce sampling overhead.

\emph{Note.} Shortly after the initial posting of our manuscript, another paper, Hakkaku \emph{et~al.}~\cite{hakkaku2025data}, appeared with ideas similar to those presented in Section~4. Both works aim at the 
\emph{joint} mitigation of physical and algorithmic (Trotter) errors with improved data efficiency, in order 
to avoid the variance growth that arises in separate extrapolation of physical and algorithmic noise. In particular, both approaches construct 
a one-dimensional extrapolation by tuning simultaneously the physical noise and Trotter steps, with results validated by numerical tests. One key difference is that our work considers general continuous noise described by Lindblad equation, rather than a specific noise channel, and our analysis involves estimating the smoothness of the expectations on the noise and step size parameters, which enables a rigorous analysis of the bias and variance trade-off.

\paragraph{Acknowledgments} This research is supported by the NSF Grants DMS-2111221 and CCF-2312456.

\addcontentsline{toc}{section}{References}
%\section*{References}
\bibliographystyle{quantum}  
%\bibliography{references}

\begin{thebibliography}{10}

\bibitem{preskill2018nisq}
John Preskill.
\newblock ``Quantum computing in the nisq era and beyond''.
\newblock \href{https://dx.doi.org/10.22331/q-2018-08-06-79}{Quantum {\bf 2},
  79}~(2018).

\bibitem{li2017efficient}
Ying Li and Simon~C Benjamin.
\newblock ``Efficient variational quantum simulator incorporating active error
  minimization''.
\newblock \href{https://dx.doi.org/10.1103/PhysRevX.7.021050}{Physical Review X
  {\bf 7}, 021050}~(2017).

\bibitem{KTemme_SBravyi_JGambetta}
Kristan Temme, Sergey Bravyi, and Jay~M Gambetta.
\newblock ``Error mitigation for short-depth quantum circuits''.
\newblock
  \href{https://dx.doi.org/https://doi.org/10.1103/PhysRevLett.119.180509}{Physical
  review letters {\bf 119}, 180509}~(2017).

\bibitem{surface_codes}
Austin~G Fowler, Matteo Mariantoni, John~M Martinis, and Andrew~N Cleland.
\newblock ``Surface codes: Towards practical large-scale quantum computation''.
\newblock \href{https://dx.doi.org/10.1103/PhysRevA.86.032324}{Physical Review
  A—Atomic, Molecular, and Optical Physics {\bf 86}, 032324}~(2012).

\bibitem{lindblad1976generators}
Goran Lindblad.
\newblock ``On the generators of quantum dynamical semigroups''.
\newblock
  \href{https://dx.doi.org/https://doi.org/10.1007/BF01608499}{Communications
  in mathematical physics {\bf 48}, 119--130}~(1976).

\bibitem{gorini1976completely}
Vittorio Gorini, Andrzej Kossakowski, and Ennackal Chandy~George Sudarshan.
\newblock ``Completely positive dynamical semigroups of n-level systems''.
\newblock \href{https://dx.doi.org/0.1063/1.522979}{Journal of Mathematical
  Physics {\bf 17}, 821--825}~(1976).

\bibitem{breuer2002theory}
Heinz-Peter Breuer and Francesco Petruccione.
\newblock ``The theory of open quantum systems''.
\newblock
  \href{https://dx.doi.org/10.1093/acprof:oso/9780199213900.001.0001}{OUP
  Oxford}. ~(2002).

\bibitem{EvanB_MiladM}
Evan Borras and Milad Marvian.
\newblock ``A quantum algorithm to simulate lindblad master equations''.
\newblock
  \href{https://dx.doi.org/10.48550/arXiv.2406.12748}{\href{https://doi.org/10.48550/arXiv.2406.12748}{arXiv
  preprint arXiv:2406.12748}}~(2024).

\bibitem{endo2018practical}
Suguru Endo, Simon~C Benjamin, and Ying Li.
\newblock ``Practical quantum error mitigation for near-future applications''.
\newblock \href{https://dx.doi.org/10.1103/PhysRevX.8.031027}{Physical Review X
  {\bf 8}, 031027}~(2018).

\bibitem{EndoMitigAlgoError}
Suguru Endo, Qi~Zhao, Ying Li, Simon Benjamin, and Xiao Yuan.
\newblock ``Mitigating algorithmic errors in a hamiltonian simulation''.
\newblock \href{https://dx.doi.org/10.1103/PhysRevA.99.012334}{Phys. Rev. A
  {\bf 99}, 012334}~(2019).

\bibitem{kurita2022synergetic}
Tomochika Kurita, Hammam Qassim, Masatoshi Ishii, Hirotaka Oshima, Shintaro
  Sato, and Joseph Emerson.
\newblock ``Synergetic quantum error mitigation by randomized compiling and
  zero-noise extrapolation for the variational quantum eigensolver''.
\newblock \href{https://dx.doi.org/10.22331/q-2023-11-20-1184}{{Quantum} {\bf
  7}, 1184}~(2023).

\bibitem{vazquez2022enhancing}
Almudena~Carrera Vazquez, Ralf Hiptmair, and Stefan Woerner.
\newblock ``Enhancing the quantum linear systems algorithm using richardson
  extrapolation''.
\newblock \href{https://dx.doi.org/10.48550/arXiv.2009.04484}{ACM Transactions
  on Quantum Computing {\bf 3}, 1--37}~(2022).

\bibitem{carrera2022extrapolation}
Almudena Carrera~Vazquez.
\newblock ``Extrapolation methods in quantum computing''.
\newblock \href{https://dx.doi.org/10.3929/ethz-b-000586831}{PhD thesis}.
\newblock ETH Zurich.
\newblock ~(2022).

\bibitem{hastie2009elements}
Trevor Hastie, Robert Tibshirani, Jerome~H Friedman, and Jerome~H Friedman.
\newblock ``The elements of statistical learning: data mining, inference, and
  prediction''.
\newblock \href{https://dx.doi.org/10.1007/978-0-387-84858-7}{Volume~2}.
\newblock Springer. ~(2009).

\bibitem{gautschi1987lower}
Walter Gautschi and Gabriele Inglese.
\newblock ``Lower bounds for the condition number of vandermonde matrices''.
\newblock
  \href{https://dx.doi.org/https://doi.org/10.1007/BF01398878}{Numerische
  Mathematik {\bf 52}, 241--250}~(1987).

\bibitem{li2006lower}
Ren-Cang Li.
\newblock ``Lower bounds for the condition number of a real confluent
  vandermonde matrix''.
\newblock
  \href{https://dx.doi.org/https://doi.org/10.1090/S0025-5718-06-01856-4}{Mathematics
  of computation {\bf 75}, 1987--1995}~(2006).

\bibitem{digital_zero_noise_extrap}
Tudor Giurgica-Tiron, Yousef Hindy, Ryan LaRose, Andrea Mari, and William~J.
  Zeng.
\newblock ``Digital zero noise extrapolation for quantum error mitigation''.
\newblock In 2020 IEEE International Conference on Quantum Computing and
  Engineering (QCE).
\newblock \href{https://dx.doi.org/10.1109/QCE49297.2020.00045}{Pages
  306--316}.
\newblock ~(2020).

\bibitem{MichaelKrebsbachOptimizingRichardson}
Michael Krebsbach, Bj\"orn Trauzettel, and Alessio Calzona.
\newblock ``Optimization of richardson extrapolation for quantum error
  mitigation''.
\newblock \href{https://dx.doi.org/10.1103/PhysRevA.106.062436}{Phys. Rev. A
  {\bf 106}, 062436}~(2022).

\bibitem{Trefethen_approx}
Lloyd~N. Trefethen.
\newblock ``Approximation theory and approximation practice, extended
  edition''.
\newblock \href{https://dx.doi.org/10.1137/1.9781611975949}{Society for
  Industrial and Applied Mathematics}. Philadelphia, PA~(2019).

\bibitem{Childs_Trotter_Theory_2021}
Andrew~M. Childs, Yuan Su, Minh~C. Tran, Nathan Wiebe, and Shuchen Zhu.
\newblock ``Theory of trotter error with commutator scaling''.
\newblock \href{https://dx.doi.org/10.1103/PhysRevX.11.011020}{Phys. Rev. X
  {\bf 11}, 011020}~(2021).

\bibitem{ImprovedAccuracyforTrotterSim}
Gumaro Rendon, Jacob Watkins, and Nathan Wiebe.
\newblock ``Improved accuracy for trotter simulations using chebyshev
  interpolation''.
\newblock \href{https://dx.doi.org/10.22331/q-2024-02-26-1266}{Quantum Journal
  {\bf 8}, 1266}~(2024).

\bibitem{watson2024exponentially}
James~D. Watson and Jacob Watkins.
\newblock ``Exponentially reduced circuit depths using trotter error
  mitigation''.
\newblock \href{https://dx.doi.org/10.1103/kw39-yxq5}{PRX Quantum {\bf 6},
  030325}~(2025).

\bibitem{Universal_sampling_lower_bounds_QEM}
Ryuji Takagi, Hiroyasu Tajima, and Mile Gu.
\newblock ``Universal sampling lower bounds for quantum error mitigation''.
\newblock \href{https://dx.doi.org/10.1103/PhysRevLett.131.210602}{Phys. Rev.
  Lett. {\bf 131}, 210602}~(2023).

\bibitem{takagi2022fundamental}
Ryuji Takagi, Suguru Endo, Shintaro Minagawa, and Mile Gu.
\newblock ``Fundamental limits of quantum error mitigation''.
\newblock \href{https://dx.doi.org/10.1038/s41534-022-00618-z}{npj Quantum
  Information {\bf 8}, 114}~(2022).

\bibitem{Quek_tighter_bounds_2024}
Yihui Quek, Daniel~Stilck França, Khatri Sumeet, Johannes~Jacob Meyer, and
  Jens Eisert.
\newblock ``Exponentially tighter bounds on limitations of quantum error
  mitigation''.
\newblock \href{https://dx.doi.org/10.1038/s41567-024-02536-7}{Nature Physics
  {\bf 20}, 1648–1658}~(2024).

\bibitem{larose2022mitiq}
Ryan LaRose, Andrea Mari, Sarah Kaiser, Peter~J Karalekas, Andre~A Alves, Piotr
  Czarnik, Mohamed El~Mandouh, Max~H Gordon, Yousef Hindy, Aaron Robertson,
  et~al.
\newblock ``Mitiq: A software package for error mitigation on noisy quantum
  computers''.
\newblock \href{https://dx.doi.org/10.22331/q-2022-08-11-774}{Quantum {\bf 6},
  774}~(2022).

\bibitem{Richardson1911}
Lewis~Fry Richardson.
\newblock ``Ix. the approximate arithmetical solution by finite differences of
  physical problems involving differential equations, with an application to
  the stresses in a masonry dam''.
\newblock
  \href{https://dx.doi.org/https://doi.org/10.1098/rsta.1911.0009}{Philosophical
  Transactions of the Royal Society of London. Series A, containing papers of a
  mathematical or physical character {\bf 210}, 307--357}~(1911).

\bibitem{Richardson1927}
Lewis~Fry Richardson and J~Arthur Gaunt.
\newblock ``Viii. the deferred approach to the limit''.
\newblock
  \href{https://dx.doi.org/https://doi.org/10.1098/rsta.1927.0008}{Philosophical
  Transactions of the Royal Society of London. Series A, containing papers of a
  mathematical or physical character {\bf 226}, 299--361}~(1927).

\bibitem{Extrapolation_book}
Avram Sidi.
\newblock ``Practical extrapolation methods: Theory and applications''.
\newblock \href{https://dx.doi.org/10.1017/CBO9780511546815}{Cambridge
  University Press}. ~(2003).

\bibitem{Atkinson}
Kendall Atkinson.
\newblock ``An introduction to numerical analysis''.
\newblock John wiley \& sons. ~(1991).
\newblock
  url:~\url{https://www.wiley.com/An+Introduction+to+Numerical+Analysis%2C+2nd+Edition-p-9780471624899}.

\bibitem{Runge}
Carl Runge.
\newblock ``Über empirische funktionen und die interpolation zwischen
  \"aquidistanten ordinaten''.
\newblock Zeitschrift f\"ur Mathematik und Physik, vol. 46, pp.
  224–243~(1901).

\bibitem{Townsend}
Laurent Demanet and Alex Townsend.
\newblock ``Stable extrapolation of analytic functions''.
\newblock \href{https://dx.doi.org/10.1007/s10208-018-9384-1}{Foundation of
  Computational Mathematics {\bf 19}, 297–331}~(2016).

\bibitem{demanet2010chebyshev}
Laurent Demanet and Lexing Ying.
\newblock ``On chebyshev interpolation of analytic functions''.
\newblock preprint~(2010).
\newblock
  url:~\url{https://math.mit.edu/sites/icg-new/papers/cheb-interp.pdf}.

\bibitem{cai2021multi}
Zhenyu Cai.
\newblock ``Multi-exponential error extrapolation and combining error
  mitigation techniques for nisq applications''.
\newblock \href{https://dx.doi.org/10.1038/s41534-021-00404-3}{npj Quantum
  Information {\bf 7}, 80}~(2021).

\bibitem{Cheby-poly-book}
J.C. Mason and D.C. Handscomb.
\newblock ``Chebyshev polynomials (1st ed.)''.
\newblock
  \href{https://dx.doi.org/https://doi.org/10.1201/9781420036114}{Chapman and
  Hall/CRC}. ~(2002).

\bibitem{boyd2001chebyshev}
John~P Boyd.
\newblock ``Chebyshev and fourier spectral methods''.
\newblock Courier Corporation. ~(2001).
\newblock  url:~\url{https://link.springer.com/book/9783540514879}.

\bibitem{hairer2006geometric}
Ernst Hairer, Marlis Hochbruck, Arieh Iserles, and Christian Lubich.
\newblock ``Geometric numerical integration''.
\newblock \href{https://dx.doi.org/10.14760/OWR-2006-14}{Oberwolfach Reports
  {\bf 3}, 805--882}~(2006).

\bibitem{watson2024randomly}
James~D Watson.
\newblock ``Randomly compiled quantum simulation with exponentially reduced
  circuit depths''~(2024).
\newblock  \href{http://arxiv.org/abs/2411.04240}{arXiv:2411.04240}.

\bibitem{Qiskit}
Ali Javadi-Abhari, Matthew Treinish, Kevin Krsulich, Christopher~J. Wood, Jake
  Lishman, Julien Gacon, Simon Martiel, Paul~D. Nation, Lev~S. Bishop,
  Andrew~W. Cross, Blake~R. Johnson, and Jay~M. Gambetta.
\newblock ``Quantum computing with qiskit''~(2024).
\newblock  \href{http://arxiv.org/abs/2405.08810}{arXiv:2405.08810}.

\bibitem{hakkaku2025data}
Shigeo Hakkaku, Yasunari Suzuki, Yuuki Tokunaga, and Suguru Endo.
\newblock ``Data-efficient error mitigation for physical and algorithmic errors
  in a hamiltonian simulation''~(2025).
\newblock  \href{http://arxiv.org/abs/2503.05052}{arXiv:2503.05052}.

\bibitem{Hoorfar2008}
Abdolhossein Hoorfar and Mehdi Hassani.
\newblock ``Inequalities on the lambert function and hyperpower function.''.
\newblock \href{https://dx.doi.org/eudml.org/doc/130024}{JIPAM. Journal of
  Inequalities in Pure \& Applied Mathematics [electronic only] {\bf 9}, Paper
  No. 51, 5 p., electronic only--Paper No. 51, 5 p., electronic only}~(2008).

\bibitem{ShortCourseApprox}
Neal~L Carothers.
\newblock ``A short course on approximation theory''.
\newblock Bowling Green State University, Bowling Green, OH{\bf 38}~(1998).
\newblock  url:~\url{https://fourier.math.uoc.gr/~mk/approx1011/carothers.pdf}.

\bibitem{Hoeffding1963}
Wassily Hoeffding.
\newblock ``Probability inequalities for sums of bounded random variables''.
\newblock \href{https://dx.doi.org/10.2307/2282952}{Journal of the American
  statistical association {\bf 58}, 13--30}~(1963).

\end{thebibliography}

\onecolumn
\section{Appendix} \label{appendix} 
\appendix
\renewcommand{\thesubsection}{\Alph{subsection}}

\subsection{Bias Error Analysis}\label{appendix-bia}

In this section, we prove the theorems related to the bias bounds. We use the following lemma in our analysis.
\begin{lemma} \label{LambertW_bound}
    For $0 < \varepsilon < e^{-e}$, if $a \leq e^{-1}$, and $n = \Omega\left(\frac{\log(1/\varepsilon)}{ \sqrt{\log \log (1/\varepsilon)}}\right)$, then $\frac{a^{n}}{n!} \leq \varepsilon $.
\end{lemma}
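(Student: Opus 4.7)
The plan is to reduce the inequality $\frac{a^n}{n!}\leq\varepsilon$ to an explicit transcendental inequality in $n$ by taking logarithms, and then verify that the stated growth rate satisfies it. The Lambert $W$ function enters naturally, because the reduced inequality will be of the form $n\log n \gtrsim \log(1/\varepsilon)$, whose inverse is expressed through $W$.

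First I would take logarithms on both sides, so that the inequality becomes $\log(n!) - n\log a \geq \log(1/\varepsilon)$. Next, I would apply the elementary Stirling-type lower bound $n! \geq (n/e)^n$, which gives $\log(n!) \geq n\log n - n$, and combine it with the hypothesis $a\leq e^{-1}$ (equivalently, $-\log a \geq 1$). The two contributions from $-n$ and $-n\log a$ cancel favorably, reducing the required inequality to
$$
n\log n \;\geq\; \log(1/\varepsilon).
$$

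Setting $L := \log(1/\varepsilon)$, the assumption $\varepsilon < e^{-e}$ guarantees $L > e$, so $\log L > 1$ and $\log\log L > 0$. Suppose $n \geq c\,L/\sqrt{\log L}$ for a constant $c$ to be chosen. Then a direct computation yields
$$
\log n \;\geq\; \log L - \tfrac{1}{2}\log\log L + \log c \;\geq\; \tfrac{1}{2}\log L,
$$
since $\log\log L \leq \log L$ for $L > e$. Substituting back gives $n\log n \geq \tfrac{c}{2}\,L\,\sqrt{\log L} \geq L$ whenever $c\sqrt{\log L}/2 \geq 1$, which holds for all $L > e$ as soon as $c \geq 2$. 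The $\Omega$ notation absorbs this constant.

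I do not anticipate a substantial technical obstacle: the proof is essentially an application of Stirling's approximation followed by an algebraic manipulation. The only care needed is to check that the estimates are valid over the full admissible range $\varepsilon \in (0, e^{-e})$ and not merely in the asymptotic limit $\varepsilon \to 0^+$; this is handled by choosing the hidden constant in $\Omega$ sufficiently large (and can alternatively be verified via bounds on the Lambert $W$ function, e.g.\ those of Hoorfar--Hassani \cite{Hoorfar2008}, since the transcendental equation $n\log n = L$ has exact solution $n = L/W(L)$, and $L/\sqrt{\log L}$ comfortably dominates $L/W(L)$ for $L > e$).
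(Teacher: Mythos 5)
Your proposal is correct. The overall skeleton coincides with the paper's: both arguments use $n!\ge (n/e)^n$ together with $ae\le 1$ to reduce the claim to $n^{-n}\le\varepsilon$, i.e.\ to the single transcendental inequality $n\log n\ge \log(1/\varepsilon)$. Where you diverge is in how that inequality is extracted from the hypothesis $n=\Omega\bigl(\log(1/\varepsilon)/\sqrt{\log\log(1/\varepsilon)}\bigr)$. The paper sets $y=\log n$, $x=\log(1/\varepsilon)$, invokes the Hoorfar--Hassani bound $W_0(x)\le \log x-\tfrac12\log\log x$ for $x>e$, and uses monotonicity of $W_0$ to conclude $ye^y\ge x$, which is exactly $n\log n\ge\log(1/\varepsilon)$ with implied constant $1$ in the $\Omega$. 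You instead verify the inequality by a direct two-line estimate ($\log n\ge\tfrac12\log L$ from $\log\log L\le\log L$, then $n\log n\ge\tfrac{c}{2}L\sqrt{\log L}\ge L$ for $c\ge 2$), which is entirely self-contained, makes the hidden constant explicit, and avoids the external reference on the Lambert $W$ function---at the harmless cost of a slightly larger constant absorbed by the $\Omega$. Your closing remark correctly identifies the Lambert $W$ route as the alternative; it is in fact the route the paper takes.
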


\begin{proof}
For large enough \(n\), \(n \geq \frac{\log(1/\varepsilon)}{ \sqrt{\log \log (1/\varepsilon)}}\), we let \(y := \log n\), and \(x := \log \frac{1}{\varepsilon}\). Then, \( \log x - \frac{1}{2} \log \log x \leq y\). For \(\varepsilon < e^{-e}\), that is, for \(x > e\), we have the following inequality \cite{Hoorfar2008} for the Lambert-W function  
\begin{equation}
    W_0(x) \leq \log x - \frac{1}{2} \log \log x, 
\end{equation}
implying that \(W_0(x) \leq y\). Since \(W_0(x)\) is increasing on \(x >e\), there exists \(x' \geq x\) such that \(y = W_0(x'\)). By definition of Lambert-W function, \(ye^{y} = x'\). This implies, \(ye^{y} \geq x\), i.e., \(n \log n \geq \log (1/\varepsilon)\), which is equivalent to \(\frac{1}{n^n} \leq \varepsilon\). Since \(a \leq e^{-1}\), using Stirling's approximation, we conclude that \(\frac{a^n}{n!} \leq \varepsilon\).
\end{proof}

\begin{remark}\label{rem-largera}
If the assumption on \(a\) in Lemma 1 does not hold, i.e.,   \(a > e^{-1}\), then we can still show that for \(n = \Omega(\frac{\log {(1/\varepsilon)}}{\sqrt{\log \log (1/\varepsilon) - \log (ae)}})\),  \(\frac{a^n}{n!} \leq \varepsilon\). The proof is the same as Lemma 1 except we set \(x:= \frac{\log (1/\varepsilon)}{ae}\),  \(y:= \log\frac{n}{ae}\), and they satisfy the inequality \( \log x - \frac{1}{2} \log \log x \leq y\). 
\end{remark}

\begin{remark}\label{Rem:largeM}
    The analysis in this Lemma is typically applied to a regime where $\varepsilon \to 0.$ Another important regime is where $\varepsilon$ fixed, while $a\gg 1.$ Again by using the Lambert W function, we find that,
    \[
      n \log \frac{n}{ae} = \log \frac{1}{\varepsilon} \Rightarrow \log \frac{n}{ae} =W_0 \left( \frac{ \log \frac{1}{\varepsilon} }{ae} \right).
    \]
 Again, we fix $\varepsilon<1 $, and consider $a\gg 1$. Using $W_0(x) \leq \log (x+1)$ (from \cite[Theorem 2.3]{Hoorfar2008} by setting $y=1$), we find that 
 \begin{equation}
     \log \frac{n}{ae} = \Omega\left( \frac{ \log \frac{1}{\varepsilon} }{ae}\right)  \Rightarrow n = \Omega \left( a e \varepsilon^{-(1/ae)} \right). 
 \end{equation}
 Consequently, $n$ will mainly scale linearly with $a.$ 
\end{remark}

\textbf{Theorem \ref{thm:equidist_error}. }
\textit{Let $\{x_j\}_{j = 0}^{n}$ be a set of equidistant nodes according to \eqref{xj-unif}, and let \( p_n(x) \) be the degree-\( n \) polynomial interpolating the function \( f \) at these points. Under assumption \eqref{eq:gevrey},  \(\forall\,  0 < \varepsilon < e^{-e} \), if \( M \leq B^{-B/(B-1)} \), and $n = \Omega\left(\frac{\log(1/\varepsilon)}{ \sqrt{\log \log (1/\varepsilon)}}\right)$, then 
\(
|f(0) - p_n(0)| < \varepsilon.
\)
}

\begin{proof}  \label{proof:equidist_error}
 Let \(n_0 = \frac{n}{B-1}\). By \eqref{eq:expectation_error_bound}, \(\abs{f(0) - p_n(0)} \leq C \frac{M^{n+1}}{(n+1)!} \prod_{j=0}^{n} x_j\), and for equidistant nodes,  \(\displaystyle \prod_{j=0}^{n} x_j = \frac{(n_0 + n )!}{n_0! n^n_0}\). Now, the error becomes
 \begin{equation}
\frac{M^{n+1}}{(n+1)!} \prod_{j=0}^{n} x_j =  M^{n+1}
\frac{(n_0 + n )!}{(n+1)!  n_0! n_0^{n}}\,.
\end{equation}

Using Stirling's approximation, $n! \approx \sqrt{2\pi n} (\frac{n}{e})^{n}$, we can simplify the factorial term.

\begin{equation}
  \frac{M^{n+1}}{(n+1)!} \prod_{j=0}^{n} x_j 
  \approx
  e \left(\frac{M}{n+1}\right)^{n+1}   \sqrt{\frac{n_0 + n}{2 \pi n_0 (n+1)}} \left(\frac{n_0 + n}{n_0}\right)^{n_0 + n}\,. 
\end{equation}

Since $\sqrt{\frac{n_0 + n}{2 \pi n_0 (n+1)}} \leq 1$, we have

\begin{equation}
     \frac{M^{n+1}}{(n+1)!} \prod_{j=0}^{n} x_j
     \leq e \left(\frac{M}{n+1}\right)^{n+1} (1 + \frac{n}{n_0})^{n_0 + n}.
\end{equation}

Moreover,  \(n_0 = \frac{n}{B-1}\), then we can simplify the interpolation error to

\begin{equation}
    \frac{M^{n+1}}{(n+1)!} \prod_{j=0}^{n} x_j
    \approx e \left(\frac{M}{n+1}\right)^{n+1} B^{(nB/(B-1))}
     \leq  e \left(\frac{M B^{B/(B-1)}}{n+1}\right)^{n+1}.
\end{equation}

Using Stirling's formula again, we get
\begin{equation} \label{ineq:equally_spaced_error_bound}
    \frac{M^{n+1}}{(n+1)!} \prod_{j=0}^{n} x_j
    \approx    \sqrt{2 \pi (n+1)} \frac{\left({\frac{MB^{B/(B-1)}}{e}}\right)^{n+1}}{(n+1)!}.
\end{equation}

Invoking Lemma \ref{LambertW_bound}, and since $n = \Omega\left(\frac{\log (1/\varepsilon)}{ \sqrt{\log \log (1/\varepsilon)}}\right)$, and \(M \leq B^{-B/(B-1)}\), the proof is complete.

\end{proof}

We proceed by establishing an analogous result for interpolation using Chebyshev nodes.
\medskip

\textbf{Theorem \ref{thm:Cheby_error_theorem}.}
\textit{Let $\{x_j\}_{j = 0}^{n}$ be a set of  Chebyshev nodes in the interval \([1,B], B > 1\), and let \( p_n(x) \) be a degree-$n$ polynomial interpolating the function \( f \) at these points. Under assumption \eqref{eq:gevrey},
for $0 < \varepsilon < e^{-e}$, if \(M \leq 4/((B-1) e \kappa^2)\), and \( n = \Omega \left( \frac{\log (1/\varepsilon)}{ \sqrt{\log \log (1/\varepsilon)}}\right)\), then 
\(
|f(0) - p_n(0)| < \varepsilon.
\)}

\begin{proof} \label{proof:Cheby_error_theorem}
By inequality \eqref{eq:expectation_error_bound}, \(\displaystyle \abs{f(0) - p_n(0)} \leq C \frac{M^{n+1}}{(n+1)!} \prod_{j=0}^{n} x_j\). For Chebyshev nodes, equations \eqref{newT}, and \eqref{roots_Chebyshev_equivalence} imply that

\begin{equation} \label{eq:multi_nodes_chebyshev}
     \displaystyle \abs{\prod_{j=0}^{n} x_j} \leq 
     2 \left(\frac{B-1}{4}\right)^{n+1} \abs{\Tilde{T}_{n+1}(0)}.
\end{equation}

We note that $\tilde{T}_{n+1}(0) = T_{n+1}(- \frac{B+1}{B-1})$.
For every \( x \in \mathbb{R}\) it can be proved that \cite{ShortCourseApprox}
\begin{equation} \label{Chebyshev_formula}
    T_n(x) = \frac{1}{2} [(x+ \sqrt{x^2 - 1})^n + (x- \sqrt{{x^2 - 1}})^n ] 
\end{equation}

In particular, for all $x < 0$, the following inequality is true
\begin{equation}
    T_n(x) \leq \abs{x - \sqrt{x^2 - 1}}^n.
\end{equation}

In our problem, \( x = - 1 - z, \text{where}  \, z = \frac{2}{B-1} \in \mathbb{R}^{+}\). Therefore,
\[
T_n(-1-z) \leq | 1 + z + \sqrt{z^2 + 2 z}|^n 
= |\sqrt{\frac{z}{2}} + \sqrt{1 + \frac{z}{2}}|^{2n},
\]
which implies,
\begin{equation}\label{T0_bound}
\abs{\tilde{T}_{n}(0)} \leq \kappa^{2n}.    
\end{equation}

The last inequality and \eqref{eq:multi_nodes_chebyshev}, along with definition \eqref{def:kappa} result in the following bound,
\begin{equation} \label{ineq:Cheby_error_bound}
    \frac{M^{n+1}}{(n+1)!} \prod_{j=0}^{n} x_j  \leq  \frac{2}{(n+1)!} 
    \left(\frac{M(B-1)\kappa^2}{4}\right)^{n+1}.
\end{equation}

Since \(\frac{M(B-1)}{4} \kappa^2 \leq e^{-1}\), and \( n = \Omega \left( \frac{\log (1/\varepsilon)}{ \sqrt{\log \log (1/\varepsilon)}}\right)\), Lemma \ref{LambertW_bound} implies that \(\frac{M^{n+1}}{(n+1)!} \displaystyle \prod_{j=0}^{n} x_j < \varepsilon\). 
\end{proof}

To find the optimal degree for the polynomial \(p_m\), we continue with the following lemma in order to apply the approximation methods in \cite{Trefethen_approx}.

\begin{lemma} \label{lemma:f_ac_bdd_var}
\textit{
    Let \(f(x)\) be an analytic function on the interval \([1,B]\) with \(B > 1\), under the condition \eqref{eq:gevrey}.  For every \(k \geq 1\), \(f^{(k)}\) is of bounded variation, and \(f\) and its derivatives up to \(f^{(k)}\) are absolutely continuous on \([1,B]\). }
\end{lemma}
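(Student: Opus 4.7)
The plan is to exploit the Gevrey bound \eqref{eq:gevrey} together with the fact that analyticity of $f$ on $[1,B]$ guarantees that every derivative $f^{(j)}$ is continuous (indeed $C^\infty$) on the compact interval $[1,B]$. Both claims then reduce to standard facts from real analysis, with the bound on $\abs{f^{(n)}}$ providing the quantitative control needed to bound the total variation.

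For bounded variation of $f^{(k)}$, I would use the classical characterization that whenever $g \in C^1([a,b])$, the total variation of $g$ on $[a,b]$ equals $\int_a^b \abs{g'(x)}\,dx$. Applying this to $g = f^{(k)}$ (which is $C^1$ since $f$ is analytic) yields
\begin{equation}
    V_{[1,B]}(f^{(k)}) \;=\; \int_1^B \abs{f^{(k+1)}(x)}\,dx \;\le\; (B-1)\,C\,M^{k+1},
\end{equation}
where the last inequality uses \eqref{eq:gevrey} with $n=k+1$. Since $B$ is finite, the right-hand side is finite and so $f^{(k)} \in BV([1,B])$.

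For absolute continuity of $f, f', \ldots, f^{(k)}$, I would invoke the standard fact that any function in $C^1$ on a compact interval is absolutely continuous (it is Lipschitz with constant $\sup \abs{g'}$, which is finite by compactness and continuity). Since $f$ is analytic on $[1,B]$, each $f^{(j)}$ for $0 \le j \le k$ lies in $C^1([1,B])$ with derivative bounded by $C M^{j+1}$ via \eqref{eq:gevrey}, and is therefore absolutely continuous.

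Honestly, there is no real obstacle here: the whole lemma amounts to observing that analyticity provides the smoothness required to reduce the claim to elementary statements about $C^1$ functions on compact intervals, while the Gevrey bound supplies the quantitative control on the total variation. The main care needed is simply to cite the $C^1 \Rightarrow$ absolute continuity and $C^1 \Rightarrow V = \int \abs{g'}$ implications correctly, and to verify that the bound from \eqref{eq:gevrey} applies uniformly to each $f^{(j)}$ with $0 \le j \le k+1$.
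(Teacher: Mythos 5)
Your proposal is correct and follows essentially the same route as the paper: both arguments deduce bounded variation of $f^{(k)}$ from the finiteness of $\int \abs{f^{(k+1)}}$ via the Gevrey bound \eqref{eq:gevrey}, and deduce absolute continuity of $f,\dots,f^{(k)}$ from the Lipschitz property supplied by the boundedness of the next derivative. Your write-up is in fact slightly cleaner in explicitly invoking the identity $V_{[1,B]}(g)=\int_1^B\abs{g'}$ for $C^1$ functions and in stating the conclusion as absolute continuity rather than mere continuity.
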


\begin{proof} 
For every \(k \geq 1\), and sub-intervals \([a_i,b_i]\) of \([1,B]\), \(\norm{f^{(k+1)}}_1 = \int_{a_i}^{b_i} \abs{f^{(k+1)}(x)}\, dx \leq CM^{(k+1)} (b_i - a_i)\). That is, the \(L_1\)-norm of the derivative of \(f^{(k)}\) is finite, which means \(f^{(k)}\) is a function of bounded variation.

For all \(k \geq 1\), \(f^{(k)}\) is bounded on \([1,B]\). This implies \(f\), and its derivatives up to \(f^{(k)}\) are Lipschitz continuous, which in turn results in them being continuous.
\end{proof}

We can now derive the bound for the degree of \(p_m\), when the bias is less than \(\varepsilon\), under the following conditions.

\textbf{Theorem \ref{thm:bias_bound_lsa}.}
   \textit{Let \(f(x)\) be under the condition \eqref{eq:gevrey}, for \(M < 1\), and let \(p_m(x)\) denote a polynomial of degree \(m \leq n\) obtained through the least squares approximation of \(f\), interpolating the function \( f \) at a set of  Chebyshev nodes in the interval \([1,B], B > 1\). For every \(\varepsilon > 0\),  \(\abs{p_m(0) - f(0)} < \varepsilon\) holds, provided that  \( \kappa \leq M^{-\mu m/2n}\) for some \(\mu \in (0,1)\), and 
\[
    m = \Omega \left( \frac{ \log (C^{'}/\varepsilon)
   }{(1-\mu) \log (1/M)} \right), 
\]
{where} $ C^{'} = \frac{2 (B-1) C M}{ \pi} (\frac{1}{1 -M \kappa^2} + \frac{1}{1 - M}). $ }

\begin{proof} \label{proof:bias_bound_lsa}
Using the equations \eqref{eq:f_cheby_series}, and \eqref{eq:least_sq_approx}, \(f(0) = \displaystyle\sum_{k=0}^{\infty} a_k \tilde{\tau}_k(0)\), and \(p_m(0) =  \displaystyle \sum_{k = 0}^{m} \mathbf{c}_k \tilde{\tau}_k(0)\).
Theorem 4.1, and 4.2 from \cite{Trefethen_approx} prove that 
\begin{equation} \label{eq:error_cheby_series}
    \displaystyle p_m(0) - f(0) 
    = \sum_{k = m + 1}^{\infty} a_k (\tilde{\tau}_k(0) - \tilde{\tau}_t(0)),
\end{equation}
where \(t = \abs{(k+n-1)(\text{mod} \ 2n) - (n-1)}\) is a number in the range \(0 \leq t \leq n \). We note that the above equation also holds in the least squares problem, since \(p_m\) is in essence, an interpolating polynomial of degree \(m\).
 It follows from Lemma \ref{lemma:f_ac_bdd_var}, and Theorem 7.1 in \cite{Trefethen_approx} that for all \(k \geq 0\),
\begin{equation} \label{ak_bound}
    \abs{a_k} \leq \frac{\sqrt{2n+2}}{\pi} (B-1) C M^k,
\end{equation}
where \((B-1)CM^k\) is the bounded variation of \(f^{(k-1)}\), and \(\sqrt{2n+2}\) is due to the normalization term for the Chebyshev polynomials in \eqref{rescaled_Chebypoly}. By \eqref{T0_bound}, and \eqref{rescaled_Chebypoly}, for \(k \geq 1\) we have \(
\abs{\tilde{\tau}_k (0)} \leq \sqrt{\frac{2}{n+1}} \kappa^{2k}
\).   Since \( t \leq n\), \(
\abs{\tilde{\tau}_t (0)} \leq \sqrt{\frac{2}{n+1}} \kappa^{2n}
\). Applying the geometric series results in
\begin{equation}
    \abs{\displaystyle p_m(0) - f(0) }  \leq 
    \frac{2 (B-1) C}{\pi} M^{m+1} \left(\frac{\kappa^{2m+2}}{1 - M\kappa^2} + \frac{\kappa^{2n}}{1-M}\right).
\end{equation}
Since  \( \kappa \leq M^{-\mu m/2n}\), and \(m+1 \leq n\), we can simplify the right hand side further
\begin{equation}
    \abs{\displaystyle p_m(0) - f(0) }  \leq \frac{2 (B-1) C M}{ \pi} M^{(1-\mu)} \left(\frac{1}{1 -M \kappa^2} + \frac{1}{1 - M}\right).
\end{equation}

By the assumption that \(m = \Omega \left( \frac{2 \log (C^{'}/\varepsilon)
   }{(1-\mu)\log (1/M)} \right) \), and \(C^{'} = \frac{2 (B-1) C M}{ \pi} (\frac{1}{1 -M \kappa^2} + \frac{1}{1 - M}) \), we conclude that \(\abs{\displaystyle p_m(0) - f(0) }  < \varepsilon\).
\end{proof}

We also established a bound on the bias arising from the simultaneous mitigation of circuit noise and algorithmic error in the Trotter decomposition, leading to the following theorem:\\

\textbf{Theorem \ref{thm:trotter_error}.}
\textit{ Let \(\rho\) be the density matrix in the  modified Lindblad equation \eqref{eq: modified-lindblad}, and $f(\lambda)= \tr(\rho(t, \lambda) A)$. Also, let \(p_n(x)\) denote the polynomial interpolant of \(f\), where \(f\) and \(p_n\) coincide on Chebyshev nodes \(\{x_i\}_{i=0}^{n}\) in the interval \([1,B]\), \(B > 1\).
     If \(n = \Omega \left(\log \varepsilon/\log (\frac{(B-1) e \kappa^2 \theta}{4 (1 - \lambda \theta)})\right)\), then \(\abs{f(0) - p_n(0)} < \varepsilon\). }

\begin{proof} \label{proof:trotter_error}
By \cref{eq:interp_error} the interpolation error is \(\displaystyle f(0) - p_n(0) = (-1)^{n}f^{(n+1)} (\xi)\frac{\prod_{j=0}^{n}{x_j}}{(n+1)!}\), then
\begin{equation}
    \abs{f(0) - p_n(0)} \leq \frac{\abs{f^{(n+1)} (\xi)}\abs{ \prod_{j=0}^{n}{x_j}}}{(n+1)!}.
\end{equation}
 Lemma \ref{lemma:trotter_derivative_bound} implies that \(\abs{ f^{(n)}(\lambda) }\leq  C\norm{A} \frac{\theta^n}{ (1-\lambda \theta)^n} n^n\), and inequality \eqref{ineq:Cheby_error_bound}  implies that \(\displaystyle |\prod_{j=0}^{n} x_j| \leq 
     2 (\frac{(B-1)\kappa^2}{4})^{n+1} \). Together with Stirling's formula, we conclude that
\begin{equation}
    \abs{f(0) - p_n(0)} \leq \frac{C \norm{A}}{\sqrt{2 \pi (n+1)}} (\frac{(B-1) e \kappa^2 \theta}{4 (1 - \lambda \theta)})^{n+1}.
\end{equation}
Since \(n = \Omega (\log \varepsilon/\log(\frac{(B-1) e \kappa^2 \theta}{4 (1 - \lambda \theta)}))\), \(\abs{f(0) - p_n(0)} < \varepsilon\).
\end{proof}

\subsection{Statistical Error Analysis}

 As mentioned in the main text, to derive an upper bound for the variance of the approximation, it suffices to bound \( \norm{\bm \gamma}_1 \).
 \medskip

\textbf{Theorem \ref{equidist_gamma_sum_bound}.}
\textit{Let $\{x_j\}_{j = 0}^{n}$ be a set of equidistant nodes according to \eqref{xj-unif}. Then, the following holds:
    \[
    \norm{\bm \gamma}_1 = O\left( \left( \frac{2Be}{B-1} \right)^n \right).
    \]}
    
\begin{proof} \label{proof:equidist_gamma_sum_bound}

 By definition \eqref{gamma_j}, we have,
\(
     \gamma_j = \displaystyle \prod_{\substack{k=0 \\ k \neq j}}^{n} \frac{h +k}{k-j}. 
\)
 Simplifying it results in 
\begin{equation}
    \gamma_j = \frac{(-1)^{j} n!}{(h +j) j! (n-j)!} \frac{\prod_{k=0}^{n} (h + k)}{n!}.
\end{equation}
Let $c_{n,h} := \frac{\prod_{k=0}^{n} (h + k)}{n!}$, then we have,
\(
    \gamma_j = \frac{(-1)^{j}}{h + j} c_{n,h} \binom{n}{j}
\), 
yielding the bound 
\begin{equation} \label{equidist_gamma_sum}  
   \left\|\bm \gamma \right\|_1 = c_{n,h} \sum_{j=0}^{n} \frac{\binom{n}{j}}{h + j} \leq \frac{c_{n,h} \sum_{j=0}^{n} \binom{n}{j}}{h} = \frac{2^n c_{n,h}}{h}.
\end{equation}
Using Stirling's formula, and considering the fact that \(h = \frac{n}{B-1}\),  we get
\begin{equation} \label{c_n,h bound}
    c_{n,h} \leq \frac{(h + n)^{n+1}}{n!} \leq \frac{ h^{n+1} B^{n+1}}{n!}.
\end{equation}
Consequently, we obtain,
\begin{equation}
    \norm{\bm \gamma}_1 \leq 
    B  \left(\frac{2Be}{B-1}\right)^n
\end{equation}

\end{proof}

To evaluate the effect of the distribution of nodes on variance, we also compute \(\norm{\bm \gamma}_1\) for the Chebyshev points, leading to the following theorem.
\medskip

\textbf{Theorem \ref{thm:Cheby_gamma_sum_bound}.}
   \textit{Let $\{x_j\}_{j = 0}^{n}$ be the Chebyshev nodes in the interval \([1,B]\), where \( B > 1 \). Then, we have the bound:
\[
    \norm{\bm \gamma}_1 = O\left( \kappa^{2n}  \right),
\]
where $\kappa$ is defined in \eqref{def:kappa}.
}

\begin{proof} \label{proof:Cheby_gamma_sum_bound}
     Let $\Phi_{j, n+1} (x) := \displaystyle \prod_{j=0}^{n} (x-x_j) $, where $x \in \mathbb{R}$. The \(j\)-th Lagrange basis polynomial can be written as follows:

\begin{equation} \label{lagrange_polynomial}
L_j(x) = 
\begin{cases}
    1, & \text{if } x = x_j, \\
    \frac{\Phi_{j, n+1} (x)}{(x - x_j) \Phi_{j, n+1}'(x_j)}, & \text{if } x \neq x_j.
\end{cases}
\end{equation}
For $x_j \in [1,B]$, since \(\gamma_j = L_j(0)\), \cref{roots_Chebyshev_equivalence} implies that
\begin{equation}
    \gamma_j
    =
    \frac{\tilde{T}_{n+1}(0)}
    {-x_j\tilde{T}^{'}_{n+1}(0)}.
\end{equation}
From  \eqref{newT} we have
\(\tilde{T}_{n}(0) = T_{n}(- \frac{B+1}{B-1})\). Also, for \(  y\in [-1,1]\),  \(
  T_{n+1}(y) = \cos( (n+1) \arccos y)\), then a direct calculation yields that
\begin{equation}
     T_{n+1}'(y_j) = - \frac{n+1}{\sqrt{1 - y_j^2}} \sin ( (n+1)  \arccos y_j ) 
 \Rightarrow
 \abs{T_{n+1}'(y_j) } = \frac{n+1}{ \abs{\sin (\frac{2j+1} {n+1}\frac{\pi}{2}) } } \geq n+1.
\end{equation} 
Together with the observation that $\abs{x_j} \geq 1$ (\(x_j = \phi (y_j) \) as in \eqref{Chebyshev_shift}), and equation \eqref{T0_bound} we have,

\begin{equation} \label{gamma-bound-cheby}
 \abs{\gamma_j} \leq \frac{1}{n+1} (\frac{\sqrt{B}+ 1}{\sqrt{B}-1})^{2(n+1)}  \Rightarrow \sum_{j=0}^n \abs{\gamma_j} \leq  \kappa^{2n+2}.    
\end{equation}
\end{proof}

\textbf{Theorem \ref{thm:LSA_gamma_sum}.}
\textit{Let $\{x_j\}_{j = 0}^{n}$ be Chebyshev nodes in the interval \([1, B]\), \( B > 1 \),  \(\{\tilde{\tau}_k(x)\}_{k=0}^{m}\) be shifted normalized Chebyshev polynomials of the first kind on \([1,B]\), and \(  \displaystyle \norm{\bm \gamma}_1=\sum_{i = 0}^{n}  \abs{\gamma_i}
\), where \( \gamma_i :=  \displaystyle \sum_{k=0}^{m} \tilde{\tau}_k(x_i) \tilde{\tau}_k(0)\). Then,
    \begin{equation}
        \norm{\bm \gamma}_1= O\left(\kappa^{2m}\right). 
    \end{equation}
}

\begin{proof} \label{proof:LSA_gamma_sum}
 Using \eqref{rescaled_Chebypoly}, \(\abs{\tilde{\tau}_k(x_i)} \leq \sqrt{\frac{2}{n+1}} \), and by \eqref{T0_bound},   
\(
    \abs{\tilde{\tau}_k(0)} \leq \sqrt{\frac{2}{n+1}} \kappa^{2k}
\). Thus, \cref{tau2gamma} implies that
\begin{equation}
   \norm{\bm \gamma}_1 = \sum_{i = 0}^{n} \sum_{k=0}^{m} \abs{\tilde{\tau}_k(x_i)} \abs{\tilde{\tau}_k(0)} = \sum_{k=0}^{m} \sum_{i = 0}^{n} \abs{\tilde{\tau}_k(x_i)} \abs{\tilde{\tau}_k(0)}. 
\end{equation}
Applying Cauchy-Schwarz inequality leads to \(\displaystyle \sum_{i = 0}^{n} \abs{\tilde{\tau}_k(x_i)} \leq (\sum_{i=0}^{n} \tilde{\tau}^2_k(x_i))^{1/2} (\sum_{i=0}^{n} 1)^{1/2}\). It follows that,
\begin{equation}
    \norm{\bm \gamma}_1 =  \sqrt{2} \sum_{k=0}^{m} \kappa^{2k} = \sqrt{2} \frac{\kappa^{2m+2} - 1}{\kappa^2 - 1}.
\end{equation}
Consequently, \(\norm{\bm \gamma}_1 = O(\kappa^{2m}).\)
\end{proof}

\subsection{Bounding Sampling Complexity}
To bound the sampling complexity in terms of bias and statistical error, we begin by stating Hoeffding's inequality for bounded random variables \cite{Hoeffding1963}. 

Let \(X_1, \dots, X_{N_S}\) be independent random variables satisfying \( a_i \leq X_i \leq b_i \) almost surely. Consider their sum,
\[
S_{N_S} = X_1 + \dots + X_{N_S}.
\]
Hoeffding's inequality Under a\cite{Hoeffding1963} states that for every \( t > 0 \),
\begin{equation} \label{Hoeff}
 \mathbb{P}\left(|S_{N_S} - \mathbb{E}[S_{N_S}]| \geq t\right) \leq 2 \exp\left(- \frac{2t^2}{\sum_{i=1}^{N_S} (b_i - a_i)^2}\right).
\end{equation}
\medskip
\textbf{Theorem \ref{thm:Hoeffding_Richardson}.} 
\textit{Let \(f\), and \(p_{n}(x)\) be as in \cref{eq:f-definition} and \cref{def:extrap-pn} respectively.  Under assumption \eqref{eq:gevrey}, \( \forall\, 0 < \varepsilon < e^{-e}, 0 < \delta < 1\), the error bound \(\abs{p_n(0) - f(0)} < \varepsilon\) holds with probability at least \(1 - \delta\)  if \(n = \Omega \left( \log \frac{1}{\varepsilon} \right)\), and and the number of samples \( N_S \) satisfies either one of the following conditions:}
\begin{enumerate}
    \item \textit{$\{x_j\}_{j = 0}^{n}$ is a set of equidistant nodes according to \eqref{xj-unif}, \( M \leq B^{-B/(B-1)} \), and
    \[
        N_S = \Omega\left(\alpha^2  \varepsilon^{-(2+ 4Be/(B-1))}\log \frac{2}{\delta}\right).
    \]}

    \item \textit{$\{x_j\}_{j = 0}^{n}$ is a set of Chebyshev nodes in the interval \([1,B] \), \(M \leq 4/((B-1) e \kappa^2)\),  and 
    \[
        N_S = \Omega\left(\alpha^2  \varepsilon^{-(2 + 4 \log \kappa)}\log \frac{2}{\delta}\right),
    \]}
\end{enumerate}
where \(\kappa\) is defined in \eqref{def:kappa}.

\begin{proof} \label{proof:Hoeffding_Richardson}
Let the random variable associated with each measurement be \( X_j^i \) (\( i = 1, \ldots, N_S \)). Since \(\|A\|_\infty \leq \alpha\),
\begin{equation}
   \displaystyle \abs{X_i} \leq \frac{\alpha \displaystyle\sum_{j=0}^{n} \abs{\gamma_j}}{N_S}.  
\end{equation}
Define \(S_{N_S} := \displaystyle \sum_{i=1}^{N_S} X_i\), then, \(\mathbb{E}(S_{N_S}) = p_n(0)\), and \(\abs{S_{N_S} } \leq \alpha \displaystyle \sum_{j=0}^{n} \abs{\gamma_j} \). Hoeffding inequality \eqref{Hoeff} implies,
\begin{equation}
\displaystyle \mathbb{P}(|S_{N_S} - \mathbb{E}(S_{N_S})| \geq \varepsilon) \leq 
2 \exp(- \frac{\varepsilon^2 N_S}{2 \alpha^2 (\displaystyle \sum_{j=0}^n  \abs{\gamma_j} )^2}).    
\end{equation}
We want \(\delta > 0\), such that 
 \(
     \mathbb{P}(|S_{N_S} - \mathbb{E}(S_{N_S})| \geq \varepsilon) \leq \delta
 \), which is possible when 
\(
 2 \exp(- \frac{\varepsilon^2 N_S}{2 \alpha^2 (\displaystyle \sum_{j=0}^n  \abs{\gamma_j} )^2})   \leq \delta.
\)
As defined earlier, \(\displaystyle \norm{\bm \gamma}_1 = \sum_{j=0}^{n} \abs{\gamma_j}\). It changes according to the distribution of nodes:
\begin{enumerate}
 \item Equidistant nodes: For \( 0 < \varepsilon < e^{-e} \), 
\[
\log \left( \frac{1}{\varepsilon} \right) \geq \frac{\log \left( 1/\varepsilon \right)}{\sqrt{\log \log \left( 1/\varepsilon \right)}} ,
\]
from which it follows \( n = \Omega \left( \frac{\log \left( 1/\varepsilon \right)}{\sqrt{\log \log \left( 1/\varepsilon \right)}} \right) \).
   Thus, 
\(|p_n(0) - f(0)| < \varepsilon \) 
   by Theorem \ref{thm:equidist_error},  since \( M \leq B^{-B/(B-1)} \), and $n = \Omega\left(\frac{\log(1/\varepsilon)}{ \sqrt{\log \log (1/\varepsilon)}}\right)$.
   Following Theorem \ref{equidist_gamma_sum_bound},
    \( \norm{\bm \gamma}_1 = O\big((\frac{2Be}{B-1})^n\big)\), then

\begin{equation}
 2 \exp(- \frac{\varepsilon^2 N_S}{2 \alpha^2  \norm{\bm \gamma}^2_1})   \leq  2 \exp (- \frac{\varepsilon^2 N_S}{2 \alpha^2 (\frac{2Be}{B-1})^{2n}}) \leq \delta.   
\end{equation}

From the right inequality, we get
\begin{equation}
    N_S \geq \frac{2 \alpha^2 (\frac{2Be}{B-1})^{2n} \log \frac{2}{\delta}}{\varepsilon^2}. 
\end{equation}
 Since \(n = \Omega(\log (\frac{1}{\varepsilon}))\), 
\(
    N_S \geq  (2 \alpha^2 \log \frac{2}{\delta}) (\frac{2Be}{B-1})^{2\log (1/\varepsilon)} /\varepsilon^2
\). By simplifying we get,
\begin{equation}
    N_S \geq (2 \alpha^2 \log \frac{2}{\delta}) \varepsilon^{-2- (4Be)/(B-1)}.
\end{equation}

\item  Chebyshev nodes: By Theorem \ref{thm:Cheby_error_theorem}, for \(0 < \varepsilon < e^{-e}\), \(|p_n(0) - f(0)| < \varepsilon\), since  \(M \leq 4/((B-1) e \kappa^2)\), and \( n = \Omega ( \frac{\log (1/\varepsilon)}{\sqrt{\log \log (1/\varepsilon)}} )\). By Theorem \ref{thm:Cheby_gamma_sum_bound}, \(\displaystyle \norm{\bm \gamma}_1 = O\left( \kappa^{2n}  \right)\), then
\begin{equation}
 2 \exp(- \frac{\varepsilon^2 N_S}{2 \alpha^2 \norm{\bm \gamma}^2_1})   \leq  2 \exp (- \frac{\varepsilon^2 N_S }{2\alpha^2 \kappa^{4n}}) \leq \delta  . 
\end{equation}
It follows that
\(
    N_S \geq (2\alpha^2  \kappa^{4n}\log \frac{2}{\delta})/\varepsilon^2 .
\) This yields the bound,
\begin{equation}
N_S \geq (2\alpha^2  \log \frac{2}{\delta} )\varepsilon^{-2 - 4 \log \kappa}.
\end{equation}

\end{enumerate}

In both cases 
\(|p_n(0) - f(0)| \leq |S_{N_S} - \E(S_{N_S})|\). Therefore, \(|p_n(0) - f(0)| < \varepsilon \), with probability \(1- \delta\).
\end{proof}
Similarly, we determine the required number of samples when we extrapolate using least squares as follows.

\medskip
\textbf{Theorem \ref{thm:Hoeffding_lsa_thm}.} 
Let \(f\), and \(p_{n}(x)\) be as in \cref{eq:f-definition} and \cref{eq:least_sq_approx} respectively.  Under assumption \eqref{eq:gevrey}, \( \forall\, 0 < \varepsilon , 0 < \delta < 1\), the error bound \(\abs{p_n(0) - f(0)} < \varepsilon\) holds with probability at least \(1 - \delta\)  if  \( \kappa \leq M^{-\mu m/2n}\) for \(\mu \in (0,1)\), \(m = \Omega \left( \frac{ \log (C^{'}/\varepsilon)
   }{(1-\mu) \log (1/M)} \right)\), and the number of samples \( N_S \) satisfies the following condition:
\[
N_S = \Omega \left((2 \alpha^2 \kappa^{\log C'/(1-\mu)} \log \frac{2}{\delta}) \varepsilon^{-2 -\log \kappa/(1-\mu)}\right),
\]
 where \(C^{'} = \frac{2 (B-1) C M}{ \pi} (\frac{1}{1 -M \kappa^2} + \frac{1}{1 - M}) \).

\begin{proof} \label{proof:Hoeffding_lsa_thm}
    The idea is similar to the proof of Theorem \ref{thm:Hoeffding_Richardson}.  If  \( \kappa \leq M^{-\mu m/2n}\) for \(\mu \in (0,1)\), and \(m = \Omega \left( \frac{ \log (C^{'}/\varepsilon)
   }{(1-\mu) \log (1/M)} \right)\), by Theorem \ref{thm:bias_bound_lsa}, for every \(\varepsilon > 0\),  \(\abs{p_m(0) - f(0)} < \varepsilon\).  Theorem \ref{thm:LSA_gamma_sum} implies that \(\displaystyle\norm{\bm \gamma}_1 = O\left(\kappa^{2m}\right)\).
   Let the random variable associated with each measurement be \( X_j^i \) (with \( i = 1, \ldots, N_S \)). We continue as in the proof of Theorem \ref{thm:Hoeffding_Richardson}, and define \(S_{N_S} := \displaystyle \sum_{i=1}^{N_S} X_i\). Thus, \(\E(S_{N_S}) = p_n(0)\), and \(\abs{S_{N_S} } \leq \alpha \displaystyle \sum_{j=0}^{n} \abs{\gamma_j} \). Using Hoeffding's inequality \eqref{Hoeff}, we want to find \(N_S\), such that for \(\delta > 0\),

\begin{equation}
\displaystyle \mathbb{P}(|S_{N_S} - \mathbb{E}(S_{N_S})| \geq \varepsilon) \leq 
 2 \exp(- \frac{\varepsilon^2 N_S}{2 \alpha^2  \norm{\bm \gamma}^2_1})    \leq \delta .
\end{equation}
It follows that,

\begin{equation}
 2 \exp(- \frac{\varepsilon^2 N_S}{2 \alpha^2  \norm{\bm \gamma}^2_1})    \leq  2 \exp (- \frac{\varepsilon^2 N_S}{2\alpha^2 \kappa^{2m}}) \leq \delta .  
\end{equation}
From the right inequality, we have,
\begin{equation}
    N_S \geq \frac{2\alpha^2 \kappa^{2m} \log \frac{2}{\delta}}{\varepsilon^2} .
\end{equation}
Since \(m = \Omega \left( \frac{ \log (C^{'}/\varepsilon)
   }{(1-\mu) \log (1/M)} \right)\), the above inequality can be written as:

\begin{equation}
     N_S \geq 2 \alpha^2 \kappa^{\log C'/(1-\mu)} \log \frac{2}{\delta} \varepsilon^{-2 -\log \kappa/(1-\mu)}.
\end{equation}
\end{proof}

\subsection{Derivative Bounds for Expectation Values} \label{app:exp_value_bound}

Following the Lindblad equation \eqref{eq:lindblad_eq}, we write the equation for the density operator as
\begin{equation}  \label{Lindblad_gamma}
    \frac{\partial}{\partial t} \rho (t) =  -i [H(t), \rho(t)] + \lambda \mathcal{L}(\rho(t)),   
\end{equation}
where \(\mathcal{L}\) is a linear operator, and \( \lambda := \lambda_0 x\) is the noise strength, for \(x \in [1,\infty)\). Let \( \Gamma_1 (t) := \frac{\partial \rho (t)}{\partial x}\) be the first-order derivative of \(\rho\) with respect to \(x\), then, 
\begin{equation}  
    \frac{d}{d t} \Gamma_1 (t) =  -i [H(t), \Gamma_1(t)] + \lambda \mathcal{L}(\Gamma_1(t)) + \lambda_0  \mathcal{L}(\rho(t)).   
\end{equation}
By definition
\begin{equation}
    \Gamma_1 (0) = 0.
\end{equation}
Using the generator of Markovian dynamics \(\mathcal{L}_I\) of \cref{eq:lindblad_eq}, we can rewrite \eqref{Lindblad_gamma} as 

\begin{equation} \label{eq:Lindblad_gamma1}
     \frac{d}{d t} \Gamma_1 (t) = \mathcal{L}_I \Gamma_1 (t) + \lambda_0 \mathcal{L}(\rho (t)).
\end{equation}
Then by Duhamel's principle, one has 

\begin{equation}
    \Gamma_1 (t) = \int_0^t e^{(t-t_0)\mathcal{L}_I} (\lambda_0 \mathcal{L}(\rho (t_0)) \, d t_0 .
\end{equation}

More generally, let \( \Gamma_k (t) := \frac{\partial^k \rho (t)}{\partial x^k}\) denote the \(k-\)th derivative of \(\rho\) with respect to \(x\). Then, by induction for a positive integer \(k\), we have,
\begin{equation} 
     \frac{d}{d t} \Gamma_k (t) = \mathcal{L}_I (\Gamma_k (t)) + k \lambda_0 \mathcal{L}(\Gamma_{k-1} (t)),
\end{equation}
where \(\Gamma_0(t) := \rho(t,\lambda)\). Since the right-hand side of the equation above still depends on \(\Gamma_{k-1}\), we can apply Duhamel's principle recursively to obtain the following

\begin{equation}
    \Gamma_k(t) := \int_0^t \int_0^{t_{k-1}} \dots \int_0^{t_1} \lambda_0^k k! e^{(t-t_{k-1}) \mathcal{L_I}} \mathcal{L}
    \dots 
    e^{(s_{1}-t_{0}) \mathcal{L_I}} \mathcal{L}\rho_{\lambda} (t_0)
    \, dt_0 dt_1 \dots dt_{k-1}.
\end{equation}
Now we can bound \(\Gamma_k\) directly as
\begin{equation}
    \|\Gamma_k\| \leq \int_{0 \leq t_1 \leq \dots \leq t_{k-1} \leq t} \lambda_0^k k! \|\mathcal{L}\|^k 
    \, dt_0 dt_1 \dots dt_{k-1}.
\end{equation}
Thus, we arrive at a specific bound,  
\begin{equation}
    \|\Gamma_k\| \leq \lambda_0^k \|\mathcal{L}\|^k t^k .
\end{equation}
By definition, \(f^{(k)}(\lambda_0 x) :=  \frac{\partial^k}{\partial x^k} f (\lambda_0 x) = \tr (\Gamma_k (t) A)\), then 
\begin{equation}
    |\tr (\Gamma_k (t) A)| \leq \|\Gamma_k\| \|A\|.
\end{equation}
Then, if the density matrix evolves until time \(T\), we have,

\begin{equation} \label{interpolation_bound}
    |f^{(k)}(\lambda_0 x)| \leq (\lambda_0 \|\mathcal{L}\| T)^k \|A\|.
\end{equation}
This verifies our assumption in \eqref{eq:gevrey}. In particular, we can set $C=\norm{A}$ and $M=\lambda_0 \|\mathcal{L}\| T.$
A similar bound was derived in \cite{KTemme_SBravyi_JGambetta}. We will extend this derivation to incorporate Trotter in the next section.   

\subsection{Derivative Bounds for Trotterized Expectation Values}\label{app:trotter}

We defined the $k$-th derivative of the density matrix with respect to the noise strength \(\lambda\) in  \eqref{Lamk}. By induction, this derivative can be expressed in terms of lower-order derivatives.

\begin{lemma} \label{lemma:trotter_derivative}
    The derivatives \eqref{Lamk} of the density operator satisfy the following recursion relation,
    \begin{equation}\label{recur-rel}
        \frac{d}{dt} \Lambda_k (t) = \mathcal{L} (\Lambda_k (t)) + \sum_{m=1}^k \genfrac{(}{)}{0pt}{}{k}{m} \sum_{j=m}^{\infty} j (j-1) \cdots (j-m+1)\lambda^{j-m} \mathcal{L}_j (\Lambda_{k-m} (t)).
    \end{equation}
\end{lemma}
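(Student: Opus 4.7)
The plan is to derive the recursion by differentiating the modified Lindblad equation $\partial_t \rho(t,\lambda) = \mathcal{L}(\lambda)\rho(t,\lambda)$ exactly $k$ times with respect to $\lambda$, treating $\mathcal{L}(\lambda)$ as the power series in $\lambda$ given by \eqref{eq:L-expand}. The key point is that the $t$-derivative and the $\lambda$-derivative commute (assuming the solution is sufficiently smooth in both variables, which follows from the analyticity of $\mathcal{L}(\lambda)$ in $\lambda$ on the relevant interval), so $\partial_t \Lambda_k(t) = \partial_\lambda^k \partial_t \rho(t,\lambda) = \partial_\lambda^k \bigl(\mathcal{L}(\lambda)\rho(t,\lambda)\bigr)$.

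Next I would apply the generalized Leibniz rule to the product $\mathcal{L}(\lambda)\rho(t,\lambda)$, obtaining
\begin{equation}
\partial_\lambda^k\bigl(\mathcal{L}(\lambda)\rho(t,\lambda)\bigr) = \sum_{m=0}^{k} \binom{k}{m}\bigl(\partial_\lambda^m \mathcal{L}(\lambda)\bigr)\bigl(\partial_\lambda^{k-m}\rho(t,\lambda)\bigr).
\end{equation}
I would then split the sum into the $m=0$ term and the rest. The $m=0$ term gives precisely $\mathcal{L}(\lambda)\Lambda_k(t)$, which matches the $\mathcal{L}(\Lambda_k(t))$ appearing on the right-hand side of \eqref{recur-rel}. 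For the remaining $m\geq 1$ terms, I would differentiate the series $\mathcal{L}(\lambda) = \sum_{j\geq 0}\lambda^j \mathcal{L}_j$ termwise:
\begin{equation}
\partial_\lambda^m \mathcal{L}(\lambda) = \sum_{j=m}^{\infty} j(j-1)\cdots(j-m+1)\,\lambda^{j-m}\,\mathcal{L}_j,
\end{equation}
since the first $m$ terms vanish after $m$ differentiations. Substituting and recognizing $\partial_\lambda^{k-m}\rho = \Lambda_{k-m}$ produces exactly the double sum in \eqref{recur-rel}.

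The only subtlety, and therefore the main obstacle, is justifying the termwise differentiation of the series for $\mathcal{L}(\lambda)$ and the interchange with the time evolution. For this, I would invoke the bound \eqref{Lj-bound}, $\|\mathcal{L}_j\| \le \theta^j$, which guarantees that the series $\sum_j \lambda^j \mathcal{L}_j$ and all its formal $\lambda$-derivatives converge absolutely and uniformly on any interval $\lambda\in[0,\lambda^\ast]$ with $\lambda^\ast \theta<1$. Under these conditions, $\rho(t,\lambda)$ is jointly smooth in $(t,\lambda)$, so the commutation of $\partial_t$ and $\partial_\lambda^k$ and the termwise differentiation above are both legitimate. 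Finally, a short induction on $k$ (or simply checking the base case $k=0$, where the statement reduces to the original equation) completes the argument, yielding the stated recursion relation.
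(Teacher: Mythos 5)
Your proposal is correct, and it reaches the recursion by a slightly more direct route than the paper. The paper proves the lemma by explicit induction on $k$: it verifies the case $k=1$ by differentiating the modified Lindblad equation once, then assumes the formula for $k$, differentiates again in $\lambda$, and recombines the two resulting sums using Pascal's rule $\binom{k+1}{m}=\binom{k}{m-1}+\binom{k}{m}$. You instead commute $\partial_t$ with $\partial_\lambda^k$ and apply the generalized Leibniz rule to the product $\mathcal{L}(\lambda)\rho(t,\lambda)$ in one shot, splitting off the $m=0$ term as $\mathcal{L}(\Lambda_k)$ and differentiating the series $\sum_j \lambda^j\mathcal{L}_j$ termwise to produce the inner sum. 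The two arguments are mathematically equivalent --- the Leibniz rule is precisely the closed form of the paper's induction --- but yours packages the combinatorics into a standard identity and is shorter. You also buy something the paper omits: an explicit justification, via the bound $\|\mathcal{L}_j\|\le\theta^j$ and absolute convergence for $\lambda\theta<1$, for the termwise differentiation of the operator series and for interchanging $\partial_t$ with $\partial_\lambda^k$; the paper's proof takes these analytic steps for granted. The base-case remark at the end ($k=0$ reducing to the original equation) is a fine substitute for the paper's $k=1$ computation.
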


\begin{proof}
    We follow a standard recursive argument: by \eqref{eq: modified-lindblad} for \(k = 1\),
    \begin{equation} 
        \frac{d}{dt} \Lambda_1(t) = -i [ H(t) , \Lambda_1(t)] + \mathcal{L}(\rho(t)) - i \sum_{l \geq 2} l \lambda^{l-1} [H_{2l}(t) , \rho(t)] + \lambda \mathcal{L}(\Lambda_1 (t)) - i \sum_{l \geq 1} \lambda^l [H_{2l} (t) , \Lambda_1(t)].
    \end{equation}
    Using the notation in \eqref{eq:L-expand}
    \begin{equation}
        \frac{d}{dt} \Lambda_1(t) = \mathcal{L}(\Lambda_1(t)) + \sum_{j=1}^{\infty} j \lambda^{j-1} \mathcal{L}_j (\rho(t)).
    \end{equation}
    We assume \eqref{recur-rel} holds for \(k\), then for \(k+1\),
 \begin{align*}
    &\frac{d}{dt} \Lambda_{k+1}(t) \\&= \mathcal{L} (\Lambda_{k+1}(t)) + 
    \sum_{m=1}^{k} \genfrac{(}{)}{0pt}{}{k}{m} \sum_{j=m}^{\infty} j (j-1) \dots (j-m) \lambda^{j-m-1} \mathcal{L}_j(\Lambda_{k-m}(t)) \\
    &\quad + \sum_{m=1}^{k} \genfrac{(}{)}{0pt}{}{k}{m} \sum_{j=m}^{\infty} j(j-1) \dots (j-m+1) \lambda^{j-m} \mathcal{L}_j(\Lambda_{k-m+1}(t)) \\
    &= \mathcal{L} (\Lambda_{k+1}(t)) + \sum_{m=1}^{k} \genfrac{(}{)}{0pt}{}{k}{m} \sum_{j = m}^{\infty} j \dots (j-m+1) \Big[ (j-m) \lambda^{j-m-1} \mathcal{L}_j (\Lambda_{k-m}(t)) + \lambda^{j-m} \mathcal{L}_j (\Lambda_{k-m+1}(t)) \Big].
\end{align*}
Since \(\genfrac{(}{)}{0pt}{}{k+1}{m} = \genfrac{(}{)}{0pt}{}{k}{m-1} + \genfrac{(}{)}{0pt}{}{k}{m}
\), a shift of indices  in the first term on the right-hand side results in
\begin{equation}
    = \mathcal{L} (\Lambda_{k+1} (t)) + \sum_{m=1}^{k+1} \genfrac{(}{)}{0pt}{}{k+1}{m} \sum_{j=m}^{\infty} j (j-1) \cdots (j-m+1)\lambda^{j-m} \mathcal{L}_j (\Lambda_{k+1-m} (t)).
\end{equation}
This completes the proof.
\end{proof}

The higher-order terms \(\mathcal{L}\) in \cref{recur-rel} correspond to the commutators arising from Trotter decomposition. From \eqref{recur-rel}, we can see that the inner sum can be related to the derivative of the function
\begin{equation}
    S(y)= \sum_{j=2}^{+\infty} y^j= \frac{y^2}{1-y}, 
\end{equation}
where $y$ is assumed to be in $(0,1)$ and it combines $\lambda $ and $\theta$, the latter of which is defined in \cref{Lj-bound}. 

\begin{lemma} \label{lemma:geometric_sum_derivative}
    There exists a constant $C$, such that,
    \begin{equation}
        \abs{S^{(k)}} \leq \frac{C}{(1-y)^k}.
    \end{equation}
\end{lemma}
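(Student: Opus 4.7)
The plan is to reduce $S$ to a closed form via a single partial-fraction step and then differentiate term by term. Observing that $y^2 = 1 - (1-y)(1+y)$, one has the decomposition
\[
S(y) = \frac{y^2}{1-y} = -(1+y) + \frac{1}{1-y}.
\]
The polynomial piece $-(1+y)$ contributes nothing once $k \geq 2$, so all the work reduces to bounding the derivatives of $g(y) := (1-y)^{-1}$, for which a straightforward induction (equivalently, term-by-term differentiation of the geometric series $g(y) = \sum_{j\geq 0} y^j$) yields the exact identity $g^{(k)}(y) = k!\,(1-y)^{-(k+1)}$.

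Combining these two ingredients gives $S^{(k)}(y) = k!/(1-y)^{k+1}$ for every $k \geq 2$, while the cases $k=0,1$ are checked by direct computation and absorbed into the same form via a slightly larger constant. To recover the stated shape $C/(1-y)^k$, I would factor one copy of $(1-y)^{-1}$ out of $(1-y)^{-(k+1)}$. Under the implicit assumption that $y$ stays in a fixed subinterval of $[0,1)$ bounded away from $1$ (natural here, since $y$ plays the role of the product of the noise strength $\lambda$ and the bound $\theta$ from \eqref{Lj-bound}, and the regime $\lambda\theta < 1$ already appears in the companion bound used in Theorem \ref{thm:trotter_error}), that leftover factor is uniformly bounded.

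The main obstacle, as I read the statement, is the interpretation of the constant $C$. Because the exact derivative grows like $k!$, no genuinely $k$-independent constant can satisfy $|S^{(k)}(y)| \leq C/(1-y)^k$; the lemma must therefore be read with $C$ allowed to depend combinatorially on $k$. This is fully consistent with the downstream usage, where the factor $n^n$ (and hence $n!$, via Stirling) appears in the trotter derivative bound and ultimately in the Gevrey-class regularity with $\sigma = 1$ invoked in Theorem \ref{thm:trotter_error}. A cleaner derivation that makes the dependence transparent is via Cauchy's estimate: $S$ extends analytically to the disk $\{|z-y| < 1-y\}$, and choosing a circular contour of radius $(1-y)/2$ yields
\[
|S^{(k)}(y)| \leq \frac{k!\,2^k}{(1-y)^k}\,\sup_{|z-y|=(1-y)/2}|S(z)|,
\]
which is of the stated form once the $k$-dependent prefactor $k!\,2^k$ and the supremum of $|S|$ on the contour are folded into $C$.
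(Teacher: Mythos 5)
Your proof is correct and reaches the same exact conclusion as the paper by a cleaner route. The paper applies the Leibniz rule to the product $y^2\cdot(1-y)^{-1}$ and assembles $S^{(k)}$ from the three surviving terms, whereas you first perform the partial-fraction step $S(y)=-(1+y)+(1-y)^{-1}$ so that for $k\ge 2$ only the derivatives of $(1-y)^{-1}$ survive; both routes should give the identity $S^{(k)}(y)=k!\,(1-y)^{-(k+1)}$, and your version gets it exactly right, while the paper's displayed bracket $\bigl[y^2+\tfrac{2y(1-y)}{k}+\tfrac{2(1-y)^2}{k(k-1)}\bigr]$ has lost the binomial weights (the correct bracket is $y^2+2y(1-y)+(1-y)^2=1$) -- a harmless slip since either bracket is $O(1)$ on $(0,1)$. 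You also correctly identify the two defects in the statement that the paper's proof silently glosses over: the true exponent is $k+1$ rather than $k$, and the $k!$ growth means $C$ cannot be a genuinely $k$-independent constant. Your reading -- that $C$ must be allowed to depend on $k$, consistent with the $k^k$ factor that ultimately surfaces in Lemma \ref{lemma:trotter_derivative_bound} and the Gevrey-$1$ bound \eqref{gevrey-1}, and that one factor of $(1-y)^{-1}$ is absorbed using $\lambda\theta<1$ -- matches how the lemma is actually used downstream. The Cauchy-estimate alternative is valid too, though it is looser (it introduces an extra $2^{k}$) and is unnecessary given that the exact derivative is available in closed form.
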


\begin{proof}
We start with
\[S^{(k)}(y) = \sum_{m=0}^k \binom{k}{m} f^{(m)}(y) g^{(k-m)}(y),\]
 where \( f(y) = y^2 \), and \( g(y) = \frac{1}{1-y} \). The derivatives of \(f(y)\), and \(g(y)\) are given by:
        \[
        f^{(m)}(y) =
        \begin{cases}
            2y & \text{if } m = 1, \\
            2 & \text{if } m = 2, \\
            0 & \text{if } m > 2.
        \end{cases}
        \quad \text{and}\quad
        g^{(n)}(y) = \frac{n!}{(1-y)^{n+1}}.
        \]
Thus, we obtain
\begin{equation}
    S^{'}(y) = \frac{1}{(1-y)^{2}} \left[ y^2 + 2y(1-y)\right], 
\end{equation}
and, for \(k \geq 2\),
\begin{equation}
    S^{(k)}(y) = \frac{k!}{(1-y)^{k+1}} \left[ y^2 + \frac{2y(1-y)}{k} + \frac{2(1-y)^2}{k(k-1)} \right].
\end{equation}
Thus, the lemma holds.
\end{proof}
Using the above estimates, we can directly derive the following bounds
\[
\begin{aligned}
    \norm{\Lambda_1(t)} &=O(\frac{t\theta }{1-\lambda \theta}), \\
 \norm{\Lambda_2(t)} &=O\left(\frac{\theta^2}{ (1-\lambda \theta)^2} (t^2+t) \right), \\
  \norm{\Lambda_3(t)} &=O\left(\frac{\theta^3}{ (1-\lambda \theta)^3} (t^3+3t^2 + 3t) \right).
\end{aligned}
\]
Therefore, we observe that $\norm{\Lambda_k(t)}$ scales polynomially with $t$,
\begin{equation} \label{lambda_k_bound}
\norm{\Lambda_k(t)} =O\left(\frac{\theta^k}{ (1-\lambda \theta)^k} (a_{k,k} t^k+a_{k,k-1} t^{k-1} + \cdots + a_{k,1} t) \right).
\end{equation}

\begin{lemma} \label{lemma:trotter_derivatives_coefficients}
        The coefficients of the polynomial in \eqref{lambda_k_bound} satisfy \( a_{1,1} = 1 \),  \( a_{k,j} = 0, \forall j > k \), and the following recursion relation,
    \begin{equation} \label{a_k_j+1_bound}
         a_{k,j+1} \leq  \sum_{m=1}^{k-j} \binom{k}{m} \frac{1}{j+1} a_{k-m,j}.
    \end{equation}
    Moreover, the coefficients satisfy,
    \begin{equation}
        a_{k,j} \leq \frac{j^k}{j!}, \quad \sum_{j=1}^k a_{k,j} \leq k^k.
    \end{equation}
\end{lemma}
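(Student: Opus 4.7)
The plan is to establish the five claims in sequence (initial value, triangularity, recursion, and the two explicit bounds) via induction on $k$, exploiting the Duhamel representation that arises from Lemma~\ref{lemma:trotter_derivative}. For the base case, $\Lambda_1(t)=\int_0^t e^{(t-s)\mathcal{L}}\sum_{j\ge 1}j\lambda^{j-1}\mathcal{L}_j(\rho)\,ds$, so using $\|\mathcal{L}_j\|\le\theta^j$, the contractivity of the semigroup $e^{s\mathcal{L}}$, and summing the geometric series gives $\|\Lambda_1(t)\|\le \theta t/(1-\lambda\theta)$, identifying $a_{1,1}=1$. The triangularity $a_{k,j}=0$ for $j>k$ is then a separate induction: each nested Duhamel integration raises the polynomial degree by exactly one, and $\Lambda_0=\rho$ contributes degree zero, so the polynomial bound on $\|\Lambda_k(t)\|$ has degree at most $k$.

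Next, the recursion is obtained by propagating the inductive polynomial bound $\|\Lambda_{k-m}(s)\|\le \frac{\theta^{k-m}}{(1-\lambda\theta)^{k-m}}\sum_j a_{k-m,j}s^j$ through the ODE in Lemma~\ref{lemma:trotter_derivative}. Using $\|\mathcal{L}_j\|\le\theta^j$ together with Lemma~\ref{lemma:geometric_sum_derivative} to absorb the $m$-dependent factors arising from the $\lambda$-derivatives of the generating series into the prefactor $\theta^k/(1-\lambda\theta)^k$, one obtains
\[
\|\Lambda_k(t)\|\le \frac{\theta^k}{(1-\lambda\theta)^k}\sum_{m=1}^{k}\binom{k}{m}\int_0^t \sum_{j} a_{k-m,j}\,s^j\,ds.
\]
Evaluating $\int_0^t s^j\,ds = t^{j+1}/(j+1)$ and collecting coefficients of $t^{j+1}$ reproduces the stated recursion, with $\binom{k}{m}$ inherited from the ODE and $1/(j+1)$ from the antiderivative; the upper limit $m=k-j$ is forced by the triangularity $a_{k-m,j}=0$ for $j>k-m$.

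For $a_{k,j}\le j^k/j!$ I use strong induction on $k$. Substituting the inductive hypothesis into the recursion gives
\[
a_{k,j+1}\le \sum_{m=1}^{k-j}\binom{k}{m}\frac{1}{j+1}\cdot \frac{j^{k-m}}{j!} \le \frac{1}{(j+1)!}\sum_{m=0}^{k}\binom{k}{m}j^{k-m} = \frac{(j+1)^k}{(j+1)!},
\]
where the middle inequality extends the summation (only adding nonnegative terms) and the final equality is the binomial theorem; re-indexing $j\mapsto j+1$ closes the induction. For the sum bound $\sum_{j=1}^k a_{k,j}\le k^k$, I would perform a secondary induction on $k$: the identity $j\cdot j^{k-1}/j!=j^{k-1}/(j-1)!$ lets me reindex the partial sum as $\sum_{i=0}^{k-1}(i+1)^{k-1}/i!$; bounding $(i+1)^{k-1}\le k^{k-1}$ and using $\sum_{i=0}^{k-1}1/i!<e<k$ for $k\ge 3$ gives the conclusion, and the small cases $k=1,2$ are verified by direct computation.

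The main technical obstacle is the clean derivation of the recursion in Step 2. Naively, the $m$-fold $\lambda$-derivative of $\sum_j \lambda^j\mathcal{L}_j$ produces an extra $m!/(1-\lambda\theta)^{m+1}$, which would spoil the simple $\binom{k}{m}/(j+1)$ structure; Lemma~\ref{lemma:geometric_sum_derivative} is the key tool that lets one bundle these $m$-dependent prefactors into the $(1-\lambda\theta)^{-k}$ envelope rather than into the polynomial coefficients $a_{k,j}$. A secondary subtlety is that the termwise bound $a_{k,j}\le j^k/j!$ combined with Dobinski's identity $\sum_{j\ge 0}j^k/j!=eB_k$ only gives $eB_k$, which exceeds $k^k$ for $k\le 2$; the truncation $j\le k$ must be exploited via the separate induction above to close the gap.
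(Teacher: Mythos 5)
Your proof follows essentially the same route as the paper's: the recursion \eqref{a_k_j+1_bound} is obtained by propagating the Duhamel/ODE bounds on $\norm{\Lambda_k(t)}$, and the estimate $a_{k,j}\le j^k/j!$ is closed by the identical binomial-theorem induction $\sum_{m=0}^{k}\binom{k}{m}j^{k-m}=(j+1)^k$. Your treatment is in fact more complete in one place: the paper stops at $\sum_{j=1}^k a_{k,j}\le\sum_{j=1}^k j^k/j!$ without justifying that this truncated sum is at most $k^k$, whereas your reindexing to $\sum_{i=0}^{k-1}(i+1)^{k-1}/i!$ together with $(i+1)^{k-1}\le k^{k-1}$, $\sum_{i\ge 0}1/i!<e<k$ for $k\ge 3$, and a direct check of $k=1,2$ supplies the missing step.
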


\begin{proof}
    We already know that  \( a_{1,1} = 1 \), and  \( a_{k,j} = 0 \) for \(j > k \). We obtain \eqref{a_k_j+1_bound} by induction on the bounds of $ \norm{\Lambda_k(t)}$. Again by induction,  \( a_{1,1} \leq 1 \). Assume that for all \(k' < k\), and \(j \geq 1\), \( a_{k',j} \leq \frac{j^{k'}}{j!}\). From the recursion relation:
\[
a_{k,j+1} \leq \sum_{m=1}^{k-j} \binom{k}{m} \frac{1}{j+1} a_{k-m,j}.
\]
By the induction hypothesis, \( a_{k-m,j} \leq \frac{j^{k-m}}{j!} \). Substituting this into the recursion:
\[
a_{k,j+1} \leq \frac{1}{(j+1)!} \sum_{m=1}^{k-j} \binom{k}{m} j^{k-m} \leq \frac{1}{(j+1)!}\sum_{m=0}^k \binom{k}{m} j^{k-m} = \frac{(j+1)^k}{(j+1)!}.
\]
We conclude \(a_{k,j+1} \leq \frac{j^k}{j!}\).
To prove \( \displaystyle \sum_{j=1}^k a_{k,j} \leq k^k \), we use the bound \( a_{k,j} \leq \frac{j^k}{j!} \), and consider the following sum:
\[
\sum_{j=1}^k a_{k,j} \leq \sum_{j=1}^k \frac{j^k}{j!}.
\]
\end{proof}

\begin{lemma} \label{lemma:trotter_derivative_bound}
    Let \(\rho\) be the density matrix satisfying the modified Lindblad equation \eqref{eq: modified-lindblad}, where the operator $\mathcal{L} $ on the right-hand side, has the expansion in \eqref{eq:L-expand}, with each operator bounded as in \eqref{Lj-bound}. Then, the expectation of an observable, given by $f(\lambda) = \tr(\rho(t, \lambda) A),$ has derivatives of all orders. When \( t = 1 \), these derivatives satisfy the bound  
    \begin{equation}
       \abs{ f^{(k)}(\lambda) }\leq  C\norm{A} \frac{\theta^k}{ (1-\lambda \theta)^k} k^k.
    \end{equation}
\end{lemma}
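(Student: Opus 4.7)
The plan is to assemble the lemma directly from the intermediate estimates already developed in Appendix D: the recursion for $\Lambda_k(t)$ in Lemma \ref{lemma:trotter_derivative}, the geometric series bound in Lemma \ref{lemma:geometric_sum_derivative}, the polynomial-in-$t$ ansatz \eqref{lambda_k_bound}, and the coefficient bound in Lemma \ref{lemma:trotter_derivatives_coefficients}. The lemma itself then reduces to a short chain of estimates once the derivative is pulled inside the trace.

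First I would justify differentiating $f(\lambda) = \operatorname{Tr}(\rho(t,\lambda)A)$ under the trace: since $\rho(t,\lambda)$ is trace-class and the recursion in Lemma \ref{lemma:trotter_derivative} shows that every $\Lambda_k(t) = \partial_\lambda^k \rho(t,\lambda)$ exists and is again trace-class, we obtain the identity $f^{(k)}(\lambda) = \operatorname{Tr}(\Lambda_k(t)\,A)$. By the H\"older-type inequality for Schatten norms, this yields
\begin{equation}
    \abs{f^{(k)}(\lambda)} \;\le\; \norm{\Lambda_k(t)}_1\,\norm{A}_\infty .
\end{equation}

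Next I would invoke the bound \eqref{lambda_k_bound}, which was obtained by induction from the recursion in Lemma \ref{lemma:trotter_derivative} together with the geometric sum estimate in Lemma \ref{lemma:geometric_sum_derivative}. That bound has the form
\begin{equation}
    \norm{\Lambda_k(t)} \;=\; O\!\left(\frac{\theta^k}{(1-\lambda\theta)^k}\sum_{j=1}^k a_{k,j}\,t^j\right).
\end{equation}
Specializing to $t = 1$ collapses the polynomial in $t$ to the single scalar $\sum_{j=1}^{k} a_{k,j}$. Lemma \ref{lemma:trotter_derivatives_coefficients} then supplies the closed-form bound $\sum_{j=1}^{k} a_{k,j} \le k^k$. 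Combining these three ingredients gives
\begin{equation}
    \abs{f^{(k)}(\lambda)} \;\le\; C\,\norm{A}\,\frac{\theta^k}{(1-\lambda\theta)^k}\,k^k,
\end{equation}
as desired.

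The only real subtlety — and what I would call the main obstacle — is verifying that the constants and dependencies in \eqref{lambda_k_bound} propagate cleanly so that the big-$O$ hides only an absolute constant independent of $k$ (otherwise the $k^k$ growth can be polluted). The induction must use the sharp derivative bound $|S^{(m)}(y)| \le C/(1-y)^m$ from Lemma \ref{lemma:geometric_sum_derivative} at each level, so that a factor $(1-\lambda\theta)^{-1}$ is generated per differentiation and the combinatorial factor $\binom{k}{m}$ from Lemma \ref{lemma:trotter_derivative} is absorbed by the recursion \eqref{a_k_j+1_bound} satisfied by $a_{k,j}$. Once the induction is closed at the level of $\norm{\Lambda_k}$, the rest of the argument is a clean substitution.
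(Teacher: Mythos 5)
Your proposal is correct and follows essentially the same route as the paper's own proof, which likewise writes $f^{(k)}(\lambda)=\tr(\Lambda_k(t)A)$, bounds the trace by $\norm{\Lambda_k(t)}\,\norm{A}$, and then combines \eqref{lambda_k_bound} at $t=1$ with the coefficient bound $\sum_{j=1}^{k}a_{k,j}\le k^k$ from Lemma~\ref{lemma:trotter_derivatives_coefficients}. Your added remarks on differentiating under the trace and on tracking the absolute constant through the induction are reasonable elaborations of steps the paper leaves implicit, not a different argument.
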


\begin{proof}
    By definition \(f^{(k)}(\lambda) = \tr(\Lambda_k(t) A)\), then using \eqref{lambda_k_bound}, and Lemma \ref{lemma:trotter_derivatives_coefficients}, we find that for some constant \(C > 0\),
    \begin{equation}\label{gevrey-1}
        \abs{ f^{(k)}(\lambda) }\leq  C\norm{A}\frac{\theta^k}{ (1-\lambda \theta)^k} k^k.
    \end{equation}
\end{proof}

\vspace{.2in}

\end{document}